\theoremstyle{plain}
\newtheorem{theorem}{Theorem}[section]
\newtheorem{lemma}[theorem]{Lemma}
\newtheorem{proposition}[theorem]{Proposition}
\newtheorem{corollary}[theorem]{Corollary}
\theoremstyle{definition}
\newtheorem{definition}[theorem]{Definition}
\newtheorem{examples}[theorem]{Examples}
\newtheorem{example}[theorem]{Example}
\newtheorem{assumption}[theorem]{Assumption}
\theoremstyle{remark}
\newtheorem{remark}[theorem]{Remark}
\newlist{tfae}{enumerate}{1}
\setlist[tfae,1]{label = (\roman*)}
\newcommand{\calC}{\mathcal{C}}
\newcommand{\calD}{\mathcal{D}}
\newcommand{\catfont}[1]{\mathsf{#1}}
\newcommand{\catC}{\catfont{C}}
\newcommand{\catA}{\catfont{A}}
\newcommand{\catB}{\catfont{B}}
\newcommand{\ORD}{\catfont{Ord}}
\newcommand{\PMET}{\catfont{PMet}}
\newcommand{\SET}{\catfont{Set}}
\newcommand{\STONE}{\catfont{Stone}}
\newcommand{\SPECTRAL}{\catfont{Spec}}
\newcommand{\TOP}{\catfont{Top}}
\newcommand{\COMPHAUS}{\catfont{CompHaus}}
\newcommand{\STCOMP}{\catfont{StablyComp}}
\newcommand{\HAUS}{\catfont{Haus}}
\newcommand\adjunctop[2]{\xymatrix@=8ex{\ar@{}[r]|{\bot}\ar@<1mm>@/^2mm/[r]^{{#2}} & \ar@<1mm>@/^2mm/[l]^{{#1}}}}
\newcommand\adjunct[2]{\xymatrix@=8ex{\ar@{}[r]|{\top}\ar@<1mm>@/^2mm/[r]^{{#2}} & \ar@<1mm>@/^2mm/[l]^{{#1}}}}
\newcommand{\field}[1]{\mathds{#1}}
\newcommand{\N}{\field{N}}
\newcommand{\R}{\field{R}}
\DeclareMathOperator{\id}{id}
\DeclareMathOperator{\Id}{Id}
\DeclareMathOperator{\ev}{ev}
\DeclareMathOperator{\im}{im}
\newcommand{\op}{\mathrm{op}}
\newcommand{\df}[1]{\emph{\textbf{#1}}}
\DeclareMathOperator{\upc}{\uparrow\!}
\DeclareMathOperator{\downc}{\downarrow\!}
\newcommand{\Cp}{\mathcal{C}=(C\to X_i)_{i\in I}}
\newcommand{\C}{\mathcal{C}}
\newcommand{\Dp}{\mathcal{D}=(D\to X_i)_{i\in I}}
\newcommand{\POSCH}{\catfont{PosComp}}
\def\lana{ 
  [ \hspace{-0.2em} ( 
}
\def\rana{ 
   ) \hspace{-0.2em} ] 
}
\DeclareMathOperator{\out}{\mathsf{out}}
\DeclareMathOperator{\nxt}{\mathsf{nxt}}
\DeclareMathOperator{\mov}{\mathsf{mov}}
\DeclareMathOperator{\hd}{\mathsf{hd}}
\DeclareMathOperator{\tl}{\mathsf{tl}}
\newcommand{\ie}{\emph{i.e.}}
\newcommand{\eg}{\emph{e.g.}}
\newcommand{\cf}{\emph{cf.}}
\newcommand{\Coalg}{\catfont{CoAlg}}
\newcommand{\MH}{\mathscr{H}}
\newcommand{\Rz}{\mathsf{T}}
\newcommand{\Dur}{\mathsf{D}}
\newcommand{\overU}{\overline{U}}
\newcommand{\overF}{\overline{F}}
\newcommand{\comp}{\mathbin{\boldsymbol{\cdot}}}
\def\pv#1#2{\langle#1 \rangle#2}
\newcommand{\pfs}{\hspace{0.2cm}}
\newcommand{\Vie}{\mathcal{V}}
\title{Limits in categories of Vietoris coalgebras}
\author{Dirk Hofmann}
\address{Center for Research and Development in Mathematics and
  Applications, Department of Mathematics, University of Aveiro,
  3810-193 Aveiro, Portugal}
\email{dirk@ua.pt}
\author{Renato Neves}
\address{INESC TEC (HASLab) \& Universidade do Minho, Portugal}
\email{nevrenato@di.uminho.pt}
\author{Pedro Nora}
\address{Center for Research and Development in Mathematics and
  Applications, Department of Mathematics, University of Aveiro,
  3810-193 Aveiro, Portugal}
\email{a28224@ua.pt}
\keywords{Coalgebra, topological space, stably compact, Vietoris
  space, codirected limit}
\date{\today}
\begin{document}

\begin{abstract}
  Motivated by the need to reason about hybrid systems, we study
  limits in categories of coalgebras whose underlying functor is a
  Vietoris polynomial one --- intuitively, the topological analogue of
  a Kripke polynomial functor. Among other results, we prove that
  every Vietoris polynomial functor admits a final coalgebra if it
  respects certain conditions concerning separation axioms and
  compactness. When the functor is restricted to some of the
  categories induced by these conditions the resulting categories of
  coalgebras are even complete.

  As a practical application, we use these developments in the
  specification and analysis of non-deterministic hybrid systems, in
  particular to obtain suitable notions of stability, and behaviour.
\end{abstract}

\maketitle

\section{Introduction}

\subsection{Motivation and context}

Coalgebras \cite{rutten2000,Ada05,introcoalg} form a powerful theory
of state-based transition systems where definitions and results are
formulated at a high level of genericity that covers several families
of systems at once, from deterministic automata and Kripke frames to
different kinds of probabilistic models. Traditionally, these
formulations are elaborated in a set-based context; \ie\ no further
structure in the system's state space than that of a set is assumed.
In many cases, however, a switch of context is needed. The projects on
the coalgebraic foundations of stochastic systems, where the Giry
functor and measurable spaces have a central role (\cf\
\cite{viglizzo:2005,prakash_2009,Doberkat:2009}), are evident examples
of this.  Research on coalgebras over Stone spaces (\eg\
\cite{Kupke:2004,Venema10,venema2014}) and coalgebras over
pseudometric spaces \cite{BaldanBKK14} forms equally important
cases. In \cite{Kupke:2004,Venema10,venema2014}, the aim is to provide
a suitable coalgebraic semantics for finitary modal logics by taking
advantage of a Vietoris functor, while in \cite{BaldanBKK14} is to
introduce a notion of distance between states.

In this paper our focus is on coalgebras over arbitrary
\emph{topological spaces}, because we believe that they provide
important mechanisms to the design and analysis of \emph{hybrid
  systems} \cite{tabuada09,alur2015,Stauner_01}. Briefly put, hybrid
systems are those that possess both discrete and continuous behaviour,
a result of the complex interaction between digital devices, and
physical processes like velocity, movement, temperature, and time. Two
recurring examples are the cruise control system, basically a digital
device with influence over velocity, and the \emph{bouncing ball}. In
the latter, movement and velocity have a continuous nature, while the
impact on the ground is assumed to be a discrete event that
instantaneously alters the current velocity.  As we will see in the
following sections, such an interaction between discrete and
continuous behaviour calls for a shift from the set-based setting to
richer contexts, in particular to topological ones so that suitable
notions of stability, bisimulation, and behaviour can be
obtained. These are the practical motivations for the theoretical
results that this paper provides. But we stress that coalgebras over
topological spaces have the potential for much more -- the works
\cite{viglizzo:2005,prakash_2009,Doberkat:2009,Venema10,venema2014,
  BaldanBKK14,Kupke:2004}, for example, elegantly attest this. Our
results are therefore applicable to a much broader context than that
of hybrid systems.

Each functor $F : \catC \to \catC$ induces a category of coalgebras
$\Coalg(F)$ that can be seen as a framework for a particular family of
state-based transition systems, whose transition type is determined by
$F : \catC \to \catC$ (\cf\ \cite{rutten2000}). The powerset
$\mathscr{P} : \SET \to \SET$, for example, often associated with
non-deterministic behaviour, gives rise to Kripke frames.

In such a context, the systematic study of (co)limits in categories of
coalgebras is a natural research line. In fact, final coalgebras,
which form a specific type of limit, are often searched for, as they
encode a canonical notion of behaviour for all $F$-coalgebras. Another
special kind of limit, equalisers of coalgebras, is extensively used
in coalgebraic specification (\cf\ \cite{rutten2000,Ada05}). It
provides a notion of subsystem, and is essential to characterise a
system induced by a set of coequations.

\subsection{Contributions and related work}

As mentioned before, this paper concerns coalgebras over arbitrary
topological spaces. More concretely, coalgebras whose underlying
functor is defined over the category $\TOP$ of topological spaces and
continuous maps. Analogously to what has already been done in $\SET$
(\eg\ \cite{rutten2000,gumm2001}), the aim here is to investigate the
existence of limits in categories of coalgebras whose underlying
functor is \df{Vietoris polynomial} --- the topological analogue of a
Kripke polynomial functor. The former is called `Vietoris polynomial'
because it arises from the composition of different Vietoris functors
\cite{Vie22,Mic51,CT97} (the topological analogues of the powerset
functor) with polynomial functors over $\TOP$. To keep the
nomenclature simple, we call every coalgebra whose underlying functor
is Vietoris polynomial a \df{Vietoris coalgebra}.

As composites of constant, (co)product, identity, and powerset
functors, Kripke polynomial functors have long since been recognised
as a particularly relevant class of functors (\cf\
\cite{rutten2000,alex09,Kupke:2004}).  They are intuitive and the
corresponding coalgebras subsume several types of state-based
systems. Moreover, they are well-behaved in regard to the existence of
limits in their categories of coalgebras if the powerset functor is
submitted to certain cardinality restrictions.  We will see that
somewhat similar results hold for Vietoris polynomial functors as
well.  Actually, an instance of a Vietoris functor, which we call
\df{compact Vietoris functor}, has already been studied multiple times
in the coalgebraic setting (\eg\
\cite{Kupke:2004,Venema10,venema2014,Fred16}), and will appear in a
book on coalgebras that is currently in preparation \cite{AMM16_tmp}.
In particular, \cite{Kupke:2004} shows that compact Vietoris
polynomial functors in the category $\STONE$ of Stone spaces and
continuous maps admit a final coalgebra. Also, document \cite{Fred16}
presents a theorem that can be generalised to show that the compact
Vietoris functor in the category $\COMPHAUS$ of compact Hausdorff
spaces and continuous maps, admits a final coalgebra.  In fact, this
generalised result is also implicitly mentioned in \cite[page
245]{Eng89}. Related to this, but in a broader setting, we collect a
number of results scattered in coalgebraic and topological literature,
and

\begin{itemize}

\item add to this collection some results of our own. In particular,
  we generalise Hughes' theorem (Theorem \ref{theo_hughes}) and prove
  that, under certain conditions, functors between categories of
  coalgebras are \emph{topological} \cite{Ada05}. Topological functors
  have powerful properties such as the existence of left and right
  adjoints, lifting of limits, and lifting of factorisations.

\item This collection of results allows us to obtain several new
  results about limits in categories of Vietoris coalgebras.  For
  example, that categories of polynomial coalgebras over $\TOP$ are
  complete, and that categories of compact Vietoris coalgebras over
  $\COMPHAUS$ are complete as well. Using in particular \cite[Lemma
  B]{Zen70}, we also show that categories of compact Vietoris
  coalgebras are complete in the category $\HAUS$ of Hausdorff spaces
  and continuous maps. Moreover we will see that all categories of
  Vietoris coalgebras over $\TOP$ have equalisers.

\item We then take advantage of the limit-preserving properties of the
  inclusion functors $\COMPHAUS \to \TOP$ and $\HAUS \to \TOP$ to show
  that every compact Vietoris polynomial functor $F: \TOP \to \TOP$
  that can be restricted either to $\COMPHAUS$ or $\HAUS$ admits a
  final coalgebra.

\end{itemize}

Our setting is a broader one also because we consider different
instances of Vietoris functors, a particular case being what we call
the \df{lower Vietoris functor}, studied in a coalgebraic setting in
\cite{BKR07}.

\begin{itemize}
\item We will show that every lower Vietoris polynomial functor
  behaves well in the category $\STCOMP$ of stably compact spaces and
  spectral maps. In particular, that its category of coalgebras is
  complete.

\item In order to extend these results to more variants of Vietoris
  functors, we study the existence of adjunctions between categories
  of coalgebras. One positive result is that, assuming the existence
  of a monomorphic natural transformation between the underlying
  functors, such an adjunction exists under mild conditions.
\end{itemize}

To illustrate the practical side of these developments, and, more
generally, the potential of coalgebras over $\TOP$ to the design and
analysis of hybrid systems, we argue that the coalgebraic
specification in $\SET$ of the bouncing ball has some deficiencies.
Among them, the incapability to reason about the system's
stability, and the non-existence of a suitable final coalgebra if
non-determinism is taken into account.  We will see that these issues
can be solved, to some extent, by adopting the category $\TOP$ as the
underlying semantic universe.

\subsection{Roadmap} The ensuing section introduces some categorial
notions, provides an overview, and extends some results about limits
in categories of coalgebras. Then, it formally reviews the concept of
Vietoris coalgebra and different instances of Vietoris functors --- as
already mentioned, our agenda has a broader scope than most
coalgebraic literature on Vietoris functors, which mainly focuses on
one specific case.

Section \ref{Sec_limits} starts with our study about polynomial
coalgebras over $\TOP$, and topological functors between categories of
coalgebras. Then, it adds two instances of Vietoris functors (the
lower and the compact) to the mix which, as expected, introduce a
number of difficulties. A number of topological concepts are recalled
at this point to help us achieve some of the results mentioned above.

Section \ref{Sec_morelimits} explores the existence of adjunctions
between categories of coalgebras induced by natural transformations
relating functors on the underlying categories. As already stated,
this allows to extend the results of the previous section to
subfunctors of Vietoris polynomial ones, thus covering at once several
variants of Vietoris functors.

Section \ref{Sec_work} illustrates an application of this work to the
design of hybrid systems.  Finally, Section \ref{Sec_conc} suggests
possible research lines for future work and concludes.

We assume that the reader has basic knowledge of category theory
\cite{Mac71,AHS90}, topology \cite{Kelley,JGL-topology}, and
coalgebras \cite{rutten2000,Ada05,introcoalg}.

\section{Preliminaries}
\label{Sec_pre}
\subsection{Categorial notions}

\noindent
Some categorial notions that the reader may not frequently meet will
be used.  This section provides a brief overview about them.

\noindent

\begin{definition}
  A diagram $D:I\to\catC$ is said to be \df{codirected} whenever $I$
  is a codirected partially ordered set, that is, $I$ is non-empty and
  for all $i,j\in I$ there is some $k\in I$ with $k\to i$ and
  $k\to j$. A cone for a codirected diagram is called a \df{codirected
    cone}. In particular, a limit of a codirected diagram is called
  \df{codirected}.
\end{definition}

\begin{example}
  Inverse sequence (or $\omega^\op$) diagrams, which have the shape
  depicted below, are codirected.
  \[
    \cdot \longleftarrow \cdot \longleftarrow \cdot \longleftarrow
    \dots
  \]
  Inverse sequence diagrams have a central role in showing that a
  given functor admits a final coalgebra (see Theorem \ref{Th:omega}).
\end{example}

\begin{remark}
  \label{rem:codirect_posch}
  The codirected limit of a diagram $D:I\to\SET$ is given by the
  subset
  \[
    \left\{(x_i)_{i\in I}\in \prod_{i\in I}D(i)\mid \forall j\to i \in
      I, D(j\to i)(x_j)=x_i\right\}
  \]
  of the product $\prod_{i\in I}D(i)$.
\end{remark}

\begin{definition}
  A category $\catC$ is said to be \df{connected} if it is non-empty
  and every two objects $A,B \in \catC$ can be connected by a finite
  zig-zag of morphisms as depicted below.
  \begin{flalign*}
    A \leftarrow \cdot \rightarrow \dots \leftarrow \cdot \rightarrow
    B
  \end{flalign*}
  A diagram $D : I \to \catC$ is called \df{connected diagram} if $I$
  is connected, and a limit of $D$ is called \df{connected limit} if
  $D : I \to \catC$ is connected.
\end{definition}

\begin{examples}
  Equalisers and codirected limits are two examples of connected
  limits.
\end{examples}

We will see in the following section that polynomial functors over
$\TOP$ preserve connected limits, in particular codirected ones.

\begin{definition}
  Let $F:\catA\to \catB$ be a functor. A cone $\Cp$ in $\catA$ is said
  to be \df{initial with respect to} $F$ if for every cone $\Dp$ and
  every morphism $h:FD\to FC$ such that
  $F\mathcal{D}=F\mathcal{C}\comp h$, there exists a unique
  $\catA$-morphism $\bar{h}:D\to C$ with
  $\mathcal{D}=\mathcal{C} \comp \bar{h}$ and $h=F\bar{h}$.
\end{definition}

We simply say that the cone is \df{initial} whenever no ambiguities
arise. 

\begin{examples}
  \phantomsection
  \label{exs:initial}
  \begin{enumerate}
  \item \label{wrtTOP} A cone $(f_i:X\to X_i)$ in $\TOP$ is initial
    with respect to the forgetful functor $\TOP\to\SET$ if and only if
    $X$ is equipped with the so called initial (weak)
    topology. Explicitly, the topology generated by the subbasis
    \[
      f_i^{-1}(U)\hspace{2em}(i\in I, U \subseteq X_i \text{ open}).
    \]
  \item In the category $\COMPHAUS$ of compact Hausdorff spaces and
    continuous maps, a monocone is initial in $\TOP$ (\cf\
    \cite[Theorem 4.4.27]{JGL-topology}). Interestingly, the converse
    also holds, as a initial cone in $\TOP$ whose domain is a T$_0$
    space is necessarily mono.
  \end{enumerate}
\end{examples}

\begin{remark}\label{rem:initial_1}
  In Example~\ref{exs:initial}(\ref{wrtTOP}) the subbasis is actually
  a basis if the cone is codirected.
\end{remark}

\begin{theorem}[{\cite[Proposition 13.15]{AHS90}}]
  \label{theo_gen_init}
  Let $F:\catA\to\catB$ be a limit preserving faithful functor and
  $D:I\to \catA$ a diagram. A cone $\C$ for $D$ is a limit of $D$ if
  and only if the cone $F \mathcal{C}$ is a limit of $FD$ and
  $\mathcal{C}$ is initial with respect to $F$.
\end{theorem}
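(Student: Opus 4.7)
The plan is to prove both implications directly from the universal properties involved, with the key observation that faithfulness of $F$ is what allows uniqueness statements in $\catA$ to be read off from uniqueness statements in $\catB$, while initiality is the device that lifts the existence of a factorisation from $\catB$ back to $\catA$.

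For the forward direction, I would assume $\mathcal{C}$ is a limit of $D$. Limit preservation gives immediately that $F\mathcal{C}$ is a limit of $FD$, so only initiality of $\mathcal{C}$ with respect to $F$ remains. Given a competing cone $\Dp$ in $\catA$ and a morphism $h\colon FD\to FC$ with $F\mathcal{D}=F\mathcal{C}\comp h$, the universal property of $\mathcal{C}$ supplies a unique $\bar h\colon D\to C$ with $\mathcal{D}=\mathcal{C}\comp\bar h$. Applying $F$ shows $F\bar h$ also mediates $F\mathcal{D}$ through the limit $F\mathcal{C}$, so by uniqueness in $\catB$ we get $F\bar h=h$. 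Uniqueness of $\bar h$ in the sense required by the definition of initiality then follows from faithfulness of $F$, since any other candidate would have the same image under $F$.

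For the backward direction, I start with an arbitrary cone $\Dp$ in $\catA$. Pushing it through $F$ yields a cone for $FD$, and the universal property of the limit $F\mathcal{C}$ produces a unique $h\colon FD\to FC$ with $F\mathcal{D}=F\mathcal{C}\comp h$. Now initiality of $\mathcal{C}$ with respect to $F$ lifts this to a unique $\bar h\colon D\to C$ in $\catA$ such that $\mathcal{D}=\mathcal{C}\comp\bar h$ and $F\bar h=h$. What still needs checking is that $\bar h$ is unique as a mediating morphism for the cone $\mathcal{D}$ (not just jointly with the condition $F\bar h=h$): if $g\colon D\to C$ also satisfies $\mathcal{D}=\mathcal{C}\comp g$, then $Fg$ mediates $F\mathcal{D}$ through $F\mathcal{C}$ and hence $Fg=h$; but then both $g$ and $\bar h$ fulfil the two conditions in the definition of initiality, so they must coincide.

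The main obstacle, as I see it, is not conceptual but bookkeeping: keeping the two uniqueness clauses straight. Initiality only promises uniqueness of $\bar h$ subject to both $\mathcal{D}=\mathcal{C}\comp\bar h$ and $F\bar h=h$, whereas being a limit demands uniqueness from $\mathcal{D}=\mathcal{C}\comp\bar h$ alone. The small argument above — reducing any competitor $g$ to one satisfying the same auxiliary equation $Fg=h$ via the universal property of $F\mathcal{C}$ — is the point where the three hypotheses (limit preservation, limit in $\catB$, initiality) interact, and it is worth writing out carefully even though no step is individually deep.
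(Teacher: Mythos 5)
Your proof is correct. The paper offers no proof of its own here --- the statement is quoted verbatim from \cite[Proposition 13.15]{AHS90} --- and your argument is the standard one: limit preservation plus uniqueness of mediating morphisms in $\catB$ for the forward direction, and the interplay between the universal property of $F\mathcal{C}$ and the uniqueness clause of initiality for the converse. Your bookkeeping of the two uniqueness clauses is exactly the point that needs care, and you handle it properly; one small remark is that in the forward direction you do not actually need faithfulness to get uniqueness of $\bar h$, since the limit property of $\mathcal{C}$ already delivers it from the single equation $\mathcal{D}=\mathcal{C}\comp\bar h$.
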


\subsection{Limits in categories of coalgebras}

\noindent
Let $F : \catC \to \catC$ be an arbitrary functor. Then, dually to the
algebraic case, one can easily show that colimits in $\Coalg(F)$ exist
if they do so in $\catC$ (\cf\ \cite{rutten2000,Ada05}).  The story
about limits in categories of coalgebras is, however, more complex. In
this subsection we review some well-known results on this topic, a
special focus being given to those more relevant to the paper. We
start at a generic level, with the following two theorems (\cf\
\cite{rutten2000,Ada05}).

\begin{theorem}
  \label{Th:omega}
  Let $\catC$ be a category with a final object $1$ and
  $F:\catC\to\catC$ a functor. If the category $\catC$ has a limit $L$
  of the diagram
  \[
    1\longleftarrow F1\longleftarrow FF1\longleftarrow \dots
  \]
  and $F$ preserves this limit, then the canonical isomorphism
  $L \to F L$ is a final $F$-coalgebra.
\end{theorem}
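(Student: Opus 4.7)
The plan is to apply the classical terminal-sequence argument. Write the diagram as $D:\omega^{\op}\to\catC$ with $D(n)=F^n 1$ and connecting maps $F^n(!) : F^{n+1}1\to F^n 1$ induced by the unique map $! : F1 \to 1$. Let $(\ell_n: L \to F^n 1)_{n\geq 0}$ be the given limit cone. Observe first that dropping the initial object $1$ from the diagram does not change the limit, since the arrow $F1 \to 1$ is forced to be $!$; hence the truncated cone $(\ell_{n+1}: L \to F^{n+1}1)_{n\geq 0}$ is still a limit of the shifted diagram. By the hypothesis that $F$ preserves the limit $L$, the cone $(F\ell_n: FL \to F^{n+1}1)_{n\geq 0}$ is also a limit of that shifted diagram. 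The universal property then produces a unique isomorphism $\alpha : L \to FL$ with $F\ell_n \comp \alpha = \ell_{n+1}$ for all $n$, which is the candidate final coalgebra structure.

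Next, I would construct the mediating morphism. Given an arbitrary coalgebra $c: X \to FX$, define $f_n : X \to F^n 1$ recursively by $f_0 = {!}_X$ and $f_{n+1} = F(f_n) \comp c$. A short induction shows that $(f_n)$ is a cone over $D$: the step $F^{n+1}(!) \comp f_{n+2} = F(F^n(!) \comp f_{n+1}) \comp c = F(f_n) \comp c = f_{n+1}$ uses only the inductive hypothesis and functoriality of $F$. The limit property delivers a unique $f : X \to L$ with $\ell_n \comp f = f_n$ for all $n$.

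To see that $f$ is a coalgebra morphism, I would check the equality $\alpha \comp f = Ff \comp c$ by postcomposing with each component of the limit cone $(F\ell_n)_n$:
\[
F\ell_n \comp \alpha \comp f \;=\; \ell_{n+1} \comp f \;=\; f_{n+1} \;=\; F(f_n) \comp c \;=\; F(\ell_n \comp f) \comp c \;=\; F\ell_n \comp Ff \comp c.
\]
Since $(F\ell_n)_n$ is jointly monic (being a limit cone), this forces $\alpha \comp f = Ff \comp c$. Uniqueness is analogous: for any coalgebra morphism $g : (X,c) \to (L,\alpha)$, induction on $n$ shows $\ell_n \comp g = f_n$ (the step uses $\alpha \comp g = Fg \comp c$ together with the identity $F\ell_n \comp \alpha = \ell_{n+1}$), and the universal property of $L$ gives $g = f$.

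No genuine obstacle is expected; the only subtlety is the observation that the limit of the sequence is unchanged when the terminal object is removed from its head, which is what allows $F$-preservation of the limit to be applied to obtain the inverse isomorphism $\alpha$. Everything else is a routine diagram chase powered by the joint monicity of the limit cone.
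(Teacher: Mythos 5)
Your argument is correct and is precisely the classical terminal-sequence proof that the paper itself omits, deferring instead to the cited references \cite{rutten2000,Ada05}. All the key steps --- identifying the limit of the truncated sequence with $L$, obtaining the isomorphism $\alpha$ from preservation, building the cone $(f_n)$ by recursion, and using joint monicity of the limit cone for both the homomorphism property and uniqueness --- are sound and standard.
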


\begin{theorem}
  \label{Th:limpres}
  Assume that $F : \catC \to \catC$ preserves limits of a certain
  type. Then the forgetful functor $\Coalg(F) \to \catC$ creates
  limits of the same type.
\end{theorem}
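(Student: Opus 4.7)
The plan is to show that $U : \Coalg(F) \to \catC$ creates limits of the given type. Fix a diagram $D : I \to \Coalg(F)$ of that type, write $D(i) = (UD(i), \delta_i)$ with $\delta_i : UD(i) \to F(UD(i))$, and assume the underlying diagram $UD : I \to \catC$ has a limit $(\pi_i : L \to UD(i))_{i\in I}$ in $\catC$. I need to provide a unique coalgebra structure on $L$ turning this cone into a cone in $\Coalg(F)$, and then verify that the resulting cone is a limit there.

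For the lifting, I would exploit the hypothesis that $F$ preserves limits of this type, so $(F\pi_i : FL \to F(UD(i)))_{i\in I}$ is a limit of $FUD$ in $\catC$. The family $(\delta_i \circ \pi_i)_{i\in I}$ forms a cone over $FUD$: for each $j\to i$ in $I$, the coalgebra-morphism condition $F D(j\to i) \circ \delta_j = \delta_i \circ D(j\to i)$ combined with the cone equation $D(j\to i)\circ \pi_j = \pi_i$ yields the required compatibility. By the universal property of $FL$, there exists a unique arrow $\alpha : L \to FL$ with $F\pi_i \circ \alpha = \delta_i \circ \pi_i$ for all $i\in I$. This equation is exactly the statement that each $\pi_i : (L,\alpha) \to (UD(i),\delta_i)$ is a morphism of $F$-coalgebras, and the uniqueness of $\alpha$ ensures that no other coalgebra structure on $L$ does the job.

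To confirm that $(L,\alpha)$ with this cone is a limit in $\Coalg(F)$, take any cone $\bigl(g_i : (X,\xi)\to (UD(i),\delta_i)\bigr)_{i\in I}$ in $\Coalg(F)$. The underlying family in $\catC$ factors uniquely as $g_i = \pi_i \circ h$ for a $\catC$-morphism $h : X \to L$. To see that $h$ is itself a coalgebra morphism, I would precompose $F\pi_i$ with both sides of the desired equation $Fh\circ\xi = \alpha\circ h$: using $Fg_i = F\pi_i\circ Fh$ and the fact that each $g_i$ is a coalgebra morphism, both composites equal $\delta_i\circ\pi_i\circ h$ for every $i\in I$. Since the cone $(F\pi_i)_{i\in I}$ is limiting and therefore jointly monic, $Fh\circ\xi = \alpha\circ h$ follows. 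Uniqueness of $h$ as a coalgebra morphism is inherited from its uniqueness as a $\catC$-morphism.

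There is no genuine obstacle here; the only point demanding care is the bookkeeping of two distinct universal properties in $\catC$ — one for the limit of $UD$, which produces the factoring map $h$, and one for the limit of $FUD$, which simultaneously defines $\alpha$ and detects that $h$ commutes with the coalgebra structures. Both uses rely essentially on the preservation hypothesis on $F$, without which $(F\pi_i)_{i\in I}$ would merely be a cone rather than a limit.
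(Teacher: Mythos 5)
Your argument is correct and is precisely the standard proof of this classical result, which the paper itself does not prove but only cites from the literature (Rutten, Ad\'amek): lift the coalgebra structure to the limit via the preserved limit $(F\pi_i)_{i\in I}$, and use the joint monicity of that limiting cone to check that the comparison map is a coalgebra homomorphism. Nothing is missing; in particular you correctly address both the uniqueness of the lifted structure (required for \emph{creation}) and the verification that the lifted cone is limiting in $\Coalg(F)$.
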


\noindent
An important consequence of the last theorem is that $\Coalg(F)$ has
all types of limit that $\catC$ has and that the functor
$F: \catC \to \catC$ preserves. Unfortunately, as we will witness
later, this assumption is often too strong. Resorting to the notion of
covarietor, the following results will be more helpful.

\begin{definition} 
  A functor $F:\catC\to\catC$ is said to be a \df{covarietor} if the
  canonical forgetful functor $\Coalg(F)\to\catC$ is left adjoint.
\end{definition}

\noindent
This adjoint situation allows to take advantage of the theory of
(co)monads regarding (co)completeness of Eilenberg-Moore (co)algebras
to derive the following theorem (\cf\ \cite{Lin69}).

\begin{theorem}
  \label{theo_linton}
  Let $F$ be a covarietor over a complete category. If $\Coalg(F)$ has
  equalisers then $\Coalg(F)$ is complete.
\end{theorem}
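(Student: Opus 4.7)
The plan is to apply the coalgebraic dual of Linton's theorem \cite{Lin69}. Since $F$ is a covarietor, the forgetful functor $U:\Coalg(F)\to\catC$ has a right adjoint $R:\catC\to\Coalg(F)$. As a right adjoint, $R$ preserves all limits; hence, for any family $(X_i)_{i\in I}$ of objects in $\catC$, the product $\prod_{i\in I} R(X_i) = R(\prod_{i\in I} X_i)$ exists in $\Coalg(F)$. In other words, products of cofree coalgebras are always available, being inherited from $\catC$ via the completeness hypothesis.

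To show that $\Coalg(F)$ is complete, it suffices to construct arbitrary small products, as these together with the assumed equalisers yield all small limits. Given a family $(A_i,\alpha_i)_{i\in I}$ of $F$-coalgebras, the unit morphisms $\eta_{(A_i,\alpha_i)}:(A_i,\alpha_i)\to R(A_i)$ relate each coalgebra to its cofree hull. I would then construct $\prod_{i\in I}(A_i,\alpha_i)$ as an equaliser, taken inside the cofree product $R(\prod_{i\in I} A_i)$, of a canonical parallel pair of coalgebra morphisms built from the unit and counit of the adjunction $U\dashv R$; these morphisms are designed so that equalising them amounts to enforcing compatibility with each individual coalgebra structure $\alpha_i$. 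This is the formal dual of Linton's original construction of colimits in Eilenberg--Moore categories of algebras from free algebras and coequalisers.

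With both small products and equalisers available in $\Coalg(F)$, the category has all small limits and is therefore complete. The main subtlety lies in specifying the correct parallel pair whose equaliser yields the desired product; once this is pinned down, universality follows from a routine calculation using the triangle identities of the adjunction $U\dashv R$ together with the universal property of the already-established products of cofree coalgebras.
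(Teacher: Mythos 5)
Your proposal is correct and follows essentially the same route as the paper, which simply invokes the dual of Linton's theorem \cite{Lin69} via the comonad induced by the covarietor adjunction: products of cofree coalgebras come from completeness of the base, and arbitrary products are carved out as equalisers of the canonical parallel pair between cofree coalgebras, exactly as you outline. The paper gives no further detail, so your sketch (including the flagged step of writing down the parallel pair from the unit and counit) is a faithful unpacking of its intended argument.
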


\noindent
Related to this, Hughes proved the following theorem in \cite[Theorem
2.4.2]{hughes2001}.

\begin{theorem}
  \label{theo_hughes}
  Let $\catC$ be regularly wellpowered, cocomplete, and possess
  equalisers. Moreover, assume that it has an (Epi,
  RegMono)-factorisation structure, and that the functor
  $F : \catC \to \catC$ preserves regular monomorphisms. Then
  $\Coalg(F)$ has equalisers.
\end{theorem}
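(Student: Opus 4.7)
The plan is to follow Hughes' transfinite refinement argument: starting from the equaliser of $f,g$ in $\catC$, iteratively cut down to a regular subobject of $A$ on which $\alpha$ restricts to a coalgebra structure, and identify the resulting coalgebra as the equaliser in $\Coalg(F)$. Given a parallel pair $f,g\colon(A,\alpha)\to(B,\beta)$ in $\Coalg(F)$, I would first form the equaliser $m_0\colon E_0\to A$ of $f,g$ in $\catC$; this exists by hypothesis and is a regular mono, so $Fm_0\colon FE_0\to FA$ is a regular mono by preservation. If $\alpha\comp m_0$ already factors through $Fm_0$, the unique such factorisation defines a coalgebra structure on $E_0$ and $m_0$ is the desired equaliser; otherwise one refines $E_0$ further.

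I would build transfinitely a descending chain $(m_\xi\colon E_\xi\hookrightarrow A)_\xi$ of regular subobjects of $A$. At a successor $\xi+1$, writing $Fm_\xi$ as the equaliser of a parallel pair $u,v\colon FA\rightrightarrows Z$, let $p_{\xi+1}\colon E_{\xi+1}\to E_\xi$ be the equaliser of $u\comp\alpha\comp m_\xi$ and $v\comp\alpha\comp m_\xi$, and set $m_{\xi+1}=m_\xi\comp p_{\xi+1}$. By construction $E_{\xi+1}$ carves out the largest subobject of $E_\xi$ on which $\alpha\comp m_\xi$ factors through $Fm_\xi$, and closure of regular monos under composition (a consequence of the (Epi, RegMono)-factorisation structure) makes $m_{\xi+1}$ a regular mono into $A$. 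At a limit ordinal $\lambda$, I would take $m_\lambda=\bigwedge_{\xi<\lambda}m_\xi$ in the poset of regular subobjects of $A$; this poset is a small complete lattice, since regular wellpoweredness turns it into a set and cocompleteness together with (Epi, RegMono)-factorisations supply the requisite joins and, by duality, the needed meets.

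By regular wellpoweredness the chain must stabilise at some ordinal $\xi^\ast$, and the equivalence $m_{\xi^\ast+1}\cong m_{\xi^\ast}$ translates exactly into $\alpha\comp m_{\xi^\ast}$ factoring through $Fm_{\xi^\ast}$, yielding a coalgebra structure $\alpha^\ast$ on $E_{\xi^\ast}$ for which $m_{\xi^\ast}$ is a coalgebra morphism into $(A,\alpha)$. Universality is then a routine transfinite induction: any coalgebra morphism $h\colon(X,\gamma)\to(A,\alpha)$ with $f\comp h=g\comp h$ first factors through $m_0$ in $\catC$; assuming $h=m_\xi\comp h_\xi$, the coalgebra-morphism condition $\alpha\comp h=Fh\comp\gamma$ forces $h_\xi$ to equalise $u\comp\alpha\comp m_\xi$ and $v\comp\alpha\comp m_\xi$, hence to lift to $h_{\xi+1}$ along $p_{\xi+1}$, while at limit stages the individual factorisations combine into one through $m_\lambda$; monicity of $m_{\xi^\ast}$ makes the final factorisation automatically a coalgebra morphism. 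The main obstacle is the limit stage, where one must justify the existence of transfinite meets of regular subobjects and their compatibility with the successor step; it is here that regular wellpoweredness (giving a set of regular subobjects) and the (Epi, RegMono)-factorisation structure together with cocompleteness (providing the lattice-theoretic completeness of that set) do the essential work.
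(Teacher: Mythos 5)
Your argument is correct in substance, but it is not the route the paper takes. The paper states this result with a citation to Hughes and does not prove it directly; instead it reproves a strengthening (Theorem~\ref{cor:general-completeness-coalgebras2}) non-constructively, by applying Lemma~\ref{lem:complete} --- Freyd's General Adjoint Functor Theorem for the diagonal functor $\Delta:\Coalg(F)\to\Coalg(F)^I$ with $I=\{1\rightrightarrows 2\}$ --- where the solution set is supplied by an $(E,\mathcal{M})$-factorisation of cones (Remark~\ref{rem:arrows-to-cones}(2)) lifted to $\Coalg(F)$, together with $\mathcal{M}$-wellpoweredness; cocompleteness enters only through the adjoint functor theorem, and the hypothesis that $\catC$ has equalisers turns out to be redundant. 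You instead carry out the classical explicit construction: the equaliser in $\Coalg(F)$ is the largest subcoalgebra of $(A,\alpha)$ sitting inside the $\catC$-equaliser $E_0$, obtained by transfinitely iterating the pullback of $Fm_\xi$ along $\alpha\comp m_\xi$ (realised as an equaliser) until the descending chain of regular subobjects stabilises. Your route buys an explicit description of the equaliser and genuinely uses the equaliser hypothesis on $\catC$; the paper's route is shorter, exposes that hypothesis as unnecessary, and --- crucially for the rest of the paper --- works uniformly for limits of arbitrary shape $I$, not just equalisers. Two small points in your write-up deserve tightening, though neither is a gap: the meets at limit ordinals come not from ``duality'' but from the order-theoretic fact that a small poset with all joins (joins being supplied by coproducts, the (Epi, RegMono)-factorisation, and regular wellpoweredness) is a complete lattice; and in the last step it is monicity of $Fm_{\xi^\ast}$ --- available because $F$ preserves regular monomorphisms --- rather than of $m_{\xi^\ast}$ that forces the induced factorisation to be a coalgebra homomorphism.
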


\noindent
Using Theorem \ref{theo_linton}, one can then easily deduce the
following corollary.

\begin{corollary}
  If the conditions in the last theorem hold and, additionally,
  $\catC$ is complete and $F$ is a covarietor, then the category
  $\Coalg(F)$ is complete.
\end{corollary}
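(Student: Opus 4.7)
The plan is essentially to chain the two previous results: Hughes' Theorem \ref{theo_hughes} yields equalisers in $\Coalg(F)$, and then Theorem \ref{theo_linton} upgrades this to full completeness. So the proof will be a short two-step argument.

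First I would verify that the hypotheses of Theorem \ref{theo_hughes} are all in force: $\catC$ is assumed regularly wellpowered, cocomplete (which follows from completeness only in special cases, but here is independently listed among the assumptions carried over from the theorem's statement), has equalisers, an (Epi, RegMono)-factorisation structure, and $F$ preserves regular monomorphisms. Under these conditions, Theorem \ref{theo_hughes} applies directly and guarantees that $\Coalg(F)$ has equalisers.

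Next, I would invoke Theorem \ref{theo_linton}. Its hypotheses are that $F$ be a covarietor over a complete category and that $\Coalg(F)$ has equalisers. The first two are part of the corollary's additional assumptions, and the third has just been secured. Therefore $\Coalg(F)$ is complete.

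There is really no hard step here: the corollary is just the composition of the two cited results, and the content lies in Hughes' theorem together with the Linton-style argument embedded in Theorem \ref{theo_linton}. The only point worth double-checking when writing it up is that the factorisation/wellpoweredness assumptions from Theorem \ref{theo_hughes} do not clash with, or become redundant in the presence of, the completeness and covarietor assumptions added in the corollary; but since the two theorems are stated independently in the excerpt, no reconciliation is needed and the one-line citation suffices.
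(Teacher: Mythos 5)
Your proof is correct and follows exactly the route the paper intends: the paper introduces this corollary with the phrase ``Using Theorem \ref{theo_linton}, one can then easily deduce the following corollary,'' i.e.\ Hughes' Theorem \ref{theo_hughes} supplies equalisers in $\Coalg(F)$ and Theorem \ref{theo_linton} then yields completeness. Nothing further is needed.
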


\noindent
We refer the interested reader to other results on limits in
categories of coalgebras. In particular, the work of Kurz
\cite{kurz2001}, which shows that $\Coalg(F)$ is complete whenever it
has a suitable factorisation structure, $F$ is a covarietor, and
$\catC$ is complete; document \cite{gumm2001}, where the authors study
the existence of equalisers and products in categories of coalgebras
over $\SET$; and the documents \cite{powerw98,gumm2001}, where the
existence of limits is studied under the assumption of $F$ being
bounded.

To close this section, we provide an improvement to Hughes'
theorem. We start with notation.
\begin{definition}
  For a small category $I$, a \df{cone for $I$} in a category $\catC$
  is given by a functor $D:I\to\catC$ together with a cone
  $(X\to D(i))_{i\in I}$ for $D$. Given a class $\mathcal{M}$ of cones
  for $I$, the category $\catC$ is called
  \df{$\mathcal{M}$-wellpowered} if for every functor $D:I\to\catC$
  there is up to isomorphism only a set of cones for $D$ in
  $\mathcal{M}$.
\end{definition}

Our first lemma is in the spirit of \cite[Section 12]{AHS90} and shows
that ``cocompleteness almost implies completeness''.

\begin{lemma}\label{lem:complete}
  Let $\catC$ be a cocomplete category and $I$ a small
  category. Furthermore, let $E$ be a class of $\catC$-morphisms and
  $\mathcal{M}$ be a class of cones for $I$ in $\catC$. If $\catC$ is
  $\mathcal{M}$-wellpowered and every cone for $I$ has a
  $(E,\mathcal{M})$-factorisation, then $\catC$ has limits of shape
  $I$.
\end{lemma}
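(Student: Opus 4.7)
The plan is to construct a limit of an arbitrary diagram $D:I\to\catC$ as the distinguished $\mathcal{M}$-cone extracted, via $(E,\mathcal{M})$-factorisation, from a cone obtained by assembling \emph{all} $\mathcal{M}$-cones for $D$ through a coproduct.

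First, I would use $\mathcal{M}$-wellpoweredness to pick a set $\{\mu^\alpha = (\mu^\alpha_i \colon L_\alpha \to D(i))_{i \in I}\}_{\alpha \in A}$ of representatives for the isomorphism classes of $\mathcal{M}$-cones for $D$. Using cocompleteness of $\catC$, form the coproduct $K := \coprod_{\alpha \in A} L_\alpha$ with coprojections $\kappa_\alpha \colon L_\alpha \to K$; for each $i \in I$, the coproduct property yields a unique $\nu_i \colon K \to D(i)$ with $\nu_i \comp \kappa_\alpha = \mu^\alpha_i$, and the family $(\nu_i)_{i\in I}$ is readily checked to be a cone for $D$. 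Applying the hypothesised $(E,\mathcal{M})$-factorisation to this cone produces
\[
  K \xrightarrow{\;e\;} L \xrightarrow{\;\mu\;} D,
\]
with $e \in E$ and $\mu = (\mu_i)_{i\in I} \in \mathcal{M}$; this is my candidate limit cone.

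For the \emph{existence} part of the universal property, given an arbitrary cone $\gamma \colon X \to D$, I would factor it as $X \xrightarrow{e'} Y \xrightarrow{\mu'} D$ with $\mu' \in \mathcal{M}$; by the choice of representatives, $\mu'$ is isomorphic to some $\mu^\alpha$, so, identifying $Y$ with $L_\alpha$, the composite $X \xrightarrow{e'} L_\alpha \xrightarrow{\kappa_\alpha} K \xrightarrow{e} L$ is a mediating morphism, because $\mu \comp e \comp \kappa_\alpha = \nu \comp \kappa_\alpha = \mu^\alpha$ reproduces $\gamma$ after postcomposition with $\mu$.

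The main obstacle is the \emph{uniqueness} of the mediating morphism, which is not implied by the mere existence of factorisations. It requires that $\mathcal{M}$ consists of mono-cones --- \ie\ that $\mu_i \comp h = \mu_i \comp h'$ for all $i \in I$ forces $h = h'$. I would therefore read ``$(E,\mathcal{M})$-factorisation'' in the standard sense of factorisation structure for cones, carrying with it the diagonal fill-in axiom (\cf\ \cite{AHS90}); mono-ness of $\mathcal{M}$-cones then follows by a short argument using the coequaliser of any parallel pair $h,h'$ with $\mu \comp h = \mu \comp h'$, after which uniqueness of the mediator into $L$ is immediate.
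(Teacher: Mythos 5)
Your construction is, in essence, an attempt to unwind the General Adjoint Functor Theorem by hand, and that is precisely where it diverges from the paper: the paper simply checks that the diagonal $\Delta:\catC\to\catC^I$ preserves colimits and satisfies the Solution Set Condition (the set of $\mathcal{M}$-representatives for $D$ is the solution set, reached via the hypothesised $(E,\mathcal{M})$-factorisation of an arbitrary cone) and then lets GAFT produce the right adjoint. GAFT's internal machinery obtains the terminal cone from the weakly terminal coproduct $K$ not by factorising the induced cone, but by coequalising all endomorphisms of $K$ in the comma category of cones over $D$; that coequaliser step is what secures \emph{uniqueness} of the mediating morphism, and it uses nothing beyond cocompleteness and local smallness.

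The obstacle you identify is therefore a genuine gap, and your proposed repair does not close it under the lemma's hypotheses. The lemma assumes only that every cone admits \emph{some} $(E,\mathcal{M})$-factorisation and that $\catC$ is $\mathcal{M}$-wellpowered; it does not assume a factorisation \emph{structure} with the diagonal fill-in property, so reading the hypothesis that way proves a different (weaker) statement. Moreover, even granting the fill-in axiom, it does not force $\mathcal{M}$-cones to be mono-cones: the trivial structure $(\mathrm{Iso},\,\text{all cones})$ satisfies fill-in while $\mathcal{M}$ contains non-monic cones. Your coequaliser argument for mono-ness only goes through if the coequaliser $q$ of the parallel pair $h,h'$ lies in $E$ (so that fill-in against $\mu\in\mathcal{M}$ splits $q$), i.e.\ it needs an additional hypothesis such as $\mathrm{RegEpi}\subseteq E$ --- which holds in the paper's applications but is not part of the lemma. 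To prove the lemma as stated, either invoke GAFT as the paper does, or complete your direct construction by following the coproduct step with the joint coequaliser of all endomorphisms of the resulting weakly terminal cone; the $(E,\mathcal{M})$-factorisation alone is not enough.
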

\begin{proof}
  We will show that the diagonal functor
  \[
    \Delta:\catC\to\catC^I
  \]
  has a right adjoint, using Freyd's General Adjoint Functor Theorem
  (see \cite{Mac71}).  By assumption, $\catC$ is cocomplete and the
  functor $\Delta$ clearly preserves colimits, so we just need to show
  that the Solution Set Condition holds. In this context it unfolds to
  the following condition: for every functor $D:I\to\catC$, there is a
  set $\mathcal{S}$ of cones for $D$ such that every cone
  $(f_i:C\to D(i))_{i\in I}$ for $D$ factors through a cone in
  $\mathcal{S}$.

  Since $\catC$ is $\mathcal{M}$-wellpowered we have, by assumption, a
  set $\mathcal{S}$ of representants for $D$ in
  $\mathcal{M}$. Moreover $\catC$ has a
  $(E, \mathcal{M})$-factorisation system for $I$, which means that
  the cone $(f_i:C\to D(i))_{i\in I}$ can be factorised as depicted
  below
  \[
    \xymatrix{C\ar[rr]^{f_i}\ar[dr]_e && D(i)\\ & A\ar[ru]_{g_i}}
  \]
  with the cone $(g_i : A \to D(i))_{i \in I}$ in $\mathcal{S}$.
\end{proof}

The factorisation system assumed in this lemma may appear to be rather
unconventional, but, as the following remarks will show, it actually
emerges from mild conditions.

\begin{remark}\label{rem:arrows-to-cones}
  Consider a category $\catC$ equipped with classes $E$ and $M$ of
  morphisms so that every morphism in $\catC$ has a
  $(E,M)$-factorisation and $\catC$ is $M$-wellpowered. Under
  additional assumptions, such factorisations can be extended to cones
  for $I$. To be more concrete:
  \begin{enumerate}
  \item Assume that $\catC$ has products. Then we put
    \[
      \mathcal{M}= \left \{\text{all cones $(f_i:X\to D(i))_{i\in I}$
          for $I$ where
          $\langle f_i\rangle_{i\in I}:X\to\prod_{i\in I}D(i)$ is in
          $M$} \right \}.
    \]
    Clearly, every cone for $I$ is $(E,\mathcal{M})$-factorisable (see
    \cite[Proposition 15.19]{AHS90}), and $\catC$ is
    $\mathcal{M}$-wellpowered.
  \item In order to relate the previous lemma with Hughes' theorem,
    assume that $I=\{1\rightrightarrows 2\}$ and that $E$ is contained
    in the class of epimorphisms of $\catC$. The class of cones
    \[
      \mathcal{M}= \left \{\text{all cones $(f_i:X\to D(i))_{i\in I}$
          for $I$ with $f_1$ in $M$} \right \},
    \]
    makes every cone for $I$ $(E,\mathcal{M})$-factorisable and the
    category $\catC$ is $\mathcal{M}$-wellpowered.
  \end{enumerate}
\end{remark}

Finally, we apply the results above to categories of coalgebras.

\begin{theorem}\label{cor:general-completeness-coalgebras}
  Let $F:\catC\to\catC$ be an endofunctor over a cocomplete category
  $\catC$ and let $I$ be a small category. If $\catC$ is
  $(E,\mathcal{M})$-structured for cones for $I$,
  $\mathcal{M}$-wellpowered and $F$ sends cones in $\mathcal{M}$ to
  cones in $\mathcal{M}$, then $\Coalg(F)$ has limits of shape $I$.
\end{theorem}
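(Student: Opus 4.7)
The plan is to apply Lemma~\ref{lem:complete} directly to $\Coalg(F)$, lifting every ingredient along the forgetful functor $U\colon\Coalg(F)\to\catC$. Three items must be established: that $\Coalg(F)$ is cocomplete, that there is a class $\mathcal{M}'$ of cones for $I$ in $\Coalg(F)$ with respect to which $\Coalg(F)$ is wellpowered, and that every cone for $I$ in $\Coalg(F)$ admits an $(E',\mathcal{M}')$-factorisation.

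Cocompleteness is immediate because $U$ creates colimits. For the other two items, I would define $E'$ and $\mathcal{M}'$ by pulling $E$ and $\mathcal{M}$ back along $U$; faithfulness of $U$ then transports $\mathcal{M}$-wellpoweredness from $\catC$ to $\mathcal{M}'$-wellpoweredness of $\Coalg(F)$. The substantive step is the construction of the factorisation. Given a cone $(f_i\colon(X,\alpha)\to(D_i,\delta_i))_{i\in I}$ in $\Coalg(F)$, the first move is to $(E,\mathcal{M})$-factorise the underlying cone in $\catC$ as $X\xrightarrow{e} A$ followed by a cone $(g_i\colon A\to D_i)_{i\in I}$ in $\mathcal{M}$. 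Applying $F$ and invoking the preservation hypothesis yields $(Fg_i)_{i\in I}\in\mathcal{M}$. The coalgebra morphism identities $\delta_i\comp f_i=Ff_i\comp\alpha$ then assemble into a commutative square with $e\in E$ on one side and $(Fg_i)_{i\in I}\in\mathcal{M}$ on the other; an $(E,\mathcal{M})$-diagonalisation delivers a morphism $\beta\colon A\to FA$ which simultaneously upgrades $e$ to a coalgebra morphism $(X,\alpha)\to(A,\beta)$ and each $g_i$ to a coalgebra morphism $(A,\beta)\to(D_i,\delta_i)$, producing the desired factorisation in $\Coalg(F)$.

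The main obstacle is precisely this final diagonalisation. It requires reading the hypothesis ``$\catC$ is $(E,\mathcal{M})$-structured for cones for $I$'' as supplying an \emph{orthogonality} property between $E$ and $\mathcal{M}$, rather than the bare existence of factorisations; this is the standard convention and is already implicit in Theorem~\ref{theo_hughes} and Remark~\ref{rem:arrows-to-cones}. A secondary subtlety is that the fill-in must be carried out against a whole $\mathcal{M}$-cone rather than a single $M$-morphism, which is naturally handled by the product formulation of Remark~\ref{rem:arrows-to-cones}(1). Once these points are settled, Lemma~\ref{lem:complete} applied to the lifted data $(E',\mathcal{M}')$ closes the argument.
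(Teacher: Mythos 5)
Your proposal is correct and follows essentially the same route as the paper: the paper's proof simply observes that the $(E,\mathcal{M})$-factorisation structure lifts to $\Coalg(F)$ (citing \cite{Ada05,Che14}) and then invokes Lemma~\ref{lem:complete}, whereas you spell out that lifting explicitly via the diagonalisation argument that produces the coalgebra structure $\beta:A\to FA$ on the middle object. Your reading of ``$(E,\mathcal{M})$-structured'' as including the orthogonality (unique fill-in) property is the intended one, so the argument goes through as written.
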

\begin{proof}
  The assumptions guarantee that the factorisation system in $\catC$
  lifts to the category $\Coalg(F)$ (\cf\ \cite{Ada05,Che14}).  The
  claim then follows from Lemma~\ref{lem:complete}.
\end{proof}

Let us now relate in a more precise manner the previous theorem with
Hughes' theorem.

\begin{theorem}\label{cor:general-completeness-coalgebras2}
  Let $F:\catC\to\catC$ be an endofunctor over a cocomplete category
  $\catC$. If $\catC$ is regularly well-powered, has an (Epi,
  RegMono)-factorisation structure and $F : \catC \to \catC$ preserves
  regular monomorphisms, then $\Coalg(F)$ has equalisers.
\end{theorem}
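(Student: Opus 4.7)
The plan is to derive Theorem~\ref{cor:general-completeness-coalgebras2} as a direct corollary of Theorem~\ref{cor:general-completeness-coalgebras}, by choosing the appropriate shape $I$ and the appropriate class $\mathcal{M}$ of cones. Concretely, take $I = \{1 \rightrightarrows 2\}$ so that limits of shape $I$ in $\Coalg(F)$ are precisely equalisers, let $E$ be the class of epimorphisms of $\catC$, and let $M$ be the class of regular monomorphisms.

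Next, I would invoke Remark~\ref{rem:arrows-to-cones}(2) verbatim with this choice of $I$, $E$ and $M$ to produce the class
\[
\mathcal{M} = \bigl\{\, (f_i:X\to D(i))_{i\in I} \,\mid\, f_1\in M \,\bigr\}
\]
of cones for $I$. The remark asserts that every cone for $I$ in $\catC$ admits an $(E,\mathcal{M})$-factorisation and that $\catC$ is $\mathcal{M}$-wellpowered; here both conclusions rest exactly on the hypotheses of the theorem, namely the (Epi, RegMono)-factorisation structure on morphisms and the assumption that $\catC$ is regularly wellpowered.

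It remains to verify that $F$ sends $\mathcal{M}$-cones to $\mathcal{M}$-cones. This is where the hypothesis that $F$ preserves regular monomorphisms is used: if $(f_1,f_2)$ is a cone with $f_1$ a regular monomorphism, then $(Ff_1,Ff_2)$ is a cone with $Ff_1$ still a regular monomorphism, hence in $\mathcal{M}$. With these three conditions in place — $(E,\mathcal{M})$-structured for cones of shape $I$, $\mathcal{M}$-wellpoweredness, and $F$-stability of $\mathcal{M}$ — Theorem~\ref{cor:general-completeness-coalgebras} applies and yields that $\Coalg(F)$ has limits of shape $I$, i.e.\ equalisers.

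I do not anticipate any genuine obstacle: the work has essentially been done in Lemma~\ref{lem:complete}, Remark~\ref{rem:arrows-to-cones}, and Theorem~\ref{cor:general-completeness-coalgebras}. The only point that deserves a brief explicit mention is the verification that $F$-preservation of regular monomorphisms translates into preservation of the class $\mathcal{M}$ of cones, since cones in $\mathcal{M}$ are characterised by the single leg $f_1$ being a regular mono and the other legs are unconstrained.
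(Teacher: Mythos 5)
Your proposal is correct and follows exactly the paper's own argument: instantiate Theorem~\ref{cor:general-completeness-coalgebras} with $I=\{1\rightrightarrows 2\}$, build $\mathcal{M}$ via Remark~\ref{rem:arrows-to-cones}(2) from the (Epi, RegMono)-factorisation structure, and check that preservation of regular monomorphisms gives $F$-stability of $\mathcal{M}$. The only difference is that you spell out the ``simple reasoning'' the paper leaves implicit, which is fine.
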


\begin{proof}
  Let $I = \{ 1 \rightrightarrows 2 \}$ and use Remark
  \ref{rem:arrows-to-cones}(2) to provide a
  $(E,\mathcal{M})$-factorisation system for cones for $I$.  The
  category $\catC$ is clearly $\mathcal{M}$-wellpowered and by a
  simple reasoning one shows that $F$ sends cones in $\mathcal{M}$ to
  cones in $\mathcal{M}$. Now apply Theorem
  \ref{cor:general-completeness-coalgebras}.
\end{proof}

The last result shows that Hughes' assumption of $\catC$ having
equalisers is not necessary. Another interesting point is the ability
that we gain to reason not just about equalisers but any type of
limit. We will take advantage of this generalisation in the next
section (see Corollary~\ref{cor:Vietoris-pol-preserves-mono-cones}).

Note also that the following corollaries can be obtained almost for
free.

\begin{corollary}
  Let $F: \SET \to \SET$ be a functor that preserves monocones of a
  certain type.  Then the category $\Coalg(F)$ has limits of the same
  type.
\end{corollary}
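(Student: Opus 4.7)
The plan is to derive this corollary as a direct instance of Theorem~\ref{cor:general-completeness-coalgebras} applied to $\catC = \SET$. First I fix a small category $I$ encoding the ``certain type'' of limits/monocones in question, so that the hypothesis becomes: $F$ sends every cone $(f_i:X\to D(i))_{i\in I}$ whose induced map $\langle f_i\rangle:X\to\prod_i D(i)$ is injective to another such cone.

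Next I supply $\SET$ with the required structure. The category $\SET$ is cocomplete, has products, and carries the (Epi, Mono)-factorisation structure given by surjection followed by inclusion. Consequently Remark~\ref{rem:arrows-to-cones}(1) applies with $E$ the class of surjections and $M$ the class of injections: defining
\[
  \mathcal{M} = \left\{(f_i:X\to D(i))_{i\in I} \mid \langle f_i\rangle_{i\in I}:X\to\prod_{i\in I}D(i) \text{ is injective}\right\},
\]
every cone for $I$ in $\SET$ admits an $(E,\mathcal{M})$-factorisation and $\SET$ is $\mathcal{M}$-wellpowered (since up to isomorphism there is only a set of subsets of $\prod_i D(i)$).

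The class $\mathcal{M}$ is exactly the class of monocones of shape $I$, so by hypothesis $F$ sends cones in $\mathcal{M}$ to cones in $\mathcal{M}$. All hypotheses of Theorem~\ref{cor:general-completeness-coalgebras} are therefore met, and one concludes that $\Coalg(F)$ has limits of shape $I$. There is no real obstacle here; the only subtlety worth flagging is the identification of ``monocones'' in $\SET$ with cones $(f_i)$ such that $\langle f_i\rangle$ is injective, which is immediate since a cone is mono precisely when its legs jointly separate points.
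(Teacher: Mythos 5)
Your proposal is correct and follows exactly the route the paper intends: the corollary is stated immediately after Theorem~\ref{cor:general-completeness-coalgebras} as something obtainable ``almost for free'', and your instantiation with $\catC=\SET$, $E$ the surjections, and $\mathcal{M}$ the monocones via Remark~\ref{rem:arrows-to-cones}(1) is precisely that argument. The identification of monocones in $\SET$ with jointly injective families is the right (and only) detail to check, and you handle it correctly.
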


Recall that $\TOP$ is an (Epi,initial monocones)-category and an
(RegEpi,monocones)-category (\cf\ \cite[Examples
15.3~(6)]{AHS90}). The following result can then be derived.

\begin{corollary}
  Let $F: \TOP \to \TOP$ be a functor that preserves either small
  monocones or small initial monocones of a certain type. Then the
  category $\Coalg(F)$ has limits of the same type.
\end{corollary}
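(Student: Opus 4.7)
The plan is to invoke Theorem~\ref{cor:general-completeness-coalgebras} once for each shape $I$ of the prescribed type, feeding it one of the two factorisation structures on $\TOP$ recalled immediately before the statement. So fix such a shape $I$, and recall that $\TOP$ is cocomplete, has products, and is wellpowered with respect to monomorphisms.

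In the case where $F$ preserves small monocones of shape $I$, I would take $E$ to be the class of regular epimorphisms of $\TOP$ and $\mathcal{M}$ to be the class of monocones for $I$. Remark~\ref{rem:arrows-to-cones}(1), applied to the arrow-level (RegEpi, Mono) factorisation, upgrades it to an $(E,\mathcal{M})$-factorisation structure for cones of shape $I$ and simultaneously yields $\mathcal{M}$-wellpoweredness. The hypothesis on $F$ is then exactly that it sends cones in $\mathcal{M}$ to cones in $\mathcal{M}$, so Theorem~\ref{cor:general-completeness-coalgebras} produces limits of shape $I$ in $\Coalg(F)$.

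For the case where $F$ preserves small initial monocones of shape $I$ the argument is strictly parallel: take $E$ to be the class of epimorphisms of $\TOP$ and $\mathcal{M}$ to be the class of initial monocones of shape $I$. Here I would use that the mediating map $\langle f_i\rangle:X\to\prod_i D(i)$ of such a cone is an embedding, i.e.\ an initial (equivalently, regular) monomorphism, so that Remark~\ref{rem:arrows-to-cones}(1) once again supplies both the $(E,\mathcal{M})$-factorisation structure and the wellpoweredness, while preservation of initial monocones by $F$ covers the remaining hypothesis of Theorem~\ref{cor:general-completeness-coalgebras}.

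The only step that deserves a slow look is $\mathcal{M}$-wellpoweredness: for a fixed $D:I\to\TOP$ one must confirm that there are, up to isomorphism, only set-many cones in $\mathcal{M}$, which reduces to wellpoweredness of $\TOP$ with respect to (initial) monomorphisms into the product $\prod_i D(i)$. Since this is classical, I do not expect any real obstacle in carrying out the plan; the work is essentially to verify that the ingredients assembled in Remark~\ref{rem:arrows-to-cones}(1) and Theorem~\ref{cor:general-completeness-coalgebras} line up exactly with what $\TOP$ and the hypothesis on $F$ provide.
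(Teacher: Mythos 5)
Your argument is correct and matches the paper's intended derivation: the corollary is obtained by feeding the (RegEpi, monocones) and (Epi, initial monocones) factorisation structures on $\TOP$ into Theorem~\ref{cor:general-completeness-coalgebras}, with the preservation hypothesis on $F$ supplying exactly the condition that $\mathcal{M}$-cones go to $\mathcal{M}$-cones. Your extra step of rebuilding the cone-level structures from the arrow-level ones via Remark~\ref{rem:arrows-to-cones}(1) (using that a cone is an (initial) monocone iff its mediating map into the product is an (initial) monomorphism) is a harmless elaboration of what the paper simply cites from the literature.
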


\subsection{Vietoris polynominal functors}

\noindent
Although traditionally considered in $\SET$ (\eg\
\cite{alex09,introcoalg}), the notion of a polynomial functor can be
formally defined at a more generic level.

\begin{definition}
  \label{defn:poly}
  Let $\catC$ be a category with (co)products. We call a functor
  $F : \catC \to \catC$ \df{polynomial} if it can be recursively
  defined from the grammar below
  \begin{flalign*}
    F ::= F + F \mid F \times F \mid A \mid \Id
  \end{flalign*}
  where $A$ corresponds to an object of $\catC$.
\end{definition}

\begin{remark}\label{rem:pol_lim}
  Alternatively, one can define the class of polynomial functors as
  the smallest class of functors $F:\catC\to\catC$ that contains the
  identity functor, all constant functors, and is closed under
  products and sums of functors. Here, for functors
  $F,G:\catC\to\catC$, the product of $F$ and $G$, and the sum of $F$
  and $G$ are, respectively, the composites
  \[
    \catC\xrightarrow{\,\langle
      F,G\rangle\,}\catC\times\catC\xrightarrow{\,\times\,}\catC,
    \hspace{0.2cm} \text{and} \hspace{0.2cm}
    \catC\xrightarrow{\,\langle
      F,G\rangle\,}\catC\times\catC\xrightarrow{\,+\,}\catC.
  \]
  \noindent
  Note that if the functors $F,G : \catC \to \catC$ preserve limits of
  a certain type the functor $F \times G : \catC \to \catC$ preserves
  limits of the same type as well. Note also that
\end{remark}

\begin{proposition}
  The functor $(+) : \TOP \times \TOP \to \TOP$ preserves connected
  limits.
\end{proposition}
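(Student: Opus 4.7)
My plan is to show that for every connected diagram $D:I\to\TOP\times\TOP$ with $D(i)=(X_i,Y_i)$ the canonical comparison
\[
  \alpha:\lim_{i\in I} X_i \,+\, \lim_{i\in I} Y_i \longrightarrow \lim_{i\in I}(X_i+Y_i)
\]
is a homeomorphism. By Theorem~\ref{theo_gen_init} applied to the faithful, limit-preserving forgetful functor $U:\TOP\to\SET$, it suffices to verify two things about the cone $(p_i+q_i)_{i\in I}$ (where $p_i$ and $q_i$ are the projections of the respective limits): first, that it is a limit in $\SET$, and second, that it is initial with respect to $U$.

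For the limit-in-$\SET$ part, I would exploit the connectedness of $I$. An element of $\lim_i(X_i+Y_i)$ is a compatible family $(z_i)_{i\in I}$ with $z_i\in X_i+Y_i$. Since every transition $D(j\to i)$ is a coproduct $f+g:X_j+Y_j\to X_i+Y_i$ and thus preserves each summand, the function $\sigma:I\to\{0,1\}$ recording in which summand $z_i$ lies is constant along every arrow of $I$. Connectedness forces $\sigma$ to be globally constant, so the family is either of the form $(p_i(x))_i$ for a unique $x\in\lim X_i$ or $(q_i(y))_i$ for a unique $y\in\lim Y_i$. This yields the required bijection with $\lim X_i + \lim Y_i$.

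For initiality, I would compare the coproduct topology on $\lim X_i + \lim Y_i$ with the initial topology induced by the family $(p_i+q_i)_{i\in I}$. The latter has subbasis
\[
  \bigl\{(p_i+q_i)^{-1}(U_X\sqcup U_Y)\mid i\in I,\ U_X\subseteq X_i,\ U_Y\subseteq Y_i\text{ open}\bigr\},
\]
and each such preimage decomposes as $p_i^{-1}(U_X)\sqcup q_i^{-1}(U_Y)$, which is manifestly open in the coproduct topology, giving one inclusion. Conversely, specialising to $U_Y=\emptyset$ (respectively $U_X=\emptyset$) realises every subbasic open of $\lim X_i$ (respectively $\lim Y_i$), viewed inside $\lim X_i + \lim Y_i$, as a subbasic open of the initial topology, so the coproduct topology is coarser than the initial one. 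The two topologies therefore coincide. The only delicate ingredient is the connectedness argument in the set-level step; once that pins down which summand each $z_i$ lives in, the topological comparison is essentially bookkeeping on the open sets of $X_i+Y_i$.
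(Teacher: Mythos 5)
Your proof is correct and follows essentially the same route as the paper: the paper also reduces to Theorem~\ref{theo_gen_init} via the forgetful functor $\TOP\to\SET$, citing as well known the fact that $(+)$ preserves connected limits in $\SET$ and observing that $(+)$ preserves initial cones. You have merely written out in full the two ingredients the paper leaves implicit --- the summand-constancy argument along a connected index category, and the comparison of the coproduct topology with the initial topology.
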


\begin{proof}
  It is well-known that the functor $(+) : \SET \times \SET \to \SET$
  preserves connected limits. Then observe that
  $(+) : \TOP \times \TOP \to \TOP$ preserves initial cones and apply
  Theorem \ref{theo_gen_init}.
\end{proof}

\begin{corollary}\label{pol_conn_lim}
  If the functors $F,G : \TOP \to \TOP$ preserve connected limits the
  functor $F + G : \TOP \to \TOP$ preserves connected limits as well.
\end{corollary}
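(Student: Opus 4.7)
The plan is to exploit the decomposition of $F+G$ as the composite
\[
\TOP \xrightarrow{\;\langle F,G\rangle\;} \TOP\times\TOP \xrightarrow{\;+\;} \TOP,
\]
recorded in Remark~\ref{rem:pol_lim}, and show that each factor preserves connected limits. The right-hand factor has just been handled by the preceding proposition, so the whole argument reduces to handling the pairing $\langle F,G\rangle$ and then concluding by composition.

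First I would argue that $\langle F,G\rangle$ preserves connected limits. Since limits in the product category $\TOP\times\TOP$ are computed componentwise, for any diagram $D:I\to\TOP$ with a limit cone $(p_i:L\to D(i))_{i\in I}$, a limit of $\langle F,G\rangle\circ D$ is given by the pair of limit cones of $F\circ D$ and $G\circ D$. By hypothesis, $F$ and $G$ preserve connected limits, so when $I$ is connected the cones $(Fp_i:FL\to FD(i))_{i\in I}$ and $(Gp_i:GL\to GD(i))_{i\in I}$ are limit cones, hence their pairing is a limit of $\langle F,G\rangle\circ D$ in $\TOP\times\TOP$, which is exactly $\langle F,G\rangle$ applied to the original limit cone.

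Second, preservation of connected limits is plainly closed under composition of functors: if $H_1:\catA\to\catB$ and $H_2:\catB\to\catC$ both preserve limits of connected diagrams, then so does $H_2\circ H_1$, because the image of a connected limit cone under $H_1$ is a connected limit cone in $\catB$, which $H_2$ again sends to a limit cone. Applying this to $H_1=\langle F,G\rangle$ and $H_2=(+)$, together with the preceding proposition, yields that $F+G$ preserves connected limits.

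No serious obstacle is expected here; the statement is essentially bookkeeping once the proposition for $(+)$ is in hand. The only point to be a little careful about is that we are genuinely using \emph{connectedness} of $I$ when invoking the proposition on $(+)$, which fails in general for empty or disconnected diagrams (products do not distribute over coproducts on the nose in $\TOP$), but this is exactly the hypothesis on the shape of the diagram.
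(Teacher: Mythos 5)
Your proof is correct and follows exactly the route the paper intends: the corollary is stated without an explicit proof precisely because it is meant to follow from the decomposition $F+G = (+)\circ\langle F,G\rangle$ of Remark~\ref{rem:pol_lim} together with the preceding proposition on $(+)$, which is what you spell out. Your componentwise argument for $\langle F,G\rangle$ and the closure of connected-limit preservation under composition are the implicit bookkeeping the paper leaves to the reader.
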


\noindent
In the set-based context, the powerset functor
$\mathscr{P} : \SET \to \SET$ is traditionally used in conjunction
with polynomial functors to bring non-deterministic behaviour into the
scene, the resulting functor being a so called \df{Kripke polynomial
  functor}. The situation is more complex in the topological context
because a number of functors can be seen as `analogues' of the
powerset. Most of them have their roots in the Hausdorff metric
(\cf~\cite{Pom05,Hau14}) and in Vietoris' ``Bereiche zweiter Ordnung''
\cite{Vie22}. Informally, we call them \df{Vietoris functors}.  The
remainder of this section provides some details about them.

Consider a compact Hausdorff space $X$, the \df{classic Vietoris
  space} $\Vie X$ \cite{Vie22} consists of the set of all closed
subsets of $X$, \ie
\begin{flalign*}
  \Vie X = \{K \subseteq X \mid K\ \text{is closed}\}
\end{flalign*}
\noindent
equipped with the `hit-and-miss topology' generated by the subbasis of
sets of the form
\begin{flalign*}
  & U^\Diamond = \{ A \in \Vie X \mid A \cap U \not = \varnothing \}
  \qquad (``A\ \text{hits}\ U")\>, \\
  & U^\Box = \{ A \in \Vie X \mid A \subseteq U \} \qquad (``A\
  \text{misses}\ X\setminus U"),
\end{flalign*}
where $U \subseteq X$ is open. Nowadays there are several well-studied
variants of this archetype that give rise to endofunctors over
specific subcategories of $\TOP$. The interested reader will find in
\cite{Mic51} and \cite{CT97} more details about these
constructions. For now, we concentrate on two particular cases,
described below.

\begin{examples}
  \phantomsection
  \label{exs:Vietoris-functors}
  \begin{enumerate}
  \item\label{item:compact} For a topological space $X$, define
    $\Vie X=\{ K \subseteq X \mid K \text{ is compact } \}$ with the
    topology generated by the sets $U^\Box$ and $U^\Diamond$, with $U$
    ranging over all open subsets $U\subseteq X$. Then, given a
    continuous map $f : X \rightarrow Y$, define
    $\Vie f : \Vie X \rightarrow \Vie Y$ as $ \Vie f (A) = f [A]$. We
    call this variant \df{compact Vietoris functor}. It is well-known
    that $\Vie X$ is compact Hausdorff whenever $X$ is. In fact, for
    compact Hausdorff spaces this construction coincides with the
    classic one \cite{Vie22}.
  \item\label{item:closed} For a topological space $X$, define
    $\Vie X=\{ K \subseteq X \> | \> K \ \text{is closed} \}$ with the
    topology generated by the sets $U^\Diamond$, with $U$ ranging over
    all open subsets $U\subseteq X$. Then, given a continuous map
    $f : X \rightarrow Y$, define $\Vie f : \Vie X \rightarrow \Vie Y$
    as $\Vie f (A) = \overline{f [A]}$, where $\overline{f [A]}$
    denotes the closure of $f[A]$. This variant is called \df{lower
      Vietoris functor}.
  \end{enumerate}
\end{examples}

\begin{remark} 
  The classic Vietoris construction, with closed sets, does not define
  an obvious functor on $\TOP$. That is, adding the sets $U^\Box$ to
  the subbasis of
  Example~\ref{exs:Vietoris-functors}~(\ref{item:closed}) does not
  define a functor. To see why, consider the set $\{1,2,3\}$ equipped
  with the topology generated by the sets $\{1,2\}$ and $\{2,3\}$. For
  the subspace embedding $i:\{1,2\}\to
  \{1,2,3\}$,~$(Vi)^{-1} [\{1,2\}^\Box]=\{\varnothing,
  \{1\}\}$. However, every open set of $\Vie \{1,2\}$ that contains
  $\{1\}$ contains $\{1,2\}$.
\end{remark}

\noindent
A number of projects on (coalgebraic) modal logic studied the compact
Vietoris functor in the category of Stone spaces
(\eg~\cite{Kupke:2004,venema2014}) and in the category of compact
Hausdorff spaces \cite{BBH12}. The second case was explored by
\cite{CLP91,Pet96,BKR07} in the context of Priestley spaces.

\begin{definition}
  Let $\Vie : \TOP \to \TOP$ be the lower Vietoris functor.  We call a
  functor $F : \TOP \to \TOP$ \df{lower Vietoris polynomial} if it can
  be recursively defined from the grammar below.
  \begin{flalign*}
    F ::= F + F \mid F \times F \mid A \mid \Id \mid \Vie
  \end{flalign*}
  Similarly, if we consider the compact Vietoris functor
  $\Vie : \TOP \to \TOP$ \emph{in lieu} of the lower one, then we
  speak of a \df{compact Vietoris polynomial} functor.
\end{definition}

\section{On limits in categories of Vietoris coalgebras}
\label{Sec_limits}
\subsection{Polynomial functors in $\TOP$}
\label{Subsec_Pol_func}
Using standard results, we now show that for a polynomial functor
$F : \TOP \to \TOP$ the associated category of coalgebras $\Coalg(F)$
is complete. A useful fact for this proof is that the category $\TOP$
is (co)complete (\cf\ \cite{AHS90}). Moreover, note that

\begin{theorem}
  \label{theo_conn}
  All polynomial functors $F : \TOP \to \TOP$ preserve connected
  limits.
\end{theorem}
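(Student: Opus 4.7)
The plan is to proceed by structural induction on the grammar defining polynomial functors in Definition~\ref{defn:poly}. Because every case of the induction is essentially handled by results already in the excerpt, the proof reduces to verifying the two base cases and citing the closure properties for $+$ and $\times$.

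For the base cases, the identity functor $\Id : \TOP \to \TOP$ preserves all limits, hence in particular connected ones. For a constant functor with value $A \in \TOP$ applied to a diagram $D : I \to \TOP$, the image is the constant diagram $K_A : I \to \TOP$. A standard fact about connected indexing categories is that the limit of a constant diagram $K_A : I \to \TOP$ is $A$ itself (the limiting cone being the identity cone), so we have $F(\lim D) = A \cong \lim K_A = \lim F D$ naturally. Thus constant functors preserve connected limits.

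For the inductive step on products, we use the observation in Remark~\ref{rem:pol_lim} that $F \times G$ is the composite $\catC \xrightarrow{\langle F,G\rangle} \catC \times \catC \xrightarrow{\times} \catC$. The bifunctor $\times : \TOP \times \TOP \to \TOP$ preserves all limits (being a right adjoint to the diagonal), and if $F$ and $G$ preserve connected limits then so does $\langle F,G\rangle$, because limits in the product category $\TOP \times \TOP$ are computed componentwise. Hence the composite $F \times G$ preserves connected limits. For the inductive step on sums, Corollary~\ref{pol_conn_lim} gives directly that $F + G$ preserves connected limits whenever $F$ and $G$ do.

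There is no genuine obstacle here; the statement is essentially a bookkeeping consequence of the preceding corollary, once the (mildly subtle) remark about constant functors on connected diagrams is made explicit. The one place where minor care is warranted is the constant-functor case, since constant functors do \emph{not} preserve general limits (in particular they fail to preserve the empty limit, \ie\ terminal objects, in general), which is precisely why the restriction to \emph{connected} diagrams is essential.
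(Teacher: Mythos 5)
Your proof is correct and follows essentially the same route as the paper's: identity and constant functors as base cases (the latter using connectedness of the indexing category), Remark~\ref{rem:pol_lim} for products, and Corollary~\ref{pol_conn_lim} for sums. The paper merely compresses these steps, leaving implicit the observation about constant diagrams over connected categories that you spell out.
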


\begin{proof}
  Clearly the identity functor $\Id : \TOP \to \TOP$ preserves all
  limits, and the constant functor $A : \TOP \to \TOP$ trivially
  preserves connected limits. The claim now follows from Remark
  \ref{rem:pol_lim} and Corollary \ref{pol_conn_lim}.
\end{proof}

\noindent
From the theorem above one can derive the following results in a
straightforward manner.

\begin{proposition}
  \label{prop_regmono}
  All polynomial functors $F: \TOP \to \TOP$ preserve regular
  monomorphisms.
\end{proposition}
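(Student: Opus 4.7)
The plan is to reduce this to the preceding theorem. Recall that a regular monomorphism in any category is, by definition, a morphism arising as the equaliser of some parallel pair. In the examples following the definition of connected category, equalisers are explicitly listed as connected limits, so any functor that preserves connected limits in particular preserves equalisers.

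First, I would take a regular monomorphism $m : A \to B$ in $\TOP$, together with a parallel pair $f, g : B \to C$ for which $m$ is the equaliser. Applying Theorem~\ref{theo_conn}, the polynomial functor $F : \TOP \to \TOP$ preserves connected limits and hence this equaliser, so $Fm : FA \to FB$ is the equaliser of $Ff, Fg : FB \to FC$ in $\TOP$. This exhibits $Fm$ as a regular monomorphism.

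There is no real obstacle here: the entire content is already packaged in Theorem~\ref{theo_conn} together with the observation that regular monos are a special case of connected limits. The only thing worth flagging is the implicit use of the fact that equalisers in $\TOP$ exist and coincide with the usual subspace equalisers, which is standard.
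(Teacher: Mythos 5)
Your argument is correct and is essentially identical to the paper's own proof: both reduce the statement to Theorem~\ref{theo_conn} by noting that a regular monomorphism is an equaliser and that equaliser diagrams are connected. No gaps.
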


\begin{proof}
  First note that the diagrams associated with equalisers are
  connected. Then, recall that a regular monomorphism is an equaliser
  of a pair of morphisms.
\end{proof}

\begin{theorem}
  \label{theo_pres_omega}
  All polynomial functors $F : \TOP \to \TOP$ are covarietors.
\end{theorem}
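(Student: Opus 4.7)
The plan is to show, for every space $X \in \TOP$, the existence of a cofree $F$-coalgebra on $X$; assembled into a functor $\TOP \to \Coalg(F)$, this provides a right adjoint to the forgetful $U : \Coalg(F) \to \TOP$ and thereby makes $F$ a covarietor. The standard reduction is as follows: for $X \in \TOP$ introduce the auxiliary endofunctor
$$
H_X \;=\; X \times F(-) \;:\; \TOP \longrightarrow \TOP.
$$
An $H_X$-coalgebra is precisely a pair consisting of an $F$-coalgebra $(A, A \to FA)$ together with a continuous map $A \to X$, and $H_X$-coalgebra homomorphisms are exactly those $F$-coalgebra homomorphisms that commute with the projections to $X$. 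Hence a final $H_X$-coalgebra is the same thing as a cofree $F$-coalgebra on $X$.

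Next, I would verify that $H_X$ satisfies the hypotheses of Theorem~\ref{Th:omega}. The category $\TOP$ is complete and possesses the one-point space as terminal object. The constant-times-identity functor $X \times (-)$ is polynomial (clause $A \times \Id$ of Definition~\ref{defn:poly}) and $F$ is polynomial by assumption, so by Theorem~\ref{theo_conn} both preserve connected limits; consequently their composite $H_X$ preserves connected limits as well. In particular, $H_X$ preserves the codirected $\omega^{\op}$-limit
$$
1 \longleftarrow H_X 1 \longleftarrow H_X^2 1 \longleftarrow \cdots,
$$
so that Theorem~\ref{Th:omega} yields a final $H_X$-coalgebra $GX$.

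By the reduction in the first paragraph, $GX$ is the cofree $F$-coalgebra on $X$. The assignment $X \mapsto GX$ extends canonically to a functor $G : \TOP \to \Coalg(F)$, and the bijection
$$
\Coalg(F)(A, GX) \;\cong\; \TOP(UA, X)
$$
is precisely the universal property of the final $H_X$-coalgebra. Hence $G$ is right adjoint to $U$ and $F$ is a covarietor. There is no genuine obstacle here: the work is carried entirely by Theorem~\ref{theo_conn} (polynomial functors preserve connected limits, in particular codirected ones) together with the terminal-sequence construction of Theorem~\ref{Th:omega}; the only point requiring a moment's attention is that preservation of connected limits is inherited by the composite $H_X = (X \times (-)) \circ F$, which is immediate.
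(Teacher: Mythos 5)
Your proof is correct and follows essentially the same route as the paper: both arguments rest on Theorem~\ref{theo_conn} to get preservation of codirected (in particular $\omega^{\op}$) limits, after which the paper simply cites Barr's Theorem~2.1 while you inline its standard proof via the comma-category reduction $\Coalg(X\times F(-))\cong U\downarrow X$ and the terminal-sequence construction of Theorem~\ref{Th:omega}. No gaps.
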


\begin{proof}
  Since a polynomial functor $F: \TOP \to \TOP$ preserves connected
  limits (Theorem \ref{theo_conn}) it preserves the codirected ones as
  well.  The claim is then a direct consequence of \cite[Theorem
  2.1]{Bar93}.
\end{proof}

\noindent
In regard to equalisers in $\Coalg(F)$, one can easily show that the
necessary requirements to apply Theorem \ref{theo_hughes} are met.
Actually, it is well-known that the category $\TOP$ is regularly
wellpowered (\cf\ \cite{AHS90}), and we already saw that it is
(co)complete. Moreover, it has an (Epi, RegMono)-factorisation
structure (\cf\ \cite{AHS90}).  Therefore,

\begin{corollary}
  \label{cor_eq}
  If $F : \TOP \to \TOP$ is a polynomial functor, the category
  $\Coalg(F)$ has equalisers.
\end{corollary}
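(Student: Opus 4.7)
The plan is simply to invoke Hughes' theorem (Theorem \ref{theo_hughes}) — or, equivalently, the refined Theorem \ref{cor:general-completeness-coalgebras2} — so the proof reduces to a verification of hypotheses, all of which have already been assembled in the preceding discussion.

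On the side of the base category, everything needed has been collected in the paragraph preceding the statement: $\TOP$ is regularly wellpowered, it is complete (hence has equalisers) as well as cocomplete, and it carries an (Epi, RegMono)-factorisation structure. These are standard facts cited from \cite{AHS90} and can be invoked directly, without further argument.

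On the side of the functor, the only nontrivial condition in Hughes' theorem that depends on $F$ is preservation of regular monomorphisms, and this is exactly the content of Proposition \ref{prop_regmono}. That proposition has itself been obtained almost for free from Theorem \ref{theo_conn}, since a regular monomorphism is the equaliser of a parallel pair — a connected limit — so it is preserved by any polynomial $F:\TOP\to\TOP$. With both sides of the hypotheses in place, Hughes' theorem delivers equalisers in $\Coalg(F)$, which is precisely the claim. There is no real obstacle here; the substantive work has been distributed among Theorem \ref{theo_conn}, Proposition \ref{prop_regmono}, and the cited properties of $\TOP$, so the corollary is a pure bookkeeping step.
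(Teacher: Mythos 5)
Your proposal matches the paper's own proof: the corollary is obtained by applying Hughes' theorem (Theorem \ref{theo_hughes}) to $\TOP$, whose required properties are recorded in the preceding paragraph, together with Proposition \ref{prop_regmono} for preservation of regular monomorphisms. Nothing further is needed.
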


\begin{proof}
  A direct consequence of Theorem \ref{theo_hughes} and Proposition
  \ref{prop_regmono}.
\end{proof}

\begin{theorem}
  \label{theo_comp}
  If $F : \TOP \to \TOP$ is a polynomial functor, the category
  $\Coalg(F)$ is complete.
\end{theorem}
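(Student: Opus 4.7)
The plan is to apply Theorem \ref{theo_linton} directly, since all three of its hypotheses have effectively been assembled in the preceding results. First I would note that $\TOP$ is a complete category, which is a standard fact recorded in \cite{AHS90}. Next, I would invoke Theorem \ref{theo_pres_omega} to conclude that the polynomial functor $F:\TOP\to\TOP$ is a covarietor, so the forgetful functor $\Coalg(F)\to\TOP$ is a left adjoint. Finally, Corollary \ref{cor_eq} provides that $\Coalg(F)$ has equalisers. With these three ingredients in hand, Theorem \ref{theo_linton} immediately yields completeness of $\Coalg(F)$.

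Alternatively, one could bypass Theorem \ref{theo_linton} and appeal directly to the corollary of Hughes' theorem stated earlier: $\TOP$ is cocomplete, regularly wellpowered, has an (Epi, RegMono)-factorisation structure, and by Proposition \ref{prop_regmono} the functor $F$ preserves regular monomorphisms; combined with the covarietor property from Theorem \ref{theo_pres_omega}, this also gives completeness of $\Coalg(F)$. Either route is essentially a one-line assembly of results already established.

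There is no real obstacle here, since the hard work has been done in Theorem \ref{theo_conn} (preservation of connected, hence codirected, limits, which underlies the covarietor property via \cite{Bar93}) and in Proposition \ref{prop_regmono} (preservation of regular monomorphisms, which underlies the existence of equalisers in $\Coalg(F)$). The proof itself amounts to citing Theorem \ref{theo_linton} together with the completeness of $\TOP$, Theorem \ref{theo_pres_omega}, and Corollary \ref{cor_eq}.
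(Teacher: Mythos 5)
Your proposal is correct and follows exactly the paper's own argument: establish that $F$ is a covarietor (Theorem \ref{theo_pres_omega}), that $\Coalg(F)$ has equalisers (Corollary \ref{cor_eq}), and conclude by Theorem \ref{theo_linton}. The alternative route via the corollary of Hughes' theorem is also valid but unnecessary; the main assembly matches the paper precisely.
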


\begin{proof}
  Observe that $F$ is a covarietor (Theorem \ref{theo_pres_omega}),
  and that the category $\Coalg(F)$ has equalisers (Corollary
  \ref{cor_eq}). Then, apply Theorem \ref{theo_linton}.
\end{proof}

We will now use `less standard' results to go further than the
previous theorem. More concretely, we will show that not only is
$\Coalg(F)$ complete but also that there is a functor with powerful
properties from $\Coalg(F)$ to the analogous category of coalgebras
over $\SET$. 
By going further we also mean that the results that we will introduce
next may be used in categories different than $\TOP$, prime examples
are the category of preordered sets $\ORD$ and the category of
pseudometric spaces $\PMET$.

The general idea is that starting with a category $\catB$ with good
properties and assuming the existence of a functor $\catA \to \catB$
that lifts these properties to a category $\catA$, there will often be
a functor $\Coalg(\overF) \to \Coalg(F)$ with the same lifting
properties than $\catA \to \catB$ for functors
$\overF : \catA \to \catA$, $F : \catB \to \catB$ making the diagram
below commute.
\[
  \xymatrix{
    \catA \ar[r]^{\overF} \ar[d]_{U} & \catA \ar[d]^{U} \\
    \catB \ar[r]_{F} & \catB }
\]

The following definition recalls the notion of topological functor,
which lifts several properties of a category.
\begin{definition}
  A functor $U : \catA \to \catB$ is called \df{topological} if every
  cone $\calC = (X \to U X_i)_{i \in I}$ in $\catB$ has a $U$-initial
  lifting, \ie\ a initial cone $\calD = (A \to X_i)_{i \in I}$ with
  respect to $U : \catA \to \catB$ such that $\calC = U \calD$.
\end{definition}

\begin{remark}
  Every topological functor is both left and right adjoint, lifts
  limits and certain types of factorisations (see \cite{Ada05}).
\end{remark}

\begin{proposition}
  Consider two categories $\catA, \catB$ a functor
  $U : \catA \to \catB$, endofunctors $\overline{F} : \catA \to \catA$,
  $F : \catB \to \catB$, and a natural transformation
  \begin{flalign*}
    \delta : U \overline{F} \to F U.
  \end{flalign*}
  \noindent
  Then, there is a functor $\overU : \Coalg(F) \to \Coalg(\overF)$
  defined by the equations
  \begin{flalign*}
    & \overU (X,c) = (U X, \delta_X \comp U c), \hspace{1cm} \overU f
    = U f
  \end{flalign*}
  that makes the diagram below commute.
  \[
    \xymatrix{
      \Coalg(\overF) \ar[r] \ar[d]_{\overU} & \catA \ar[d]^{U} \\
      \Coalg(F) \ar[r] & \catB }
  \]
\end{proposition}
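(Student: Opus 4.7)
The proposition is a routine verification, but it rests crucially on naturality of $\delta$. I would structure the proof in four small steps, addressing in order well-definedness on objects, well-definedness on morphisms, functoriality, and commutativity of the square.

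First I would check that $\overU$ is well-defined on objects. Given an $\overF$-coalgebra $c:X\to\overF X$, the map $Uc:UX\to U\overF X$ composes with $\delta_X:U\overF X\to FUX$ to produce a morphism $\delta_X\comp Uc: UX\to FUX$ in $\catB$, which is by definition an $F$-coalgebra structure on $UX$. So the assignment $(X,c)\mapsto(UX,\delta_X\comp Uc)$ genuinely lands in $\Coalg(F)$.

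The main (and essentially only nontrivial) step is to verify that for any $\overF$-coalgebra morphism $f:(X,c)\to(Y,d)$, the underlying morphism $Uf$ is a morphism from $(UX,\delta_X\comp Uc)$ to $(UY,\delta_Y\comp Ud)$ in $\Coalg(F)$. Unfolding the defining condition, I need
\[
  FUf\comp(\delta_X\comp Uc) \;=\; (\delta_Y\comp Ud)\comp Uf.
\]
I would first apply naturality of $\delta$ at $f$, which gives $FUf\comp\delta_X=\delta_Y\comp U\overF f$. Then I would use functoriality of $U$ together with the hypothesis that $f$ is a coalgebra morphism, $\overF f\comp c = d\comp f$, to rewrite $U\overF f\comp Uc = U(\overF f\comp c) = U(d\comp f) = Ud\comp Uf$. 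Combining these two rewrites gives exactly the desired equation, so $Uf$ is indeed an $F$-coalgebra morphism, and $\overU f := Uf$ is well-defined.

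Finally, functoriality of $\overU$ (preservation of identities and composition) is inherited immediately from functoriality of $U$, since the underlying map of $\overU f$ is $Uf$. Commutativity of the square with the forgetful functors $\Coalg(\overF)\to\catA$ and $\Coalg(F)\to\catB$ is then tautological: both composites send $(X,c)$ to $UX$ and $f$ to $Uf$. I do not anticipate any obstacle; the only ingredient beyond unfolding definitions is the naturality square for $\delta$, which is precisely what the hypothesised natural transformation supplies.
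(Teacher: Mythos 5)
Your proof is correct and is exactly the routine verification the paper has in mind (the paper omits the proof entirely, treating it as immediate): the only substantive point is the naturality square for $\delta$ combined with functoriality of $U$, which you identify and use correctly. One small remark: you have silently (and rightly) corrected the typo in the statement --- the functor goes $\Coalg(\overF)\to\Coalg(F)$, as the displayed square and the subsequent proposition in the paper confirm, not $\Coalg(F)\to\Coalg(\overF)$ as the statement's first line says.
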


\noindent
Moreover,
\begin{proposition}
  If the functor $U : \catA \to \catB$ is faithful and the natural
  transformation $\delta : U \overF \to F U$ is mono, then the induced
  functor $\overU : \Coalg(\overF) \to \Coalg(F)$ is faithful.
\end{proposition}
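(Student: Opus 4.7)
The plan is very direct: unravel the definition of $\overU$ on morphisms and reduce faithfulness of $\overU$ to the assumed faithfulness of $U$. Recall that a morphism $f : (X,c) \to (Y,d)$ in $\Coalg(\overF)$ is by definition an $\catA$-morphism $f : X \to Y$ satisfying the coalgebra law $d \comp f = \overF f \comp c$, and that the previous proposition defines $\overU f = U f$ (viewed as a morphism between the $F$-coalgebras $(UX, \delta_X \comp Uc)$ and $(UY, \delta_Y \comp Ud)$). So ``$\overU$ is faithful'' says precisely that for each pair of parallel morphisms $f, g : (X,c) \to (Y,d)$ in $\Coalg(\overF)$, the equality $\overU f = \overU g$ in $\Coalg(F)$ implies $f = g$ in $\Coalg(\overF)$.

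The core of the argument is then a one-line observation: morphisms of $F$-coalgebras are by definition morphisms in the underlying category $\catB$ satisfying a commuting square, so an equality $\overU f = \overU g$ of morphisms of $F$-coalgebras is the same thing as the equality $U f = U g$ of the underlying $\catB$-arrows. Since $U$ is assumed faithful, this forces $f = g$ as $\catA$-arrows, and hence as morphisms of $\overF$-coalgebras. I would write this out in two sentences after stating the unfolding above, with no need to invoke naturality of $\delta$ or the coalgebra structures beyond acknowledging that the previous proposition already guarantees $\overU$ is a well-defined functor.

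The only ``obstacle'' worth flagging is conceptual rather than technical: the hypothesis that $\delta : U\overF \to FU$ is a monomorphic natural transformation is not consumed by the argument for faithfulness; pure faithfulness of $U$ suffices. I would therefore add a brief parenthetical remark that the mono-assumption on $\delta$ is carried along because it is needed for other properties of $\overU$ studied in the sequel (for instance, for $\overU$ to reflect or preserve further structure of coalgebras), rather than for the present statement. No induction, diagram chase, or use of $(E,\mathcal M)$-machinery is required.
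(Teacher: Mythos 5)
Your proof is correct and is essentially the paper's argument: the paper's own proof consists only of the one-line claim that faithfulness is a ``direct consequence'' of $U$ being faithful and $\delta$ being mono. Your side remark is also accurate --- since the forgetful functor $\Coalg(F)\to\catB$ is itself faithful, $\overU f=\overU g$ already gives $Uf=Ug$ and hence $f=g$, so only faithfulness of $U$ is actually consumed; the mono hypothesis on $\delta$, which the paper's proof nominally invokes, is really needed for other properties of such induced functors (e.g.\ fullness, as in Proposition~\ref{prop:mono_inc}) rather than for this one.
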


\begin{proof}
  Direct consequence of the natural transformation
  $\delta : U \overF \to F U$ being mono and the functor
  $U : \catA \to \catB$ being faithful.
\end{proof}


\begin{lemma} \label{lem:induces_top} Assume that the natural
  transformation $\delta : \overF U \to U F$ is mono and $U$ is
  faithful. Let $(f_i : (X,c) \to (Y_i, d_i))_{i \in I}$ be a cone in
  $\Coalg(\overF)$, and $(f_i : X \to Y_i)_{i \in I}$ be initial with
  respect to $U : \catA \to \catB$. Then, the cone
  $(f_i : (X,c) \to (Y_i, d_i))_{i \in I}$ is initial with respect to
  the functor $\overU : \Coalg(\overF) \to \Coalg(F)$.
\end{lemma}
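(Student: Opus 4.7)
The plan is to unwind the definition of initiality with respect to $\overU$ and reduce it to the hypothesis that $(f_i : X \to Y_i)_{i\in I}$ is initial with respect to $U$. Note that the natural transformation at stake is $\delta : U\overF \to FU$ (the direction consistent with the definition $\overU(X,c) = (UX,\, \delta_X \comp Uc)$). Given a test cone $(g_i : (Z,e) \to (Y_i,d_i))_{i\in I}$ in $\Coalg(\overF)$ together with a morphism $h : \overU(Z,e) \to \overU(X,c)$ in $\Coalg(F)$ satisfying $\overU f_i \comp h = \overU g_i$, I would first forget the coalgebra structures: the equation $Uf_i \comp h = Ug_i$ in $\catB$ and the initiality of $(f_i)$ with respect to $U$ produce a unique $\bar h : Z \to X$ in $\catA$ with $U\bar h = h$ and $f_i \comp \bar h = g_i$ for every $i \in I$. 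Faithfulness of $U$ then upgrades the uniqueness of this lift to uniqueness inside $\Coalg(\overF)$, once we have shown that $\bar h$ is indeed a coalgebra morphism.

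The only non-trivial step is therefore the verification that $\overF\bar h \comp e = c \comp \bar h$ in $\catA$. Since $U$ is faithful, it suffices to check this equality after applying $U$, and since $\delta_X$ is a monomorphism (as a component of the mono natural transformation $\delta$), it is enough to check it after further post-composition with $\delta_X$. On the left this gives
\[
  \delta_X \comp U\overF \bar h \comp Ue \;=\; FU\bar h \comp \delta_Z \comp Ue \;=\; Fh \comp \delta_Z \comp Ue,
\]
using naturality of $\delta$ at $\bar h$ together with $U\bar h = h$. On the right we obtain
\[
  \delta_X \comp Uc \comp U\bar h \;=\; \delta_X \comp Uc \comp h.
\]
The equality of these two expressions is precisely the statement that $h$ is a morphism of $F$-coalgebras from $\overU(Z,e) = (UZ,\, \delta_Z \comp Ue)$ to $\overU(X,c) = (UX,\, \delta_X \comp Uc)$, which is part of the data we started with.

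The main obstacle is really just bookkeeping: getting the direction of $\delta$ right, and recognising that monomorphy of $\delta_X$ plus faithfulness of $U$ together let us transport an equation from $\catA$ through $U$ and then through $\delta_X$ without losing information. Once that is set up, the rest is a single naturality square glued to the coalgebra-morphism equation for $h$.
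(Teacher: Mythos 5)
Your proposal is correct and follows essentially the same route as the paper's proof: obtain $\bar h$ from the $U$-initiality of $(f_i)_{i\in I}$ after applying the forgetful functor, then verify the coalgebra-morphism equation by combining the naturality square for $\delta$ with the fact that $h$ is an $F$-coalgebra morphism, cancelling first $\delta_X$ (mono) and then $U$ (faithful). You also rightly read $\delta$ as $U\overF \to FU$, consistent with the definition of $\overU$, correcting the transposed typing in the lemma's statement.
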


\begin{proof}
  Let $(f_i : (X,c) \to (Y_i, d_i))_{i \in I}$ be a cone in
  $\Coalg(\overF)$ and $(f_i : X \to Y_i)_{i \in I}$ be initial with
  respect to $U : \catA \to \catB$. Then, consider another cone
  $(g_i : (Z,e) \to (Y_i, d_i))_{i \in I}$ in $\Coalg(\overF)$ and
  assume that its $\overline{U}$-image is factorised as shown by the
  diagram below.
  \[
    \xymatrix{ \overU (Z,e) \ar[d]_{h} \ar[dr]^{\overU g_i} &
      \\
      \overU (X,c) \ar[r]_{\overU f_i} & \overU (Y_i, d_i) }
  \]
  \noindent
  The forgetful functor $\Coalg(F) \to \catB$ yields the following
  factorisation of the cone $(U g_i : U Z \to U Y_i)_{i \in I}$.
  \[
    \xymatrix{ U Z \ar[d]_{h} \ar[dr]^{U g_i} &
      \\
      U X \ar[r]_{U f_i} & U Y_i }
  \]
  Since the cone $(f_i : X \to Y_i)_{i \in I}$ is initial with respect
  to $U : \catA \to \catB$, there is a unique arrow
  $\overline{h} : Z \to X$ in $\catA$ such that for all $i \in I$ we
  have
  \begin{flalign*}
    g_i = f_i \comp \overline{h}, \hspace{1cm} U \overline{h} = h.
  \end{flalign*}
  \noindent
  It remains to show that the arrow $\overline{h} : Z \to X$ is also a
  coalgebra homomorphism $\overline{h} : (Z,e) \to (X,c)$.  For this,
  consider the diagram below.
  \[
    \xymatrix{ Z \ar[r]^{\overline{h}} \ar[d]_{e} &
      X \ar[r]^{f_i} \ar[d]^{c} & Y_i \ar[d]^{d_i} \\
      \overF Z \ar[r]_{\overF \; \overline{h}} & \overF X
      \ar[r]_{\overF f_i} & \overF Y_i }
  \]
  \noindent
  By assumption, the equation
  $F h \comp \delta_Z \comp U e = \delta_X \comp U c \comp h$ holds.
  Then reason in the following manner.
  \begin{flalign*}
    & F h \comp \delta_Z \comp U e = \delta_X \comp U c \comp h  & \\
    \equiv \pfs & F U \overline{h} \comp \delta_Z \comp U e = \delta_X
    \comp U c \comp U \overline{h} & \\
    \equiv \pfs & \delta_X \comp U \overF \; \overline{h} \comp U e =
    \delta_X \comp U c
    \comp U \overline{h} & \\
    \Rightarrow \pfs & U \overF \; \overline{h} \comp U e =
    U c \comp U \overline{h} & \\
    \equiv \pfs & U (\overF \; \overline{h} \comp e) =
    U (c \comp \overline{h}) & \\
    \Rightarrow \pfs & \overF \; \overline{h} \comp e = c \comp
    \overline{h}
  \end{flalign*}
\end{proof}

\begin{theorem}
  Assume that $\overF : \catA \to \catA$ preserves initial cones and
  that  $U \overF = F U$. Then if the functor $U : \catA \to \catB$ is topological, the
  functor $\overU : \Coalg(\overF) \to \Coalg(F)$ is topological as
  well.
\end{theorem}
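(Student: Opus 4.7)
The plan is to unpack the definition of \emph{topological} and apply the hypotheses in sequence. Fix a cone $\calC = (f_i : (X,c) \to \overU(Y_i,d_i))_{i\in I}$ in $\Coalg(F)$; I need to produce a $\overU$-initial lifting $\bar\calC$ in $\Coalg(\overF)$. Observe that $U\overF=FU$ means the natural transformation $\delta$ of the preceding proposition can be taken to be the identity (in particular mono), and that every topological functor is faithful, so Lemma~\ref{lem:induces_top} is available.

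First, I would lift the underlying cone. Applying the forgetful functor $\Coalg(F)\to\catB$ to $\calC$ yields a cone $(f_i : X \to UY_i)_{i\in I}$ in $\catB$, which by topologicity of $U$ admits a $U$-initial lifting $(\bar f_i : \bar X \to Y_i)_{i\in I}$ in $\catA$ with $U\bar X = X$ and $U\bar f_i = f_i$.

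Next, I would equip $\bar X$ with an $\overF$-coalgebra structure. Since $\overF$ preserves initial cones, the cone $(\overF\bar f_i : \overF \bar X \to \overF Y_i)_{i\in I}$ is again $U$-initial, and under the identification $U\overF=FU$ its $U$-image is $(F f_i : FX \to F U Y_i)_{i\in I}$. Consider the candidate cone $(d_i\comp \bar f_i : \bar X \to \overF Y_i)_{i\in I}$ in $\catA$ together with the $\catB$-morphism $c : X \to FX$. For each $i\in I$,
\[
  U(d_i\comp \bar f_i) = U d_i \comp f_i = F f_i \comp c = U\overF\bar f_i \comp c,
\]
the middle equality because $f_i$ is an $F$-coalgebra morphism. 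By $U$-initiality of $(\overF\bar f_i)$, there is a unique $\catA$-morphism $\bar c : \bar X \to \overF\bar X$ with $U\bar c = c$ and $\overF\bar f_i \comp \bar c = d_i \comp \bar f_i$ for all $i\in I$. The latter equation says precisely that each $\bar f_i$ is a homomorphism $(\bar X,\bar c)\to (Y_i,d_i)$, and the former, together with $U\bar X=X$ and $\delta=\id$, gives $\overU(\bar X,\bar c) = (X,c)$. Hence $\bar\calC := (\bar f_i : (\bar X,\bar c) \to (Y_i,d_i))_{i\in I}$ is a lifting of $\calC$.

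Finally, Lemma~\ref{lem:induces_top} applied to $\bar\calC$ — whose underlying cone is $U$-initial by construction — yields that $\bar\calC$ is $\overU$-initial, completing the proof. The only genuinely delicate step is the construction of the coalgebra structure $\bar c$: the whole argument hinges on the fact that preservation of initial cones by $\overF$ promotes the lifted cone to a $U$-initial cone for $\overF\bar X$, which is exactly what is needed to transport $c$ along $U$ in a way compatible with the $d_i$.
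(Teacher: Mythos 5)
Your proof is correct and follows essentially the same route as the paper's: lift the underlying cone using topologicity of $U$, use preservation of initial cones by $\overF$ to transport the coalgebra structure $c$ to $\bar c$ on the lifted domain, and conclude $\overU$-initiality via Lemma~\ref{lem:induces_top}. Your version is in fact slightly more explicit than the paper's about why the lemma is applicable (topological functors are faithful and $\delta=\id$ is mono) and about the computation producing $\bar c$, which the paper only records as a factorisation diagram.
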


\begin{proof}
  Let $(f_i : (X,c) \to \overU (Y_i,d_i))_{i \in I}$ be a cone in
  $\Coalg(F)$. Since the functor $U : \catA \to \catB$ is topological,
  the induced cone $(f_i : X \to U Y_i)_{i \in I}$ admits a
  $U$-initial lifting
  \begin{flalign*}
    (\overline{f}_i : A \to Y_i)_{i \in I}.
  \end{flalign*}
  By assumption, the cone
  $(\overF \; \overline{f}_i : \overF A \to \overF Y_i)_{i \in I}$ is
  also initial. Moreover, note that the following equations hold,
  \begin{flalign*}
    U \left (A \stackrel{\overline{f}_i}{\to} Y_i \stackrel{d_i}{\to}
      \overline{F} Y_i \right ) & = \left ( X \stackrel{f_i}{\to} U
      Y_i \stackrel{U d_i}{\to} F U Y_i \right ) \\
    U \left (\overline{F} A \stackrel{\overline{F} \;
        \overline{f}_i}{\to} \overline{F} Y_i \right ) & = \left (F X
      \stackrel{F f_i}{\to} F U Y_i \right ) \hspace{1.5cm} (i \in I)
  \end{flalign*}
  and that we have the factorisation below.
  \[
    \xymatrix{
      X \ar[dr]^{U d_i \comp f_i} \ar[d]_{c} & \\
      F X \ar[r]_{F f_i} & F U Y_i }
  \]
  \noindent
  This provides an arrow $\overline{c} : A \to \overF A $ such that
  $U \overline{c} = c$, and that makes the diagram below to commute.
  \[
    \xymatrix{
      A \ar[dr]^{d_i \comp \overline{f}_i} \ar[d]_{\overline{c}} & \\
      \overF A \ar[r]_{\overF\, \overline{f}_i} & \overF Y_i }
  \]
  We thus have a cone
  $(\overline{f}_i : (A,\overline{c}) \to (Y_i,d_i))_{i \in I}$ in
  $\Coalg(\overF)$. To finish the proof recall that the cone
  $(\overline{f}_i : A \to Y_i)_{i \in I}$ is initial with respect to
  the functor $U : \catA \to \catB$ and apply Lemma
  \ref{lem:induces_top}.
\end{proof}

\begin{corollary}
  \label{cor_top_comp}
  Let $U : \catA \to \catB$ be a topological functor and consider two
  functors $\overF : \catA \to \catA$, $F : \catB \to \catB$ such that
  $\overF : \catA \to \catA$ preserves initial cones. Moreover assume
  that $U \overF = F U$.
  Then the category $\Coalg(\overF)$ is complete iff $\Coalg(F)$ is
  complete.
\end{corollary}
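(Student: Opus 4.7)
The plan is to observe that the corollary is essentially a direct consequence of the theorem immediately above it, combined with a standard property of topological functors. Under the stated hypotheses---$U$ topological, $\overF$ preserving initial cones, and $U\overF = FU$---the preceding theorem already gives that $\overU : \Coalg(\overF) \to \Coalg(F)$ is topological. Once that is in hand, what remains is to recall that a topological functor makes its domain complete if and only if its codomain is complete (see the Remark in the preceding subsection, and \cite{Ada05}).

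For the ``$\Leftarrow$'' direction, I would simply appeal to the fact that topological functors lift limits: given a diagram $D : I \to \Coalg(\overF)$, I compute the limit $(L \to \overU D(i))_{i \in I}$ of $\overU \circ D$ in $\Coalg(F)$ and take its $\overU$-initial lift, which by definition of topologicity yields a limit of $D$ in $\Coalg(\overF)$.

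For the ``$\Rightarrow$'' direction the idea is to lift diagrams objectwise via the ``indiscrete'' structure. Concretely, for each object $X \in \Coalg(F)$, the $\overU$-initial lift of the empty cone on $X$ provides an object $\tilde{X} \in \Coalg(\overF)$ with $\overU \tilde{X} = X$ and with the universal property that every morphism in $\Coalg(F)$ of the form $\overU A \to X$ lifts uniquely to $A \to \tilde{X}$ in $\Coalg(\overF)$. Given any diagram $D : I \to \Coalg(F)$, setting $\tilde{D}(i) = \widetilde{D(i)}$ and lifting each $D(f) : D(i) \to D(j)$ through the universal property above produces a functor $\tilde{D} : I \to \Coalg(\overF)$ with $\overU \tilde{D} = D$. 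Taking its limit in $\Coalg(\overF)$ (which exists by assumption) and applying $\overU$, which preserves limits as a right adjoint, yields the desired limit of $D$ in $\Coalg(F)$.

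The only mildly delicate point is verifying functoriality of the lift $\tilde D$ in the second direction, but uniqueness in the universal property of initial lifts makes this automatic. Thus no real obstacle arises: the entire argument is a clean combination of the preceding theorem with the general behaviour of topological functors with respect to completeness.
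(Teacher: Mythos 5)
Your proposal is correct and follows exactly the route the paper intends: the corollary is stated without proof precisely because it is the preceding theorem (giving that $\overU$ is topological) combined with the standard fact, recorded in the remark on topological functors, that a topological functor's domain is complete if and only if its codomain is. Your spelling-out of the two directions (lifting limits for $\Leftarrow$, the indiscrete right adjoint for $\Rightarrow$) is the standard argument behind that fact and contains no gaps.
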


The forgetful functor $\TOP \to \SET$ is topological (\cf\
\cite{Ada05}) and it is straightforward to show that all polynomial
functors over $\TOP$ preserve initial cones. Using the previous
corollary this entails that all categories of coalgebras of a
polynomial functor over $\TOP$ are complete.

As hinted before, Corollary \ref{cor_top_comp} has stronger
consequences than Theorem \ref{theo_comp}: it considers all functors
in $\TOP$ that preserve initial cones (and not just the polynomial
ones) and it does not make any assumption about the category $\catA$
being $\TOP$. In fact, the only assumption about the category $\catA$
is that it has a topological functor $\catA \to \catB$. We invite the
reader to examine in \cite{Ada05} several examples of such categories.

\subsection{Some notes about Vietoris functors}

The last corollary is a positive result of our study of limits in
categories of polynomial coalgebras. On the other hand, the addition
of Vietoris functors to the mix brings a whole new level of difficulty
that calls for a number of topological concepts, an investigation of
Vietoris functors and some of their preservation properties. The study
of such properties is the main goal of this section.

\begin{lemma}
  Let $X$ be a topological space and $\mathcal{B}$ a base for the
  topology of $X$.
  \begin{enumerate}
  \item The set $\{B^\Diamond\mid B\in\mathcal{B}\}$ is a subbase for
    the lower Vietoris space $\Vie X$ (\cf\
    Example~\ref{exs:Vietoris-functors}(\ref{item:closed})).
  \item If $\mathcal{B}$ is closed under finite unions, then the set
    $\{B^\Diamond\mid B\in\mathcal{B}\}\cup\{B^\Box\mid
    B\in\mathcal{B}\}$ is a subbase for the compact Vietoris space
    $\Vie X$ (\cf\
    Example~\ref{exs:Vietoris-functors}(\ref{item:compact})).
  \end{enumerate}
\end{lemma}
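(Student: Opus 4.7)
The plan is to show that every element of the standard subbase of $\Vie X$ is an open set in the topology generated by the proposed smaller family, and then conclude that the two topologies agree. Since both families consist of open sets of $\Vie X$ by definition, only the second direction requires work.

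For part (1), I would take an arbitrary open $U\subseteq X$, write $U=\bigcup_{i\in I}B_i$ with $B_i\in\mathcal{B}$, and then use the straightforward set-theoretic identity
\[
U^\Diamond \;=\; \bigcup_{i\in I} B_i^\Diamond,
\]
which follows because a closed set $A$ meets $\bigcup_i B_i$ if and only if it meets some $B_i$. This exhibits $U^\Diamond$ as a union of members of $\{B^\Diamond\mid B\in\mathcal{B}\}$, so the two topologies coincide.

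For part (2), I would treat $U^\Diamond$ exactly as above (the same identity works, since all closed sets are still allowed; only the underlying point-set changes to compact subsets). The genuinely new ingredient is $U^\Box$. Here I would argue that
\[
U^\Box \;=\; \bigcup \bigl\{\, B^\Box \,\mid\, B\in\mathcal{B},\; B\subseteq U \,\bigr\}.
\]
The inclusion $\supseteq$ is immediate. For $\subseteq$, given a compact set $K\in U^\Box$, write $U=\bigcup_{i\in I}B_i$ with $B_i\in\mathcal{B}$; by compactness of $K$, finitely many $B_{i_1},\dots,B_{i_n}$ suffice to cover $K$, and the hypothesis that $\mathcal{B}$ is closed under finite unions lets us take $B:=B_{i_1}\cup\cdots\cup B_{i_n}\in\mathcal{B}$, which satisfies $K\subseteq B\subseteq U$. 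This is the one step where both compactness of the points of $\Vie X$ and closure of $\mathcal{B}$ under finite unions are essential, and it is the only real obstacle in the argument.

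I expect the write-up to be short: the $\Diamond$-identity is a one-line set-theoretic observation, while the $\Box$-identity is an elementary compactness argument combined with the closure hypothesis on $\mathcal{B}$. Together they show that every subbasic open of $\Vie X$ (in either variant) belongs to the topology generated by the proposed smaller family, which together with the trivial reverse inclusion establishes the claim.
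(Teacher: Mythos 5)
Your proposal is correct and follows essentially the same route as the paper: the identity $\left(\bigcup_i B_i\right)^\Diamond=\bigcup_i B_i^\Diamond$ handles the $\Diamond$-sets, and a finite-subcover argument from compactness, combined with closure of $\mathcal{B}$ under finite unions, handles the $\Box$-sets. The paper phrases the second identity as $\left(\bigcup\mathcal{S}\right)^\Box=\bigcup\bigl\{\left(\bigcup\mathcal{F}\right)^\Box\mid\mathcal{F}\subseteq\mathcal{S}\text{ finite}\bigr\}$ rather than indexing over basic sets contained in $U$, but this is only a cosmetic difference.
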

\begin{proof}
  Let $\mathcal{S}$ be a set of open subsets of $X$. First note that,
  for both the lower and the compact Vietoris space,
  \[
    \left(\bigcup\mathcal{S}\right)^\Diamond=\bigcup \left
      \{S^\Diamond\mid S\in\mathcal{S} \right \}.
  \]
  This proves the first statement. To see that the second one is also
  true, observe that
  \[
    \left(\bigcup\mathcal{S}\right)^\Box=\bigcup\left\{\left(\bigcup\mathcal{F}\right)^\Box\mid
      \mathcal{F}\subseteq\mathcal{S}\text{ finite}\right\}
  \]
  since we only consider compact subsets of $X$.
\end{proof}

\begin{lemma}\label{lem:init}
  Both the compact and the lower Vietoris functor $\Vie :\TOP\to\TOP$
  preserve initial codirected cones.
\end{lemma}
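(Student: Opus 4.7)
The plan is to exploit Remark~\ref{rem:initial_1}, which guarantees that for an initial codirected cone $(f_i : X \to X_i)_{i \in I}$ in $\TOP$ the collection
\[
  \mathcal{B} = \{\,f_i^{-1}(U) \mid i \in I,\ U \subseteq X_i \text{ open}\,\}
\]
forms a \emph{basis} for the topology on $X$. The goal is to show that the Vietoris topology on $\Vie X$ coincides with the initial topology induced by the cone $(\Vie f_i : \Vie X \to \Vie X_i)_{i \in I}$; by Example~\ref{exs:initial}(\ref{wrtTOP}) this is exactly what initiality amounts to.

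For the compact variant, the preceding lemma additionally requires the base to be closed under finite unions, and I would verify this using codirectedness: given $f_i^{-1}(U)$ and $f_j^{-1}(V)$ in $\mathcal{B}$, pick $k \in I$ with morphisms $k \to i$ and $k \to j$, so that $f_i = D(k\to i) \comp f_k$ and $f_j = D(k\to j) \comp f_k$, and write
\[
  f_i^{-1}(U) \cup f_j^{-1}(V) \;=\; f_k^{-1}\bigl(D(k\to i)^{-1}(U) \cup D(k\to j)^{-1}(V)\bigr),
\]
which again belongs to $\mathcal{B}$. Invoking the preceding lemma then yields a subbase for the topology of $\Vie X$ of the form $\{B^\Diamond \mid B \in \mathcal{B}\}$ in the lower case, and $\{B^\Diamond \mid B \in \mathcal{B}\} \cup \{B^\Box \mid B \in \mathcal{B}\}$ in the compact case.

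The next step is to establish the naturality identities $(f^{-1}(U))^\Diamond = (\Vie f)^{-1}(U^\Diamond)$ and, in the compact variant, $(f^{-1}(U))^\Box = (\Vie f)^{-1}(U^\Box)$, for any continuous $f : X \to Y$ and open $U \subseteq Y$. The $\Diamond$ identity holds even for the lower Vietoris, where $\Vie f(A) = \overline{f[A]}$, because $U$ being open forces $\overline{f[A]} \cap U \neq \varnothing$ iff $f[A] \cap U \neq \varnothing$. Substituting these identities into the subbase above rewrites it as $\{(\Vie f_i)^{-1}(W)\}$, with $W$ ranging over the standard Vietoris subbase of each $\Vie X_i$, which is precisely a subbase for the initial topology induced by the cone $(\Vie f_i)_{i \in I}$.

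The proof is essentially bookkeeping; the one place where codirectedness is genuinely needed (beyond mere initiality of the cone) is the closure of $\mathcal{B}$ under finite unions in the compact case. The $\Diamond$-generated subbase for the lower Vietoris already works from an arbitrary initial cone, but the $\Box$-sets force recourse to the codirected structure via the hypothesis of the preceding lemma, so I expect that to be the only subtle point to watch.
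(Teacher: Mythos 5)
Your proposal is correct and follows essentially the same route as the paper's proof: invoke Remark~\ref{rem:initial_1} to get a base of preimages, observe (via codirectedness) that it is closed under finite unions so the preceding subbase lemma applies, and conclude from the identities $(f_i^{-1}(U))^\Diamond = (\Vie f_i)^{-1}(U^\Diamond)$ and $(f_i^{-1}(U))^\Box = (\Vie f_i)^{-1}(U^\Box)$. Your write-up merely makes explicit two steps the paper leaves as "straightforward" (the finite-union closure computation and the openness argument for the $\Diamond$ identity in the lower case), both of which you carry out correctly.
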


\begin{proof}
  Let $(f_i:X\to X_i)_{i \in I}$ be an initial codirected cone in
  $\TOP$. Then the set
  \[
    \left \{f^{-1}_i(U) \mid i\in I, U\subseteq X_i\text{ open} \right
    \}
  \]
  is a base for the topology of $X$ (Remark
  \ref{rem:initial_1}). Moreover, the base is closed under finite
  unions. Therefore, by the lemma above, the proof follows from the
  equations
  \begin{flalign*}
    ((f_i)^{-1}(U))^\Box = (\Vie f_i)^{-1} (U^\Box) \hspace{2cm}
    ((f_i)^{-1}(U))^\Diamond = (\Vie f_i)^{-1} \left ( U^\Diamond
    \right ),
  \end{flalign*}
  for all $i\in I$ and $U\subseteq X_i$ open, which are
  straightforward to show.
\end{proof}

\begin{theorem}\label{cor:Vietoris-preserves-mono-cones}
  The lower Vietoris functor preserves initial codirected
  monocones. The compact Vietoris functor preserves initial codirected
  monocones of Hausdorff spaces.
\end{theorem}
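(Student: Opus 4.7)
By Lemma \ref{lem:init} the image cone obtained by applying $\Vie$ to an initial codirected cone is already initial, so the only new content to establish is that it remains a monocone. Concretely, given distinct $A, B \in \Vie X$, I need to exhibit some $i \in I$ with $\Vie f_i(A) \neq \Vie f_i(B)$. The central tool will be Remark \ref{rem:initial_1}, which guarantees that for an initial codirected cone the preimages $f_i^{-1}(U)$ form a genuine \emph{base} (not merely a subbase) for the topology of $X$, so every open neighbourhood of a point contains a basic open of this shape indexed by a single $i$.

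For the lower Vietoris functor, I would take, without loss of generality, a point $a \in A \setminus B$. Since $B$ is closed, $X \setminus B$ is an open neighbourhood of $a$ and hence contains some $f_i^{-1}(U) \ni a$. Then $f_i[B] \cap U = \varnothing$, so a fortiori $\overline{f_i[B]} \cap U = \varnothing$, while $f_i(a) \in \overline{f_i[A]} \cap U$. This separates $\Vie f_i(A)$ from $\Vie f_i(B)$ and finishes this case.

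For the compact Vietoris functor the same plan applies, but now the set $B$ is only compact and need not be closed. This is precisely the place where Hausdorffness is indispensable: using Hausdorffness of $X$ together with the compactness of $B$, I can separate any $a \notin B$ from $B$ by the classical argument (cover $B$ by open neighbourhoods disjoint from matching neighbourhoods of $a$, extract a finite subcover, intersect) to produce an open $W \ni a$ with $W \cap B = \varnothing$; then I refine $W$ to a basic open $f_i^{-1}(U)$ and conclude, as before, that $f_i(a) \in f_i[A] \cap U$ while $f_i[B] \cap U = \varnothing$. Note that Hausdorffness of the apex $X$ comes for free: since the $X_i$ are Hausdorff and $(f_i)_{i \in I}$ is an initial monocone, $X$ embeds into the Hausdorff product $\prod_{i \in I} X_i$.

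The only real obstacle I foresee is recognising \emph{why} the Hausdorff hypothesis is genuinely needed in the compact case rather than an artefact of the proof: without it, a compact $B$ may admit no proper open neighbourhood, so no $f_i$ can separate an outside point from $B$, and the whole scheme collapses. Apart from this conceptual point, the argument is a short separation exercise in both cases.
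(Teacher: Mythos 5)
Your proof is correct, and it shares with the paper the first half of the argument: both invoke Lemma~\ref{lem:init} to get initiality of the image cone, which is where codirectedness (via Remark~\ref{rem:initial_1}) does its work. Where you diverge is in establishing that the image cone is mono. The paper does this abstractly in two lines: the lower Vietoris space $\Vie X$ is always $T_0$ and the compact Vietoris space of a Hausdorff space is Hausdorff (citing \cite{Mic51}), and an initial cone in $\TOP$ with $T_0$ domain is automatically mono (Examples~\ref{exs:initial}(2)) --- so monicity is a formal consequence of initiality plus a separation axiom of the hyperspace. You instead verify monicity by hand, exhibiting for distinct $A,B\in\Vie X$ a basic open $f_i^{-1}(U)$ whose associated $U^\Diamond$ separates the images; this is in effect an unwinding of the paper's argument, since your separating set is exactly the $T_0$-separation in $\Vie X$ transported through initiality. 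Your version is self-contained (no appeal to Michael's results on hyperspaces) and makes the role of the Hausdorff hypothesis concrete --- separating a point from a compact set --- at the cost of being longer; the paper's version is shorter and isolates the conceptual reason the compact case needs Hausdorffness, namely that $\Vie X$ may fail to be $T_0$ otherwise. Both are complete proofs. One small remark: your appeal to initiality when deducing that the apex $X$ is Hausdorff is not needed, since an injective continuous map into a Hausdorff product already forces $X$ to be Hausdorff.
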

\begin{proof}
  First note that for a topological space $X$ the lower Vietoris space
  $\Vie X$ is $T_0$, and if $X$ is Hausdorff the compact Vietoris
  space $\Vie X$ is Hausdorff as well (\cf\ \cite{Mic51}).
  Then recall that a initial cone in $\TOP$ whose
  domain is $T_0$ (or $T_2$) is necessarily mono and apply
  Lemma~\ref{lem:init}.
\end{proof}

Together with Proposition~\ref{prop_regmono} it follows:

\begin{corollary}\label{cor:Vietoris-pol-preserves-mono-cones}
  Every compact polynomial functor and every lower polynomial functor
  $F:\TOP\to\TOP$ preserves regular monomorphisms.
\end{corollary}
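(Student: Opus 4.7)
The plan is to proceed by structural induction on the Vietoris polynomial grammar. The inductive steps for the identity, constants, products and coproducts are essentially the same as those used implicitly in Proposition~\ref{prop_regmono}: each of these operations preserves embeddings in $\TOP$ (coproducts and products of embeddings are again embeddings). So the only genuinely new base case to handle is the Vietoris functor itself; concretely, given a regular monomorphism $m:X\to Y$ in $\TOP$, which coincides with an embedding, I must show that $\Vie m:\Vie X\to\Vie Y$ is again an embedding.

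First I would observe that a single morphism $m:X\to Y$ can be viewed as a one-element initial codirected cone: the one-object poset is trivially codirected, and initiality of this one-element cone is precisely the condition that $m$ is an embedding. Lemma~\ref{lem:init} then immediately yields that $\Vie m$ is initial in $\TOP$. Since in $\TOP$ an initial injection is a regular monomorphism, all that remains is to verify that $\Vie m$ is injective.

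For the compact Vietoris functor injectivity is immediate, since $\Vie m(A)=m[A]$ and $m$ is injective on points. For the lower Vietoris functor, where $\Vie m(A)=\overline{m[A]}$, the check is slightly more delicate: I would use the standard fact that for an embedding $m$ and any closed $A\subseteq X$ the identity $A=m^{-1}(\overline{m[A]})$ holds, because the closed subsets of $X$ are exactly the traces of closed subsets of $Y$ and $\overline{m[A]}$ is the smallest closed set of $Y$ containing $m[A]$. From this identity, $\overline{m[A]}=\overline{m[B]}$ forces $A=B$.

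The main obstacle, though minor, is this last injectivity step for the lower Vietoris functor; the rest of the argument is structural bookkeeping. Note in particular that the Hausdorff restriction appearing in Theorem~\ref{cor:Vietoris-preserves-mono-cones} is not relevant here: it is used only to secure joint monicity of a multi-leg initial cone after applying $\Vie$, whereas for a single embedding we verify monicity of $\Vie m$ directly as above.
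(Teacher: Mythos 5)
Your proposal is correct and follows essentially the same route as the paper: the polynomial clauses are handled exactly as in Proposition~\ref{prop_regmono}, initiality of $\Vie m$ comes from Lemma~\ref{lem:init} applied to a single-leg (trivially codirected) cone, and monicity of $\Vie m$ is then checked separately. The only cosmetic difference is that for the lower Vietoris functor you verify injectivity directly via $A=m^{-1}(\overline{m[A]})$, whereas the paper routes this through Theorem~\ref{cor:Vietoris-preserves-mono-cones} (i.e.\ $T_0$-ness of $\Vie X$ plus the fact that an initial cone with $T_0$ domain is mono); both are sound, and your observation that the Hausdorff hypothesis is irrelevant for a single embedding matches the paper's treatment of the compact case.
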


\begin{proof}
  We already saw that all polynomial functors preserve regular
  monomorphisms (Proposition~\ref{prop_regmono}), and that the lower
  Vietoris functor preserves them as well
  (Theorem~\ref{cor:Vietoris-preserves-mono-cones}). Moreover, we saw
  that the compact Vietoris functor preserves initial monomorphisms
  (Lemma~\ref{lem:init}) and it is straightforward to show that it
  preserves monomorphisms.
\end{proof}

From Theorem~\ref{cor:Vietoris-preserves-mono-cones} and
Corollary~\ref{cor:general-completeness-coalgebras} we obtain the
following results.
\begin{corollary}
  For every lower Vietoris polynomial functor $F: \TOP \to \TOP$ the
  category $\Coalg(F)$ has codirected limits. For every compact
  Vietoris polynomial functor $F: \TOP \to \TOP$ the category
  $\Coalg(F)$ has codirected limits of Hausdorff spaces.
\end{corollary}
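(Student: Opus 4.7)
My plan is to invoke Theorem \ref{cor:general-completeness-coalgebras}. Fix a small codirected poset $I$ and define $\mathcal{M}$ to be the class of cones $(f_i : X \to D(i))_{i \in I}$ in $\TOP$ for which the induced morphism $\langle f_i \rangle : X \to \prod_{i \in I} D(i)$ is an initial monomorphism; equivalently, $\mathcal{M}$ is the class of initial codirected monocones of shape $I$. With $E$ the class of epimorphisms in $\TOP$, the fact that $\TOP$ is (Epi, initial mono)-structured, cocomplete, has products, and is wellpowered with respect to initial monomorphisms, combined with Remark \ref{rem:arrows-to-cones}(1), shows that $\TOP$ is $(E, \mathcal{M})$-structured for cones of shape $I$ and $\mathcal{M}$-wellpowered. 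Thus it only remains to verify that each lower Vietoris polynomial functor sends $\mathcal{M}$ to $\mathcal{M}$.

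I would prove this preservation by induction on the grammar defining a lower Vietoris polynomial functor. The cases $\Id$ and constants $A$ are immediate, the identity preserving every cone and the constant sending every cone to a cone of identity arrows on $A$ (trivially initial and mono). The inductive step for $F \times G$ reduces to the observation that the product of two initial monocones in $\TOP$ is again initial and mono, and the step for $F + G$ to the analogous property for coproducts, which in turn rests on the small fact that the disjoint-union topology on $A + B$ coincides with the initial topology induced by the sum cone whenever both component cones are initial. The base case $F = \Vie$ is the first half of Theorem \ref{cor:Vietoris-preserves-mono-cones}. An application of Theorem \ref{cor:general-completeness-coalgebras} then yields the first assertion.

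For the compact Vietoris statement, I would follow the same outline but restrict attention to codirected diagrams in $\Coalg(F)$ whose underlying spaces are Hausdorff. Since Hausdorffness is preserved by $\TOP$-limits, the image of a Hausdorff space inside a product of Hausdorff spaces is Hausdorff, so the $(E, \mathcal{M})$-factorisation of a Hausdorff-domain cone remains in the subclass of cones with Hausdorff domain. The inductive preservation now invokes the compact Vietoris clause of Theorem \ref{cor:Vietoris-preserves-mono-cones}, together with the fact that products and sums of Hausdorff spaces are Hausdorff; the argument via Theorem \ref{cor:general-completeness-coalgebras} then adapts to the Hausdorff setting to give the second assertion.

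The main point of care is the inductive step for sums, which requires a short direct verification that if $(f_i : A \to A_i)$ and $(g_i : B \to B_i)$ are initial in $\TOP$ then so is $(f_i + g_i : A + B \to A_i + B_i)$; and in the compact Vietoris case, the book-keeping needed to confirm that restricting $\mathcal{M}$ to Hausdorff-domain cones is genuinely compatible with the factorisation-system argument when we restrict to Hausdorff-valued diagrams.
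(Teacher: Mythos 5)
Your proposal is correct and follows exactly the route the paper intends: the paper offers no written proof beyond the sentence ``From Theorem~\ref{cor:Vietoris-preserves-mono-cones} and Corollary~\ref{cor:general-completeness-coalgebras} we obtain the following results,'' and your argument is precisely the instantiation of those two results, taking $\mathcal{M}$ to be the initial codirected monocones (via Remark~\ref{rem:arrows-to-cones}(1) applied to the (Epi, initial mono) structure on $\TOP$) and checking preservation inductively. The two points you flag for care --- initiality of sums of initial cones, and the Hausdorff book-keeping in the compact case --- are indeed the only spots needing verification, and both go through as you describe.
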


\begin{corollary}
  For every Vietoris polynomial functor $F: \TOP \to \TOP$ the
  category $\Coalg(F)$ has equalisers.
\end{corollary}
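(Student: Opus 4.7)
The plan is to reduce this to Theorem \ref{cor:general-completeness-coalgebras2} applied to the base category $\TOP$ and the functor $F$. I would verify each hypothesis in turn: $\TOP$ is cocomplete (standard, and already noted in Section~\ref{Subsec_Pol_func}); $\TOP$ is regularly wellpowered (recalled just before Corollary~\ref{cor_eq}); and $\TOP$ carries an (Epi, RegMono)-factorisation structure (again recalled just before Corollary~\ref{cor_eq}). These three ambient properties do not depend on $F$ at all, so the only functor-specific input is that $F$ preserves regular monomorphisms.

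For the functor-specific step, I would simply invoke Corollary~\ref{cor:Vietoris-pol-preserves-mono-cones}, which already covers both variants subsumed by the notion of a Vietoris polynomial functor: the lower and the compact case. Since ``Vietoris polynomial'' in this paper means either lower Vietoris polynomial or compact Vietoris polynomial, this corollary supplies exactly the missing preservation property.

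Given these four ingredients, Theorem~\ref{cor:general-completeness-coalgebras2} immediately yields that $\Coalg(F)$ has equalisers. There is no real obstacle here; the work has already been done in establishing the preservation of regular monomorphisms for the two Vietoris variants and in the general equaliser theorem for coalgebras derived from Lemma~\ref{lem:complete}. The proof is therefore a one-line citation of the two results, with a brief reminder that the grammar defining Vietoris polynomial functors in this paper ranges over the two Vietoris endofunctors already treated.
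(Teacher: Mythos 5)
Your proposal is correct and follows exactly the paper's own route: the paper proves this corollary by citing Theorem~\ref{cor:general-completeness-coalgebras2} together with Corollary~\ref{cor:Vietoris-pol-preserves-mono-cones}, and your verification of the ambient hypotheses on $\TOP$ (cocompleteness, regular wellpoweredness, the (Epi, RegMono)-factorisation structure) matches what the paper takes as already established. Nothing is missing.
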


\begin{proof}
  Direct consequence of Theorem
  \ref{cor:general-completeness-coalgebras2} and Corollary
  \ref{cor:Vietoris-pol-preserves-mono-cones}.
\end{proof}

\begin{remark}
  The assumption above about codirectedness is essential: neither the
  compact nor the lower Vietoris functor $\Vie:\TOP \to \TOP$ preserve
  monocones in general. Take, for instance, a compact Hausdorff space
  $X$ with at least two elements. Then $A=\{(x,x)\mid x\in X\}$ is a
  closed subset of $X\times X$, and $A$ is different from
  $B=X\times X$. However, with $\pi_1:X\times X\to X$ and
  $\pi_2:X\times X\to X$ denoting the projection maps,
  \[
    \Vie \pi_1(A)= \Vie \pi_1(B) = X = \Vie \pi_2(A) = \Vie \pi_2(B);
  \]
  which shows that the cone
  $(\Vie \pi_1:\Vie(X\times X)\to \Vie X,\Vie \pi_2: \Vie (X\times
  X)\to \Vie X)$ is not mono.
\end{remark}

\noindent
Theorem~\ref{cor:Vietoris-preserves-mono-cones} shows some good
behaviour with respect to codirected initial monocones. However, none
of the functors of Examples~\ref{exs:Vietoris-functors} preserves
codirected limits in $\TOP$.

\begin{examples}
  \phantomsection
  \label{exs:Vietoris-vs-codirected-limits}
  \begin{enumerate}
  \item We consider $I=\N$ with the natural order, and the functor
    $D:\N\to\SET$ which sends $n\le m$ to the inclusion map
    $\{0,\dots n\}\hookrightarrow\{0,\dots,m\}$. Clearly, the set of
    natural numbers $\N$ is a colimit of this directed diagram. Then,
    the composite $\SET(-,\N) \comp D^\op :\N^\op\to\SET$ yields a
    codirected diagram with limit $\SET(\N,\N)$, the limit projections
    $p_n:\SET(\N,\N)\to\SET(D(n),\N)$ being given by
    restriction. Equipping all sets with the indiscrete topology, we
    obtain a codirected limit in $\TOP$. The compact Vietoris functor
    does not send this limit to a monocone since $(Vp_n)_{n\in\N}$
    cannot distinguish between the sets $\SET(\N,\N)$ and
    \[
      \{f:\N\to\N\mid \{n\in\N\mid f(n)\neq 0\}\text{ is finite}\}.
    \]
  \item The next example is based on the ``empty inverse limit'' of
    \cite{Wat72}. Here $I$ is the set of all finite subsets of $\R$,
    with order being containment $\supseteq$. For $F\in I$, let $D(F)$
    be the discrete space of all injective functions $F\to\N$, and the
    map $D(G\supseteq F)$ is given by restriction. Note that each
    connecting map $D(G\supseteq F)$ is surjective. Then the limit of
    this diagram in $\TOP$ is empty since an element of this limit
    would define an injective function $\R\to\N$. The lower Vietoris
    functor sends the limit cone for $D$ to a monocone but not to a
    limit cone since the limit of $\Vie D$ has at least two elements:
    $(\varnothing)_{F\in I}$ and $(D(F))_{F\in I}$. Using the
    indiscrete topology instead of the discrete one shows that the
    lower Vietoris functor does not preserve codirected limits of
    diagrams of compact spaces and closed maps.
  \item In the example above we can use other topologies to show that
    the lower or the compact Vietoris functor does not preserve
    certain codirected limits. As an example, we consider here $\N$
    equipped with the topology
    \[
      \{\upc n\mid n\in\N\}\cup\{\varnothing\};
    \]
    where $\upc n=\{k\in\N\mid n\le k\}$. Note that $\N$ is T$_0$ and
    every non-empty collection of open subsets of $\N$ has a largest
    element with respect to inclusion $\subseteq$. The latter implies
    that, for every finite set $F$, every subset of $\N^F$ is compact.
    To see this, let $C\subseteq\N^F$ and assume that $C$ is covered
    by basic open subsets of $\N^F$:
    \[
      C\subseteq \bigcup_{i \in I} \uparrow i_1 \times \dots \times
      \uparrow i_n.
    \]
    We already know that for every $k \in F$ the family
    $(\uparrow j_k)_{j \in I}$ contains a largest element with respect
    to inclusion. This allows us to construct a finite subcover in the
    following manner: for each $k \in F$ let $S_k$ be an element of
    the cover whose $k$-projection is the largest element of the
    family $(\uparrow j_k)_{j \in I}$.  Then,
    \[
      C\subseteq \bigcup_{k \in F} S_k
    \]
    and the family $(S_k)_{k \in F}$ is a finite subcover.  We
    conclude that $C$ is compact.

    With $I$ being as in the previous example, we consider now $D(F)$
    as a subspace of $\N^F$. Then, for every $G\supseteq F$, the map
    $D(G\supseteq F):D(G)\to D(F)$ is continuous. Hence, this
    construction defines a codirected diagram $D:I\to\TOP$ where each
    $D(F)$ is T$_0$, compact, and locally compact; and the limit of
    this diagram is empty. With the same argument as above, neither
    the lower nor the compact Vietoris functor preserve this limit.
  \end{enumerate}
\end{examples}

\subsection{Vietoris polynomial functors}
\label{sec:via-codirected-limits}

Section~\ref{Subsec_Pol_func} studied limits in categories of
polynomial coalgebras, essentially by analysing the preservation of
connected limits in $\TOP$ and by providing sufficient conditions for
the existence of topological functors between categories of
coalgebras.  In the current section our focus is on Vietoris
coalgebras. In fact, Examples~\ref{exs:Vietoris-vs-codirected-limits}
already showed that it is highly problematic to consider all
topological spaces, because the lower and the compact Vietoris
functors do not preserve codirected limits in $\TOP$.  Hence, we will
restrict our attention to different subcategories of $\TOP$ where more
positive results appear.
  

\begin{definition}
  A topological space $X$ is called \df{stably compact} whenever $X$
  is $T_0$, locally compact, well-filtered and every finite
  intersection of compact saturated subsets is compact~\cite{Jun04}.
  A continuous map between stably compact spaces is called
  \df{spectral} whenever the inverse image of compact saturated
  subsets is compact. Stably compact spaces and spectral maps form a
  category which we denote by $\STCOMP$.
\end{definition}

\begin{remark}
  Note that every stably compact space is compact. More information on
  this type of space can be found in \cite{GHK+03} and \cite{Jun04}.
\end{remark}

\begin{theorem}\label{thm:propriedades_STCOMP_1}
  The category $\STCOMP$ is complete and regularly wellpowered. The
  inclusion functor $\STCOMP\to\TOP$ preserves limits and finite
  coproducts.
\end{theorem}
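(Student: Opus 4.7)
The plan is to show that $\STCOMP$ has arbitrary small products and equalisers computed as in $\TOP$, deduce completeness and limit-preservation from this, and then handle regular wellpoweredness and coproducts as corollaries. Throughout, I would rely on the Nachbin-style equivalence between $\STCOMP$ and the category $\POSCH$ of compact ordered spaces with continuous monotone maps, as developed in \cite{GHK+03,Jun04}; under this equivalence a stably compact space $X$ corresponds to its patch topology together with its specialisation order, and spectral maps correspond to continuous monotone maps.

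For products, I would invoke the Tychonoff-style theorem for stably compact spaces (see \cite{GHK+03,Jun04}): the $\TOP$-product $\prod_{i\in I} X_i$ of stably compact spaces is again stably compact. The projections are spectral because the preimage of a compact saturated subset of $X_j$ is a cylinder set that is a product of compact saturated sets (one for each factor), hence compact saturated by the same Tychonoff-type result.

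For equalisers of a pair of spectral maps $f,g : X \to Y$, I would work in $\POSCH$: there $X$ and $Y$ are compact Hausdorff pospaces, the diagonal $\Delta_Y \subseteq Y\times Y$ is closed, and $E = \{x \in X \mid f(x) = g(x)\} = \langle f,g\rangle^{-1}(\Delta_Y)$ is a closed subset of $X$ in its patch topology, inheriting the restricted order to form a compact pospace. Translating back via the equivalence shows that $E$ endowed with the $\TOP$-subspace topology inherited from $X$ is stably compact and the inclusion is spectral; this is then exactly the equaliser in $\TOP$. Combining the two steps gives that $\STCOMP$ is complete and that $\STCOMP \to \TOP$ preserves limits. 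Regular wellpoweredness is then immediate: via the preservation of equalisers, regular monomorphisms in $\STCOMP$ are in particular regular monomorphisms in $\TOP$, and $\TOP$ is regularly wellpowered \cite{AHS90}.

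Finally, for finite coproducts I would verify directly that a finite disjoint union of stably compact spaces, with the $\TOP$-sum topology, satisfies the four defining properties ($T_0$, local compactness, well-filteredness and stability of compact saturated sets under finite intersection), each of which is easily seen to be inherited from the summands. Coprojections are spectral since compact saturated subsets of a finite sum decompose as finite disjoint unions of compact saturated subsets of the summands. The main obstacle is the product step, which rests on the non-trivial Tychonoff-type theorem cited above; the equaliser step is also delicate, but becomes routine once one passes to the compact pospace picture via the Nachbin equivalence, avoiding direct manipulation of the open-upper-set topology.
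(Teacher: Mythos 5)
Your argument is essentially correct, but it takes a genuinely different route from the paper's. The paper disposes of completeness, limit preservation and regular wellpoweredness in one stroke by invoking Simmons' result that the inclusion $\STCOMP\to\TOP$ is \emph{monadic} (in \cite{Sim82}, where stably compact spaces appear as ``well-compacted'' spaces): monadic functors create limits, so all of these properties are inherited from $\TOP$ at once; only the finite coproducts are checked by hand, exactly as you do, via \cite[Proposition 9.2.1]{JGL-topology}. You instead build products and equalisers explicitly, resting on the Tychonoff-type theorem that a product of stably compact spaces is stably compact and on the Nachbin correspondence $\STCOMP\simeq\POSCH$ to identify equalisers with patch-closed sub-pospaces. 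What your approach buys is self-containedness and a concrete description of the limits (useful later in the paper, where codirected limits in $\STCOMP$ are manipulated explicitly through the $\POSCH$ picture); what the paper's approach buys is brevity and the stronger statement that limits are \emph{created}, not merely preserved. One point you should not leave implicit: to conclude that the $\TOP$-product and $\TOP$-equaliser really are limits in $\STCOMP$ you must also check that the mediating morphism induced by a cone of spectral maps is itself spectral. For equalisers this is routine (saturated subsets of a subspace are traces of saturated subsets of the ambient space), but for products it is most cleanly handled by your own device of passing to $\POSCH$, where spectrality becomes patch-continuity plus monotonicity and one uses that the patch topology of a product of stably compact spaces is the product of the patch topologies. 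With that verification added, your proof is complete; the regular-wellpoweredness and finite-coproduct steps are fine as stated.
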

\begin{proof}
  It is straightforward to check that the finite coproduct of stably
  compact spaces is stably compact (\cf\ \cite[Proposition
  9.2.1]{JGL-topology}). The other claims follow from monadicity of
  $\STCOMP\to\TOP$ which is shown in \cite{Sim82}. We note that
  \cite{Sim82} uses the designation \emph{well-compacted} instead of
  stably compact.
\end{proof}

Further properties of $\STCOMP$ can be easily derived if ones uses a
order-theoretic perspective. 

\begin{definition}
  A \df{partially orderered compact space} is a triple $(X,\le,\tau)$
  consisting of a set $X$, a partial order $\le$ on $X$ and a compact
  topology $\tau$ on $X$ so that the set
\[
  \{(x,y)\in X\times X\mid x\le y\}
\]
is closed with respect to the product topology.
\end{definition}

\begin{remark}
  Every partially ordered compact space $(X,\le,\tau)$ is necessarily
  Hausdorff as the antisymmetry property of the relation $\le$ implies
  that the diagonal $\{(x,x)\mid x\in X\}$ is closed in $X\times X$.
\end{remark}

The category $\STCOMP$ is isomorphic to the category $\POSCH$ of
partially orderered compact spaces and monotone continuous maps (\cf\
\cite{GHK+80}). The isomorphism $\POSCH\to\STCOMP$ commutes with the
underlying forgetful functors to $\SET$, sending a partially ordered
compact space $(X,\leq,\tau)$ to the stably compact space with the
same underlying set and the topology defined by the upper-open sets of
$(X,\leq,\tau)$. Its inverse functor uses the \df{specialisation
  order} of a topological space, defined by
$x \leq y \iff x \in \overline{\{y\}}$. It maps a stably compact space
$(X,\tau)$ into a space $(X,\tau',\leq)$ where the relation $\leq$ is
the specialisation ordering and $\tau'$ the \df{patch topology} of
$(X,\tau)$, \ie\ the topology generated by the complements of compact
saturated subsets and also the opens in $(X,\tau)$.

\begin{remark}
 The canonical forgetful functor $\POSCH\to\COMPHAUS$ has a
left adjoint which equips a compact Hausdorff space with the discrete
order. Using the isomorphism above, the adjunction
\[
  \POSCH\adjunct{\text{discrete}}{\text{forgetful}}\COMPHAUS
\]
reads in the language of stably compact spaces as
\[
  \STCOMP\adjunct{\text{inclusion}}{\text{patch}}\COMPHAUS.
\]
\end{remark}

In the sequel we will freely jump between both perspectives.

\begin{theorem}\label{thm:propriedades_STCOMP_2}
  The category $\POSCH$ is cocomplete and the epimorphisms of $\POSCH$
  are precisely the surjective morphisms.
\end{theorem}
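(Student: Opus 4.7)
The plan is to treat the two claims separately: first establish cocompleteness by direct construction of coproducts and coequalisers in $\POSCH$, and then exploit the resulting cokernel pairs to characterise the epimorphisms as the surjective morphisms.

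For cocompleteness, I would work in the $\POSCH$ presentation. Given a small family $(X_i, \le_i, \tau_i)_{i \in I}$, the $\COMPHAUS$-coproduct $\coprod_i X_i$ (which exists since $\COMPHAUS$ is cocomplete) carries a natural partial order, namely the disjoint union of the $\le_i$'s with no comparisons across summands; closedness of this relation in the product topology reduces to closedness summand-by-summand, so the construction yields a $\POSCH$-object, and the universal property transfers without effort. For a parallel pair $f, g : (X, \le_X, \tau_X) \to (Y, \le_Y, \tau_Y)$, I would form the smallest closed preorder $\preceq$ on $Y$ containing $\le_Y$ together with $\{(f(x), g(x)), (g(x), f(x)) : x \in X\}$, and then take the quotient of $Y$ by the closed equivalence relation $\preceq \cap \preceq^{-1}$, equipped with the quotient topology and the induced partial order. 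Using the compact Hausdorff structure of $Y$, this is checked to give a partially ordered compact space satisfying the universal property of the coequaliser in $\POSCH$.

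For the epi characterisation, surjective-implies-epi is immediate since two morphisms of $\POSCH$ are equal iff they agree pointwise. For the converse, I would factor any morphism $f : X \to Y$ as $f = i \comp s$, with $s : X \to f[X]$ surjective and $i : f[X] \hookrightarrow Y$ the inclusion of a closed subspace (closed because $f[X]$ is compact in the Hausdorff space $Y$). If $f$ is epi then $i$ is epi as well, so it suffices to show that any proper closed inclusion $A \subsetneq Y$ in $\POSCH$ fails to be epi. Using the constructions above, the cokernel pair $(p_1, p_2) : Y \rightrightarrows P$ of $A \hookrightarrow Y$ has underlying set $Y \sqcup_A Y$, the two copies of $A$ being identified pointwise; hence $p_1$ and $p_2$ agree on $A$ but differ at every point of $Y \setminus A$.

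The main obstacle is the honest verification that the cokernel pair in $\POSCH$ has the expected set-theoretic structure, \ie\ that taking the smallest closed preorder on $Y \sqcup Y$ compatible with the identifications along $A$ and then passing to its symmetric quotient does not over-collapse points outside $A$. A careful analysis of finite zig-zag chains in the generated preorder, combined with the antisymmetry of $\le_Y$ and the compact Hausdorff structure, shows that any two points outside $A$ remain distinguished. An alternative, possibly cleaner, route is to invoke Nachbin's separation theorem to construct directly, for any $y_0 \in Y \setminus A$, two monotone continuous maps $Y \to Z$ into some partially ordered compact space $Z$ that agree on $A$ but disagree at $y_0$, witnessing immediately that $A \hookrightarrow Y$ is not epi.
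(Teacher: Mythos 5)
Note first that the paper does not prove this theorem directly at all: cocompleteness is quoted from \cite[Corollary 2]{Tho09}, and the surjectivity of epimorphisms from \cite{HN16_tmp}, which assembles several separation and extension results of \cite{Nac65}. Your attempt at a self-contained proof is therefore a genuinely different route, and parts of it are sound --- in particular the coequaliser construction (smallest closed preorder generated by $\le_Y$ together with the symmetric pairs $(f(x),g(x))$, followed by the quotient modulo its symmetric part) works, as does the reduction of the epi question to proper closed inclusions. But two steps have real gaps. The first is the coproduct of an \emph{infinite} family: the coproduct in $\COMPHAUS$ is the Stone--\v{C}ech compactification of the disjoint union, not the disjoint union itself, so ``the disjoint union of the $\le_i$'s'' is not even a relation on the whole underlying space; moreover $\bigcup_i(\le_i)\cup\Delta$ is in general not closed in the square of the compactification, since a net of related pairs drawn from infinitely many summands clusters in the corona. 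One must pass to the ordered (Nachbin) compactification of the ordered disjoint union and then verify antisymmetry of the resulting closed relation --- this is precisely the content of \cite{Tho09}, and it does not ``transfer without effort''.

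The second gap is the one you yourself flag as the main obstacle, and neither of your two suggestions closes it. A finite zig-zag analysis cannot suffice on its own, because the smallest \emph{closed} preorder containing the generating relation is obtained by (transfinitely) alternating transitive closure with topological closure; a pair in the resulting relation need not be witnessed by any finite zig-zag, so the compactness/antisymmetry argument has to control limit points as well, and this is exactly where the work lies. The appeal to Nachbin's separation theorem also does not apply off the shelf: that theorem separates $x\not\le y$ by a monotone map to $[0,1]$, whereas here you need two morphisms agreeing on all of $A$ and differing at $y_0$. The naive Urysohn-type construction (a monotone $\varphi$ with $\varphi[A]=\{0\}$, $\varphi(y_0)=1$, paired with the constant map) fails whenever $y_0\in\downc A$, and its dual fails whenever $y_0\in\upc A$; for $y_0\in(\upc A\cap\downc A)\setminus A$ neither works, and one genuinely needs the extension machinery for continuous monotone functions on closed subspaces of compact ordered spaces that \cite{HN16_tmp} extracts from \cite{Nac65}. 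As written, the decisive step is asserted rather than proved.
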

\begin{proof}
  Cocompleteness of $\POSCH$ follows from \cite[Corollary
  2]{Tho09}. Combining several results of \cite{Nac65}, it is shown in
  \cite{HN16_tmp} that every epimorphism in $\POSCH$ is surjective.
\end{proof}

Clearly, (Surjections,Substructure) is a factorisation structure for
morphisms in $\POSCH$. Since the surjections are precisely the
epimorphisms in $\POSCH$, we conclude that $\POSCH$ is
(Epi,RegMono)-structured, and thus also the category
$\STCOMP$. Moreover, the regular monomorphisms in $\STCOMP$ are
precisely the topological subspace embeddings.


Let us turn our attention back to the study of Vietoris functors with
the isomorphism $\STCOMP\simeq\POSCH$ in mind. The lower Vietoris
functor on $\TOP$ restricts to a functor $\Vie:\STCOMP\to\STCOMP$
(\cf\ \cite{Sch93}). Its counterpart on $\POSCH$ can be described in
the following manner.

\begin{proposition}
  Under the isomorphism $\STCOMP\simeq\POSCH$, the lower Vietoris
  functor $\Vie:\STCOMP\to\STCOMP$ corresponds to the functor
  \[
    \POSCH\to\POSCH
  \]
  which sends a partially ordered compact space $X$ to the space of
  all lower-closed subsets of $X$, with order inclusion
  $\subseteq$, and compact topology generated by the sets
  \begin{align}\label{eq:1}
    & \{A\subseteq X\mid A\text{ lower-closed and }A\cap U\neq\varnothing\}\hspace{1em}\text{($U\subseteq X$ upper-open)},  \\
    & \{A\subseteq X\mid A\text{ lower-closed and } A\cap K=\varnothing\}\hspace{1em}\text{($K\subseteq X$ upper-closed)}.\notag
  \end{align}
  Given a map $f:X\to Y$ in $\POSCH$, the functor returns the map that
  sends a lower-closed subset $A\subseteq X$ to the down-closure
  $\downc f[A]$ of $f[A]$.
\end{proposition}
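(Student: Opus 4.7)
\emph{Proof plan.}
The strategy is to trace the lower Vietoris functor on $\STCOMP$ through the isomorphism $\STCOMP\simeq\POSCH$ piece by piece. Given $(X,\le,\tau)\in\POSCH$, write $(X,\sigma)$ for the corresponding stably compact space, whose opens are precisely the upper-open subsets of $(X,\le,\tau)$; I will describe the underlying set, order, topology and morphism action of $\Vie(X,\sigma)$ in $\POSCH$-terms.

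For the underlying set, a subset of $X$ is $\sigma$-closed iff it is the complement of an upper-open set, hence iff it is simultaneously order-lower-closed and patch-closed; this is the meaning of ``lower-closed'' throughout the statement. For the order, I would compute the specialisation order of $\Vie(X,\sigma)$ from the subbasis $\{U^\Diamond\mid U\text{ upper-open}\}$: the condition $A\in\overline{\{B\}}^\sigma$ unfolds to ``every upper-open $U$ meeting $A$ also meets $B$'', and substituting $U=X\setminus B$ (which is $\sigma$-open because $B$ is $\sigma$-closed) forces $A\subseteq B$, the converse being trivial.

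The central step is the identification of the patch topology on $\Vie X$. The lower Vietoris $\sigma$-topology supplies the first family in~(\ref{eq:1}), namely $U^\Diamond$ with $U$ upper-open; the patch topology then adds complements of compact saturated subsets of $\Vie(X,\sigma)$. For each upper-closed $K\subseteq X$ (equivalently, compact saturated in $(X,\sigma)$) I would show that $K^\Diamond=\{A:A\cap K\ne\varnothing\}$ is compact saturated in the lower Vietoris topology. Saturation under $\subseteq$ is immediate. Compactness follows by Alexander's subbase lemma: given a cover $\{U_i^\Diamond\}_{i\in I}$ of $K^\Diamond$, for each $x\in K$ the set $\downc\{x\}$ is lower-closed and meets $K$, so lies in some $U_{i(x)}^\Diamond$; since $U_{i(x)}$ is $\le$-upper-closed this forces $x\in U_{i(x)}$. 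Thus $\{U_i\}$ covers the compact saturated set $K$, and a finite subcover $U_{i_1},\dots,U_{i_n}$ of $K$ produces a finite subcover $U_{i_1}^\Diamond,\dots,U_{i_n}^\Diamond$ of $K^\Diamond$. A parallel argument shows that every compact saturated subset of $\Vie(X,\sigma)$ arises as a finite intersection of such $K^\Diamond$, so the stated subbasics really generate the patch topology.

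For morphisms, the lower Vietoris functor on $\TOP$ sends $f:X\to Y$ in $\POSCH$, viewed as a spectral map, to $A\mapsto\overline{f[A]}^{\sigma_Y}$. Since $A$ is patch-closed in the compact Hausdorff space $(X,\tau_X)$ it is patch-compact, so $f[A]$ is patch-compact and therefore patch-closed in the Hausdorff space $(Y,\tau_Y)$; the closedness of $\le_Y$ together with compactness of $Y$ then implies that $\downc f[A]$ is patch-closed. Being also order-lower-closed and manifestly the smallest lower-closed set containing $f[A]$, it equals $\overline{f[A]}^{\sigma_Y}$, as required. The main obstacle I anticipate is the third step: characterising \emph{all} compact saturated subsets of $\Vie(X,\sigma)$ precisely enough to show that the listed subbasics generate the full patch topology rather than some coarser one.
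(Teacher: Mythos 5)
Your decomposition of the problem (underlying set, specialisation order, patch topology, morphism action) is sensible, and several pieces are correct and match the paper: the Alexander-subbase argument that $K^\Diamond$ is compact saturated for $K$ upper-closed is essentially the paper's, and your treatment of the morphism part (patch-compactness of $f[A]$ plus closedness of the order giving $\downc f[A]$ closed, hence equal to the $\sigma$-closure) is a legitimate unpacking of the paper's appeal to Nachbin.

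However, there is a genuine gap at exactly the step you yourself flag as the main obstacle. Your plan for showing that the sets in \eqref{eq:1} generate the \emph{full} patch topology is to prove that every compact saturated subset of $\Vie(X,\sigma)$ is a finite intersection of sets $K^\Diamond$. This claim is unsubstantiated and almost certainly false: compact saturated subsets of the lower Vietoris space are closed under codirected intersections (and, by Hofmann--Mislove, correspond to Scott-open filters of opens), so there is no reason for them all to be finite intersections of the $K^\Diamond$, and no ``parallel argument'' to the one for $K^\Diamond$ will produce such a normal form. The paper avoids this entirely by an indirect comparison: having shown that the topology generated by \eqref{eq:1} is \emph{coarser} than the patch topology, it observes that this coarser topology is Hausdorff (via Nachbin's order-separation) while the patch topology of a stably compact space is compact; a continuous bijection from a compact space to a Hausdorff space is a homeomorphism, so the two topologies coincide. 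Without this (or some substitute for your unproved classification of compact saturated sets), your argument only establishes that \eqref{eq:1} generates a subtopology of the patch topology, not equality, so the proposal as it stands does not prove the proposition.
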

\begin{proof}
  Let $(X,\le,\tau)$ be a partially ordered compact space with
  corresponding stably compact space $(X,\sigma)$. Clearly, the
  underlying set of $\Vie(X,\sigma)$ is the set of all lower-closed
  subsets of $X$. We will show that the patch topology of
  $\Vie(X,\sigma)$ coincides with the topology defined by
  \eqref{eq:1}. First note that every set of the form
  \[
    \{A\subseteq X\mid A\text{ lower-closed and }A\cap
    U\neq\varnothing\}\hspace{1em}\text{($U\subseteq X$ upper-open)},
  \]
  is open in $\Vie(X,\sigma)$ and therefore is also in the patch
  topology. For $K\subseteq X$ upper-closed, the complement of the set
  \[
    \{A\subseteq X\mid A\text{ lower-closed and } A\cap K=\varnothing\}
  \]
  is equal to $K^\Diamond$. Using Alexander's Subbase Theorem, it is
  straightforwad to verify that $K^\Diamond$ is compact in
  $\Vie(X,\sigma)$. Since the specialisation order of $\Vie(X,\sigma)$
  is subset inclusion, $K^\Diamond$ is also saturated. Hence, the
  topology defined by \eqref{eq:1} is coarser than the patch topology
  of $\Vie(X,\sigma)$. Since it is also Hausdorff, by \cite[Lemma
  2.2]{Jun04}, both topologies coincide (\cf\ \cite{Eng89}). In
  particular, the construction of the proposition defines indeed a
  partially ordered compact space.

  In regard to maps in $\POSCH$, \cite[Proposition 4 on page
  44]{Nac65} tells that for every map $f:X\to Y$ in $\POSCH$ and every
  lower-closed subset $A\subseteq X$, the down-closure $\downc f[A]$
  of $f[A]$ is closed in $Y$, and therefore coincides with the closure
  of $f[A]$ in the stably compact topology of $Y$.
\end{proof}

Recall that the lower Vietoris functor preserves codirected initial
monocones (see
Theorem~\ref{cor:Vietoris-preserves-mono-cones}). Hence, for every
codirected diagram $D:I\to\STCOMP$ with limit cone
$(p_i:L_D\to D(i))_{i\in I}$, the canonical comparison map
\[
  h:\Vie L_D\to L_{\Vie D},\; K\mapsto (\overline{p_i[K]})_{i\in I}
\]
is an embedding. To show that $\Vie :\STCOMP\to\STCOMP$
preserves these limits, we are left with the task of proving that $h$
is also surjective. To do so, we use the fact that $\STCOMP$ inherits
a nice characterisation of codirected limits from the category
$\COMPHAUS$. A first hint of the latter characterisation is in
\cite{Bou42}, but, to the best of our knowledge, is rarely used in the
literature. Actually, we were not able to find a proof in the
literature, except for \cite{Hof99}; so we sketch a proof below.

\begin{theorem}\label{thm:codirected-limits-COMPHAUS}
  Let $D:I\to\COMPHAUS$ be a codirected diagram and
  $\C=(p_i:L\to D(i))_{i\in I}$ a cone for $D$. The following
  conditions are equivalent:
  \begin{enumerate}
  \item The cone $\C$ is a limit of $D$.
  \item \label{item:1} The cone $\C$ is mono and, for every $i\in I$,
    the image of $p_i$ contains the intersection of the images of all
    $D(j\to i)$, in symbols
    \[
      \im p_i\supseteq\bigcap_{j{\to}i}\im D(j\to i).
    \]
  \end{enumerate}
\end{theorem}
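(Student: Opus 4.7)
The plan is to compare the given cone $\C$ with the canonical limit cone $(\pi_j:L_D\to D(j))_{j\in I}$, where $L_D$ is the closed subspace of $\prod_j D(j)$ consisting of compatible tuples, and to establish the image formula $\im \pi_i = \bigcap_{j\to i}\im D(j\to i)$ for each $i\in I$. Once this formula is in hand the two implications become comparatively routine: the canonical cone is jointly injective and satisfies the image inclusion, so (i)~$\Rightarrow$~(ii); conversely, under (ii) the comparison map $h:L\to L_D$ will turn out to be a continuous bijection between compact Hausdorff spaces, hence a homeomorphism.

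For the non-trivial inclusion $\im\pi_i\supseteq\bigcap_{j\to i}\im D(j\to i)$, I would fix $x$ in the right-hand side and look for a compatible tuple $(y_k)_{k\in I}\in L_D$ with $y_i=x$ by a finite-intersection argument inside the compact Hausdorff product $\prod_k D(k)$. The closed subsets to intersect are $\{(y_l)\mid D(k\to k')(y_k)=y_{k'}\}$ for every arrow $k\to k'$ of $I$, together with $\{(y_l)\mid y_i=x\}$. Given a finite collection of such constraints involving indices in a finite set $S\subseteq I$, codirectedness of $I$ provides some $l^*$ with $l^*\to s$ for every $s\in S\cup\{i\}$; by hypothesis one can choose $z\in D(l^*)$ with $D(l^*\to i)(z)=x$, and then setting $y_s=D(l^*\to s)(z)$ for $s\in S\cup\{i\}$ and extending arbitrarily on the remaining coordinates produces a tuple in which functoriality of $D$ guarantees all chosen constraints. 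So the finite intersection is non-empty and compactness finishes the job.

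For the converse, the comparison map $h:L\to L_D$ is injective since $\C$ is mono. Given $y=(y_j)\in L_D$, compatibility yields $y_i\in\bigcap_{j\to i}\im D(j\to i)$ for each $i$, and the hypothesis gives $y_i\in\im p_i$; hence $L_i:=p_i^{-1}(y_i)$ is a non-empty closed subset of the compact space $L$. The family $(L_i)_{i\in I}$ has the finite intersection property: for a finite $F\subseteq I$ pick $k\in I$ with $k\to i$ for all $i\in F$, and observe that $p_i(l)=D(k\to i)(p_k(l))=D(k\to i)(y_k)=y_i$ for $l\in L_k$, so $L_k\subseteq\bigcap_{i\in F}L_i$. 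Any point of $\bigcap_i L_i$ then maps to $y$ under $h$, proving surjectivity.

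The main obstacle is the compactness-plus-codirectedness argument in the forward direction. As Examples~\ref{exs:Vietoris-vs-codirected-limits} already show, the statement fails outside the compact Hausdorff setting, so both Hausdorffness (to ensure that the compatibility conditions carve out a closed, and hence compact, subset of the product) and compactness (to upgrade the finite intersection property to a genuine common point) are essential for the choice of $l^*$ and $z$ above.
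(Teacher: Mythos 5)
Your proof is correct and follows essentially the same route as the paper's: both implications reduce to finite-intersection/compactness arguments, once inside the product $\prod_{j}D(j)$ (to produce a compatible tuple through a given point of $\bigcap_{j\to i}\im D(j\to i)$, using Hausdorffness to make the compatibility constraints closed) and once inside $L$ itself (to find a preimage of a given compatible tuple). The only cosmetic difference is that you route everything through the canonical limit $L_D$ and close with ``a continuous bijection of compact Hausdorff spaces is a homeomorphism,'' whereas the paper checks the universal property against an arbitrary competitor cone directly; your packaging has the minor advantage of making the continuity of the mediating map explicit rather than relying on initiality of monocones in $\COMPHAUS$.
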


\begin{proof}
  Assume first that $(p_i:L\to D(i))_{i\in I}$ satisfies the two
  conditions and let $(f_i:X\to D(i))_{i\in I}$ be a cone for $D$. Let
  $x\in X$, and, for every $i\in I$, put
  $A_i=p_i^{-1}(f_i(x))$. Clearly, $A_i$ is closed, moreover, $A_i$ is
  non-empty since
  \[
    \im f_i\subseteq \bigcap_{j{\to}i}\im D(j\to i) = \im p_i
  \]
  Since the family $(A_i)_{i\in I}$ is codirected and $L$ is compact,
  there is some $z\in\bigcap_{i\in I}A_i$. We put $f(x)=z$, this way
  we define a map $f:X\to L$ with $p_i\cdot f=f_i$, for all $i\in
  I$. Since $(p_i:L\to D(i))_{i\in I}$ is a monocone, we conclude that
  $(p_i:L\to D(i))_{i\in I}$ is a limit of $D$. Conversely, if
  $(p_i:L\to D(i))_{i\in I}$ is a limit, then it is clearly a
  monocone. Let now $i_0\in I$ and
  $x\in \bigcap_{j{\to}i_0}\im D(j\to i_0)$. We may assume that $i_0$
  is final in $I$. For each $i\in I$, we put
  \[
    A_i=\{(x_i)_{i\in I}\in\prod_{i\in I}D(i)\mid x_{i_0}=x\text{ and,
      for all } i\to j\in I, x_j=D(i\to j)(x_i)\}.
  \]
  Then $A_i$ is non-empty, and it is a closed subset of
  $\prod_{i\in I}D(i)$ since it is an equaliser of continuous maps
  between Hausdorff spaces. Furthermore, for $i\to j\in I$,
  $A_i\subseteq A_j$. Hence there is some $z\in\bigcap_{i\in I}A_i$;
  by construction, $z\in L$ and $p_{i_0}(z)=x$.
\end{proof}

\begin{remark}
  For every cone $(p_i:C\to D(i))_{i\in I}$ the inequality
  $\im p_i\subseteq \bigcap_{j{\to}i}\im D(j\to i)$ holds. Hence, in
  the theorem above, the reverse inequality, distinguishes monocones
  from limit cones.
\end{remark}





\begin{proposition}
  \label{prop:up_commutes_intersection}
  Let $\mathcal{A}$ be a codirected set of closed subsets of a
  partially ordered compact space $X$. Then,
  $\downc \bigcap_{A\in\mathcal{A}}A=\bigcap_{A\in\mathcal{A}}\downc A$.
\end{proposition}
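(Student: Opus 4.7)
The plan is to prove the two inclusions separately, with the nontrivial one being $\bigcap_{A\in\mathcal{A}}\downc A\subseteq\downc\bigcap_{A\in\mathcal{A}}A$. The easy inclusion is just unwinding: if $y\le x$ with $x\in\bigcap_A A$, then $y\in\downc A$ for each $A$.

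For the hard direction, the idea is a standard compactness/finite-intersection-property argument, exploiting that partially ordered compact spaces have closed principal up-sets. Let $y\in\bigcap_{A\in\mathcal{A}}\downc A$. The first key step is to observe that $\upc y=\{x\in X\mid y\le x\}$ is closed in $X$: since the order relation is closed in $X\times X$, $\upc y$ is the preimage of the order under the continuous map $x\mapsto(y,x)$. Consequently, for each $A\in\mathcal{A}$, the set $B_A:=A\cap\upc y$ is closed; and it is non-empty precisely because $y\in\downc A$ provides some $x_A\in A$ with $y\le x_A$.

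Next I would observe that the family $(B_A)_{A\in\mathcal{A}}$ is still codirected: given $A,A'\in\mathcal{A}$, pick $A''\in\mathcal{A}$ with $A''\subseteq A\cap A'$, then $B_{A''}\subseteq B_A\cap B_{A'}$. By compactness of $X$ together with the finite intersection property, the intersection $\bigcap_{A\in\mathcal{A}}B_A$ is non-empty. Any element $x$ in this intersection satisfies $y\le x$ and $x\in A$ for every $A\in\mathcal{A}$, so $x\in\bigcap_{A\in\mathcal{A}}A$ and $y\in\downc\bigcap_{A\in\mathcal{A}}A$, as desired.

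The only subtle point—barely an obstacle—is making sure that $\upc y$ really is closed, which requires the defining property of a partially ordered compact space that $\{(u,v)\mid u\le v\}$ is closed in the product topology; after that, the proof is a clean application of compactness, and nothing depends on the sets $A$ being lower-closed or having any further structure beyond being closed.
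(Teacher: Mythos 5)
Your proof is correct and follows essentially the same route as the paper's: the easy inclusion, then for $y\in\bigcap_{A}\downc A$ a finite-intersection-property/compactness argument applied to the closed non-empty sets $A\cap\upc y$. The only cosmetic difference is that you justify closedness of $\upc y$ directly from the closedness of the order relation (via the continuous map $x\mapsto(y,x)$), whereas the paper cites Nachbin; both are fine.
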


\begin{proof}
  Clearly,
  $\downc
  \bigcap_{A\in\mathcal{A}}A\subseteq\bigcap_{A\in\mathcal{A}}\downc
  A$. To show that the reverse inequality holds, consider
  $z \in \bigcap_{A\in\mathcal{A}}\downc A$. Then, for every
  $A\in\mathcal{A}$, the set $\upc z \cap A$ is non-empty, and closed
  because $\{z\}$ is compact (\cf \cite[Proposition 4 on page
  44]{Nac65}). Morevover, since $\mathcal{A}$ is codirected, the set
  $\{\upc z \cap A\ |\ A\in \mathcal{A}\}$ has the finite intersection
  property. Therefore, by compactness, it follows that
  $\upc z \cap \bigcap_{A\in \mathcal{A}} A \neq \varnothing$, which
  implies that $z \in \downc \bigcap_{A\in\mathcal{A}}A$.
\end{proof}

\begin{proposition}
  Let $D:I\to\POSCH$ be a codirected diagram,
  $(p_i:L_D\to D(i))_{i\in I}$ a limit for $D$ and
  $(L_{\Vie D}\to \Vie D(i))_{i\in I}$ a limit for $\Vie D:I\to\POSCH$. Then the 
  function $h:\Vie L_D\to L_{\Vie D}$ defined by
  $K\mapsto (\downc p_i[K])_{i\in I}$ is surjective.
\end{proposition}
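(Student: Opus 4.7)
The plan is to exhibit an explicit pre-image of an arbitrary compatible family $(A_i)_{i\in I}\in L_{\Vie D}$, namely
\[
K:=\bigcap_{i\in I}p_i^{-1}[A_i]\subseteq L_D,
\]
and prove $h(K)=(A_i)_{i\in I}$. Each $p_i^{-1}[A_i]$ is closed in $L_D$ (continuity of $p_i$) and lower-closed in $L_D$ (since $p_i$ is monotone and $A_i$ is lower-closed), so their intersection $K$ is a closed lower-closed subset, \ie\ a bona fide element of $\Vie L_D$. The inclusion $\downc p_i[K]\subseteq A_i$ is immediate since $p_i[K]\subseteq A_i$ and $A_i=\downc A_i$, so the whole task reduces to proving $A_i\subseteq\downc p_i[K]$ for each $i\in I$.

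For the non-trivial inclusion, the key move is to restrict the diagram $D$ to the family $(A_i)$. Compatibility, \ie\ $\downc D(\ell\to j)[A_\ell]=A_j$, implies that each $D(\ell\to j)$ restricts to a monotone continuous map $A_\ell\to A_j$, yielding a codirected sub-diagram $A:I\to\POSCH$. Unwinding the description of $L_D$ as a space of compatible families identifies $K$ with the limit of $A$ in $\POSCH$ (equivalently in $\COMPHAUS$ via the patch topology), the limit projection $q_i:K\to A_i$ being simply $p_i$ with codomain restricted. Applying the characterisation of codirected limits in $\COMPHAUS$ (Theorem \ref{thm:codirected-limits-COMPHAUS}) to this sub-diagram gives
\[
\im q_i\;\supseteq\;\bigcap_{\ell\to i}\im A(\ell\to i)\;=\;\bigcap_{\ell\to i}D(\ell\to i)[A_\ell].
\]

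Each $D(\ell\to i)[A_\ell]$ is a compact, hence closed, subset of the partially ordered compact space $D(i)$, and this family is codirected (for $\ell_1,\ell_2\to i$, pick $\ell\to\ell_1,\ell_2$ in $I$ and use compatibility), so Proposition \ref{prop:up_commutes_intersection} permits commuting the down-closure with the intersection:
\[
\downc p_i[K]\;=\;\downc\im q_i\;\supseteq\;\downc\bigcap_{\ell\to i}D(\ell\to i)[A_\ell]\;=\;\bigcap_{\ell\to i}\downc D(\ell\to i)[A_\ell]\;=\;\bigcap_{\ell\to i}A_i\;=\;A_i,
\]
where the penultimate equality is precisely the compatibility defining an element of $L_{\Vie D}$. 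The main obstacle I expect is the bookkeeping in verifying that $K$ is really the $\POSCH$-limit of the restricted diagram $A$, so that Theorem \ref{thm:codirected-limits-COMPHAUS} can be legitimately invoked for $(q_i:K\to A_i)_{i\in I}$; a brief aside also handles the degenerate case in which some (and hence by compatibility all) $A_i$ is empty, where $K=\varnothing$ trivially works.
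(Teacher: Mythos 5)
Your proof is correct and follows essentially the same route as the paper's: both restrict the diagram to the given compatible family, identify a preimage with the ($\POSCH$-, equivalently $\COMPHAUS$-) limit of that restricted diagram, invoke Theorem~\ref{thm:codirected-limits-COMPHAUS} to compute the images of the limit projections, and conclude with Proposition~\ref{prop:up_commutes_intersection}. The only cosmetic difference is that you construct the preimage explicitly as $\bigcap_{i\in I}p_i^{-1}[A_i]$ and then identify it with that limit, whereas the paper takes the limit of the restricted diagram directly and observes that it is a lower-closed subset of $L_D$.
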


\begin{proof}
  Let $(K_i)_{i\in I} \in L_{\Vie D}$. For every $i\in I$,
  $K_i\subseteq D(i)$ is closed, hence, $K_i \in \POSCH$. For every
  $i\in I$ and $j\to i\in I$, take $K(i)$ as $K_i$ and $K(j\to i)$ as
  the continuous and monotone map of type $K_j\to K_i$ given by the
  restriction of $D(j\to i)$ to $K_j$.  This way, by
  Remark~\ref{rem:codirect_posch}, we obtain a codirected diagram
  $K:I\to\POSCH$ such that for every $j\to i\in I$,
  $\downc K(j\to i)[K(j)]=[K(i)]$.

  Let $(p_i:L_K\to K(i))_{i\in I}$ be a limit for $K$. By construction,
  $L_K\subseteq L_D$ is lower-closed. Thus, $L_K\in \Vie L_D$. We claim
  that $h(L_k)=(K_i)_{i\in I}$.  Let $i_0\in I$. Since the following
  diagram of forgetful functors
  \[
    \xymatrix@R=15pt@C=0.5pt{
      \POSCH \ar[dr] \ar[rr] & & \COMPHAUS \ar[dl] \\
      & \SET & }
  \]
  commutes and the functor $\POSCH\to\COMPHAUS$ preserves limits, from
  Theorem~\ref{thm:codirected-limits-COMPHAUS} we
  obtain
  \[
    p_{i_0}[L_K]=\bigcap_{j\to i_0} K(j\to i_0)[K_j].
  \]
  Therefore, by Propostion~\ref{prop:up_commutes_intersection},
  \[
    \downc p_{i_0}[L_K]=\downc \bigcap_{j\to i_0} K(j\to i_0)[K(j)]=
    \bigcap_{j\to i_0} \downc K(j\to i_0)[K(j)]=K_{i_0}.
  \]
\end{proof}

As expected, we obtain the following results.

\begin{corollary}
  \label{cor:proper-preserve-codirected-limits}
  The lower Vietoris functor $\Vie :\STCOMP\to\STCOMP$ preserves
  codirected limits.
\end{corollary}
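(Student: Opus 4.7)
The plan is to show that for every codirected diagram $D : I \to \STCOMP$ with limit cone $(p_i : L_D \to D(i))_{i\in I}$, the canonical comparison map $h : \Vie L_D \to L_{\Vie D}$ into the limit of $\Vie D$ is an isomorphism in $\STCOMP$.

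First I would observe that the inclusion $\STCOMP\to\TOP$ preserves limits (Theorem~\ref{thm:propriedades_STCOMP_1}), so the cone $(p_i)_{i\in I}$ is also a limit in $\TOP$, and in particular an initial codirected monocone with respect to the forgetful functor $\TOP\to\SET$. By Theorem~\ref{cor:Vietoris-preserves-mono-cones} the lower Vietoris functor preserves initial codirected monocones, whence the cone $(\Vie p_i : \Vie L_D \to \Vie D(i))_{i\in I}$ is again an initial monocone in $\TOP$. This means precisely that $h$ is injective and carries the initial topology induced by the $\Vie p_i$, i.e., $h$ is a topological embedding.

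Combining this with the preceding proposition, which establishes that $h$ is surjective, I obtain that $h$ is a bijective topological embedding, hence a homeomorphism between the underlying topological spaces of $\Vie L_D$ and $L_{\Vie D}$. Since both spaces are stably compact, both $h$ and $h^{-1}$ are continuous and preserve compactness and saturation with respect to the specialisation order, so both are spectral and $h$ is an isomorphism in $\STCOMP$.

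The substantive work has already been carried out in the preceding proposition (surjectivity, via the characterisation of codirected limits in $\COMPHAUS$ from Theorem~\ref{thm:codirected-limits-COMPHAUS} and the commutation of down-closure with codirected intersections from Proposition~\ref{prop:up_commutes_intersection}) and in Theorem~\ref{cor:Vietoris-preserves-mono-cones} (initiality). The only remaining observation — that a bijective topological embedding between stably compact spaces is automatically an isomorphism in $\STCOMP$ — is essentially formal, so I anticipate no real obstacle; the corollary is a packaging step for the two preceding ingredients.
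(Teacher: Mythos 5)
Your proposal is correct and follows essentially the same route as the paper: the text preceding the proposition already notes that Theorem~\ref{cor:Vietoris-preserves-mono-cones} makes the comparison map $h$ an embedding, and the proposition supplies surjectivity, so the corollary is exactly the packaging step you describe. Your extra remark that a bijective embedding between stably compact spaces is an isomorphism in $\STCOMP$ is a detail the paper leaves implicit, but it does not change the argument.
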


\begin{proof}
  A direct consequence of the previous proposition.
\end{proof}

\begin{corollary}
  \label{cor:theo_codl_lower}
  All lower Vietoris polynomial functors $F:\STCOMP\to\STCOMP$
  preserve codirected limits.
\end{corollary}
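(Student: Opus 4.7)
I would prove the corollary by structural induction on the construction of the lower Vietoris polynomial functor $F:\STCOMP\to\STCOMP$, following the grammar $F ::= F + F \mid F \times F \mid A \mid \Id \mid \Vie$.

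For the base cases: the identity functor preserves all limits; a constant functor $A$ preserves codirected limits because codirected index categories are connected and non-empty, so a constant diagram indexed by them has the constant value as its limit; and the lower Vietoris functor $\Vie:\STCOMP\to\STCOMP$ preserves codirected limits by Corollary~\ref{cor:proper-preserve-codirected-limits}.

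For the inductive step involving products, if $G,H:\STCOMP\to\STCOMP$ both preserve codirected limits, then so does $G\times H$, using the observation in Remark~\ref{rem:pol_lim}: the binary product functor $\times:\STCOMP\times\STCOMP\to\STCOMP$ preserves all limits (being a right adjoint), hence $(G\times H)(L)=G(L)\times H(L)$ is a limit of $G D\times H D$ whenever $L$ is a limit of $D$.

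The one step that requires a little care is the coproduct case: I want to conclude that $G+H$ preserves codirected limits in $\STCOMP$ from the corresponding fact in $\TOP$. The idea is to transport the situation to $\TOP$ via the inclusion functor $\STCOMP\to\TOP$. By Theorem~\ref{thm:propriedades_STCOMP_1}, this inclusion preserves limits and finite coproducts, so a codirected limit computed in $\STCOMP$ is also one in $\TOP$, and the binary coproduct in $\STCOMP$ coincides with that in $\TOP$. Combined with the fact that $+:\TOP\times\TOP\to\TOP$ preserves connected limits (in particular codirected ones, by the proposition preceding Corollary~\ref{pol_conn_lim}), and that the inclusion is fully faithful and reflects limits of diagrams landing in $\STCOMP$, this yields that $G+H$ preserves codirected limits in $\STCOMP$. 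The main (albeit mild) obstacle is precisely this transport argument across the inclusion; everything else is a routine induction.
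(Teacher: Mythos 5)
Your proof is correct and follows essentially the same route as the paper, whose own proof of this corollary is just ``analogous to that of Theorem~\ref{theo_conn}'' --- i.e., the same structural induction over the grammar, with Corollary~\ref{cor:proper-preserve-codirected-limits} supplying the Vietoris base case and the coproduct step transported along the inclusion $\STCOMP\to\TOP$. One small slip: that inclusion is faithful but \emph{not} full (its morphisms are the spectral maps, a proper subclass of the continuous ones), so the reflection of limits you invoke should instead be justified by the monadicity of $\STCOMP\to\TOP$ underlying Theorem~\ref{thm:propriedades_STCOMP_1} (monadic functors create, hence reflect, limits); with that correction the coproduct step goes through exactly as you describe.
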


\begin{proof}
  Analogous to that of Theorem~\ref{theo_conn}.
\end{proof}

\begin{theorem}
  \label{theo_lower_comp}
  For every lower Vietoris polynomial functor $F:\STCOMP\to\STCOMP$,
  the category $\Coalg(F)$ is complete.
\end{theorem}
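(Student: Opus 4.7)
The plan is to replicate, in the stably compact setting, the argument used for polynomial functors over $\TOP$ in Theorem~\ref{theo_comp}. Concretely, I want to invoke Theorem~\ref{theo_linton}, for which I need three ingredients: completeness of the base category, the fact that $F$ is a covarietor, and existence of equalisers in $\Coalg(F)$. Completeness of $\STCOMP$ is Theorem~\ref{thm:propriedades_STCOMP_1}.

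For the covarietor part, I would use Corollary~\ref{cor:theo_codl_lower} to conclude that $F$ preserves codirected limits, and then invoke Barr's theorem \cite[Theorem 2.1]{Bar93} exactly as in the proof of Theorem~\ref{theo_pres_omega}. For equalisers in $\Coalg(F)$, the natural tool is Theorem~\ref{cor:general-completeness-coalgebras2}, whose remaining hypotheses I would verify in order: $\STCOMP$ is cocomplete by Theorem~\ref{thm:propriedades_STCOMP_2}, regularly wellpowered by Theorem~\ref{thm:propriedades_STCOMP_1}, and carries an (Epi, RegMono)-factorisation structure (as observed immediately after Theorem~\ref{thm:propriedades_STCOMP_2}). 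The only non-routine requirement is that $F$ preserves regular monomorphisms.

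The preservation of regular monomorphisms is the main obstacle and I would handle it by induction on the grammar defining $F$, using the characterisation (noted after Theorem~\ref{thm:propriedades_STCOMP_2}) that regular monomorphisms in $\STCOMP$ are precisely the topological subspace embeddings. The identity and constant cases are immediate. The product and sum cases carry over from the analogous $\TOP$-level argument (as in Proposition~\ref{prop_regmono}), since the inclusion $\STCOMP \hookrightarrow \TOP$ preserves limits and finite coproducts by Theorem~\ref{thm:propriedades_STCOMP_1} and therefore both preserves and reflects the relevant embeddings. For the lower Vietoris step, if $m$ is a regular monomorphism in $\STCOMP$ then it is a subspace embedding in $\TOP$; applying Corollary~\ref{cor:Vietoris-pol-preserves-mono-cones} gives that $\Vie m$ is a subspace embedding in $\TOP$ between spaces which, moreover, lie in $\STCOMP$, and so $\Vie m$ is a regular monomorphism in $\STCOMP$.

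With these three ingredients in hand, Theorem~\ref{theo_linton} yields that $\Coalg(F)$ is complete. I expect the argument to be short once the regular-mono preservation is organised, with the only delicate point being the careful translation between the $\TOP$- and $\STCOMP$-level descriptions of regular monomorphisms.
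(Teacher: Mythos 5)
Your proposal is correct and follows essentially the same route as the paper: equalisers in $\Coalg(F)$ via (a version of) Hughes' theorem together with preservation of regular monomorphisms, plus the covarietor property from Corollary~\ref{cor:theo_codl_lower} and Barr's theorem, combined by Theorem~\ref{theo_linton}. The only difference is that the paper invokes Theorem~\ref{theo_hughes} and Corollary~\ref{cor:Vietoris-pol-preserves-mono-cones} directly, whereas you use the generalised Theorem~\ref{cor:general-completeness-coalgebras2} and spell out the translation of regular-mono preservation from $\TOP$ to $\STCOMP$ via the identification of regular monomorphisms with subspace embeddings --- a detail the paper leaves implicit.
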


\begin{proof}
  Firstly, observe that
  Theorems~\ref{thm:propriedades_STCOMP_1},~\ref{thm:propriedades_STCOMP_2}
  and Corollary~\ref{cor:Vietoris-pol-preserves-mono-cones} guarantee
  the hypothesis of Theorem~\ref{theo_hughes}, therefore the category
  $\Coalg(F)$ has equalisers. Then the assertion follows from
  Corollary \ref{cor:theo_codl_lower} and \cite[Theorem 2.1]{Bar93}.
\end{proof}

In regard to final coalgebras, there is still room to improve the
theorem above. Indeed, the inclusion functor $I : \STCOMP \to \TOP$ is
well-behaved with respect to limits, in particular it preserves and
reflects them (\cf~\cite{Sim82}); this allows us to derive the
following theorem.

\begin{theorem}
  Every lower Vietoris polynomial functor in $\TOP$ that can be
  restricted to $\STCOMP$ admits a final coalgebra.
\end{theorem}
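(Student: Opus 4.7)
The plan is to apply Theorem \ref{Th:omega} to the inverse sequence
\[
  1 \longleftarrow F1 \longleftarrow FF1 \longleftarrow \cdots
\]
in $\TOP$, and to do so by computing the required limit inside
$\STCOMP$ and then transporting it along the inclusion functor
$I : \STCOMP \to \TOP$. The starting observation is that a singleton
space is terminal in both $\TOP$ and $\STCOMP$, so $1 \in \STCOMP$.
Writing $F' : \STCOMP \to \STCOMP$ for the restriction of $F$ that is
given by hypothesis (so that $I F' = F I$), an easy induction shows
that every object $F^n 1$ of the inverse sequence lies in $\STCOMP$
and that every connecting map is a spectral map between stably
compact spaces; hence the whole diagram factors as $I \circ D$ for
some codirected diagram $D : \omega^\op \to \STCOMP$.

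Next I would produce the limit. By Theorem
\ref{thm:propriedades_STCOMP_1}, $\STCOMP$ is complete and
$I : \STCOMP \to \TOP$ preserves limits, so $D$ has a limit
cone $(p_n : L \to F^n 1)_{n \in \N}$ in $\STCOMP$ whose image under
$I$ is a limit cone for the original inverse sequence in $\TOP$. It
remains to check that $F$ preserves this limit. By Corollary
\ref{cor:theo_codl_lower} the lower Vietoris polynomial functor
$F' : \STCOMP \to \STCOMP$ preserves codirected limits, so
$(F' p_n : F' L \to F' F^n 1)_{n \in \N}$ is a limit cone in
$\STCOMP$. Applying $I$ and using $I F' = F I$ once more, we see that
$(F p_n : F L \to F^{n+1} 1)_{n \in \N}$ is a limit cone in $\TOP$ for
the shifted diagram; but the shifted and the unshifted diagrams have
the same limit cones (up to reindexing), so $F$ preserves the limit
of the inverse sequence. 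Theorem \ref{Th:omega} then yields a final
$F$-coalgebra, namely the canonical isomorphism $L \to F L$.

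The only slightly delicate point is the transport step: one must be
careful that the limit computed in $\STCOMP$ really is the limit in
$\TOP$, and that $F$ viewed on $\TOP$ (not merely $F'$ on $\STCOMP$)
preserves it. Both facts are immediate from the limit-preserving
inclusion $I : \STCOMP \to \TOP$ and from the equality $I F' = F I$.
No new estimates or topological constructions are required beyond
results already established in the paper.
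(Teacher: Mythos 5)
Your proposal is correct and takes essentially the same route as the paper, which derives the theorem from the limit-preserving (and reflecting) inclusion $\STCOMP\to\TOP$ together with Corollary~\ref{cor:theo_codl_lower}; you have merely made explicit the application of Theorem~\ref{Th:omega} to the terminal sequence computed in $\STCOMP$. No gaps.
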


The lower and the compact Vietoris functors on $\TOP$ are seemingly
unrelated, notwithstanding, these functors are closely related when
restricted, respectively, to $\STCOMP$ and $\COMPHAUS$.
From the description of the lower Vietoris functor $\Vie $ on $\POSCH$ we
obtain that the compact Vietoris functor $\Vie :\COMPHAUS\to\COMPHAUS$ is
the composite
\[
  \COMPHAUS\xrightarrow{\text{discrete}}
  \POSCH\xrightarrow{\Vie}\POSCH\xrightarrow{\text{forgetful}}\COMPHAUS.
\]
Being right adjoint, the functor
$\POSCH\xrightarrow{\text{forgetful}}\COMPHAUS$ preserves limits, but
also the inclusion functor $\COMPHAUS\to\POSCH$ does so. As an
interesting consequence, studying preservation of limits by the lower
Vietoris functor in $\STCOMP\simeq\POSCH$ encompasses studying
preservation of limits by the compact Vietoris in $\COMPHAUS$. In
particular, the following results come for free.
 
\begin{corollary}
  The compact Vietoris $\Vie :\COMPHAUS\to\COMPHAUS$ preserves codirected
  limits.
\end{corollary}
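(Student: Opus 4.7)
The plan is to piece the result together directly from the factorisation displayed immediately before the corollary, namely
\[
  \COMPHAUS\xrightarrow{\text{discrete}}\POSCH\xrightarrow{\Vie}\POSCH\xrightarrow{\text{forgetful}}\COMPHAUS,
\]
and then argue that each of the three constituents preserves codirected limits; composition of functors preserving a class of limits preserves the same class, so the claim follows.

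For the middle factor, I would invoke \emph{Corollary~\ref{cor:proper-preserve-codirected-limits}} together with the isomorphism $\STCOMP\simeq\POSCH$, which tells us that $\Vie:\POSCH\to\POSCH$ preserves codirected limits. For the rightmost factor, the forgetful functor $\POSCH\to\COMPHAUS$ is right adjoint to the ``discrete order'' functor (as recorded in the earlier remark on the adjunction between $\POSCH$ and $\COMPHAUS$), and therefore preserves all limits, in particular codirected ones. For the leftmost factor I would rely on the observation, already flagged in the paragraph preceding the corollary, that the inclusion $\COMPHAUS\to\POSCH$ also preserves limits; this can be seen concretely because a codirected limit in $\POSCH$ of a diagram whose vertices carry the discrete order must itself carry the discrete order (the limit order is inherited pointwise from the factors), so the limit computed in $\POSCH$ coincides with the limit computed in $\COMPHAUS$.

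There is no real obstacle here beyond making sure the factorisation above genuinely represents the compact Vietoris functor on $\COMPHAUS$: one should check that for a compact Hausdorff space $X$, the set of lower-closed (hence arbitrary) subsets of the discretely-ordered $X$, equipped with the topology generated by sets of the form $U^\Diamond$ (for $U\subseteq X$ open) and complements of $K^\Diamond$ (for $K\subseteq X$ closed, hence compact and upper-closed in the discrete order), agrees with the hit-and-miss topology of Example~\ref{exs:Vietoris-functors}(\ref{item:compact}); this is immediate because under the discrete order, upper-open coincides with open, upper-closed with closed, and the complement of $K^\Diamond$ is $(X\setminus K)^\Box$. Once that identification is made, the corollary drops out ``for free'', as the excerpt promises.
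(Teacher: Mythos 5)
Your argument is essentially the paper's own: the corollary is presented as following ``for free'' from the displayed factorisation of the compact Vietoris functor through $\POSCH$, with the middle factor handled by Corollary~\ref{cor:proper-preserve-codirected-limits} via $\STCOMP\simeq\POSCH$ and the two outer factors preserving limits exactly as you describe (the forgetful functor because it is a right adjoint, the discrete functor because limits of discretely ordered diagrams carry the discrete order). One small correction to your verification of the factorisation: ``lower-closed'' in the description of $\Vie$ on $\POSCH$ means \emph{closed in the compact Hausdorff topology and down-closed} --- it is the order-theoretic translation of ``closed in the stably compact topology'' --- so for a discretely ordered $X$ these are the closed, hence compact, subsets of $X$, \emph{not} arbitrary subsets; with that reading the underlying sets of the two constructions agree, which is what the identification actually requires.
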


\begin{corollary}
  \label{cor:theo_codl_compact}
  All compact Vietoris polynomial functors $F:\COMPHAUS\to\COMPHAUS$
  preserve codirected limits.
\end{corollary}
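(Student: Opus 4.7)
The plan is to prove the statement by structural induction on the definition of compact Vietoris polynomial functors, mirroring the proof of Theorem~\ref{theo_conn} for the $\TOP$-case.

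For the base cases: the identity functor $\Id:\COMPHAUS\to\COMPHAUS$ preserves all limits; every constant functor $A:\COMPHAUS\to\COMPHAUS$ preserves connected limits (and codirected diagrams are connected); and the compact Vietoris functor $\Vie:\COMPHAUS\to\COMPHAUS$ preserves codirected limits by the immediately preceding corollary. For the inductive step, if $F,G:\COMPHAUS\to\COMPHAUS$ both preserve codirected limits, then so does $F\times G$, since products in any category preserve all limits componentwise (\cf\ Remark~\ref{rem:pol_lim}). The only non-trivial case is therefore $F+G$, which reduces to showing that the bifunctor $(+):\COMPHAUS\times\COMPHAUS\to\COMPHAUS$ preserves codirected limits, since these are connected.

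To handle the coproduct case, I would transfer the problem to $\TOP$. Finite coproducts in $\COMPHAUS$ are computed as in $\TOP$ (disjoint unions of compact Hausdorff spaces are compact Hausdorff), and the inclusion $\COMPHAUS\to\TOP$ preserves both limits and finite coproducts. Given a codirected diagram $D:I\to\COMPHAUS\times\COMPHAUS$ with limit $(L_1,L_2)$, the canonical comparison $L_1+L_2\to\lim((+)\comp D)$ computed in $\COMPHAUS$ coincides, after applying the limit-preserving inclusion to $\TOP$, with the comparison for the same diagram viewed in $\TOP\times\TOP$. By the $\TOP$-version of the fact that $(+)$ preserves connected limits (the proposition preceding Corollary~\ref{pol_conn_lim}), the latter is an isomorphism in $\TOP$; since $\COMPHAUS\to\TOP$ is full and faithful (and reflects isomorphisms), it is an isomorphism in $\COMPHAUS$.

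I do not expect any serious obstacle: the only point requiring a small amount of care is the coproduct step, and even there all the work has essentially been done in the $\TOP$-setting, so the argument is a routine transfer via the well-behaved inclusion $\COMPHAUS\to\TOP$.
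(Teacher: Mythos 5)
Your proof is correct and follows essentially the same route as the paper: the paper leaves this corollary implicit (it ``comes for free'' from the preceding corollary that $\Vie:\COMPHAUS\to\COMPHAUS$ preserves codirected limits, combined with the structural induction of Theorem~\ref{theo_conn}, exactly as is done for Corollary~\ref{cor:theo_codl_lower}). Your extra care with the coproduct step --- transferring along the limit- and finite-coproduct-preserving, fully faithful inclusion $\COMPHAUS\to\TOP$ --- is a correct way to fill in the detail the paper does not spell out.
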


By taking advantage of the fact that a compact subspace of an
Hausdorff space is a compact Hausdorff space, \cite{Zen70} proves this
property of the compact Vietoris functor even for Hausdorff spaces.

\begin{theorem}
  The compact Vietoris functor $\Vie :\HAUS\to\HAUS$ preserves codirected
  limits.
\end{theorem}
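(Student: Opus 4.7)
The plan is to mimic the argument already used for $\Vie:\STCOMP\to\STCOMP$, now exploiting the fact that a compact subset of a Hausdorff space is a compact Hausdorff space, so as to reduce the problem to the already-known case of $\Vie:\COMPHAUS\to\COMPHAUS$.

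Concretely, let $D:I\to\HAUS$ be a codirected diagram with limit cone $(p_i:L\to D(i))_{i\in I}$ in $\HAUS$; since $\HAUS$ is closed under limits in $\TOP$, this is also the limit in $\TOP$, and in particular an initial codirected monocone of Hausdorff spaces. By Theorem~\ref{cor:Vietoris-preserves-mono-cones} the cone $(\Vie p_i:\Vie L\to \Vie D(i))_{i\in I}$ is an initial monocone in $\TOP$, so the canonical comparison map
\[
  h:\Vie L\to L_{\Vie D},\qquad K\mapsto (p_i[K])_{i\in I},
\]
is a topological embedding. It therefore suffices to show that $h$ is surjective.

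For surjectivity, fix $(K_i)_{i\in I}\in L_{\Vie D}$. The compatibility condition reads $D(j\to i)[K_j]=K_i$ for every $j\to i$ in $I$; no closure is needed because the image of a compact set under a continuous map into a Hausdorff space is automatically compact and hence closed. Each $K_i$, as a compact subspace of the Hausdorff space $D(i)$, lies in $\COMPHAUS$, and the restrictions $D(j\to i)|_{K_j}:K_j\to K_i$ assemble into a codirected diagram $K:I\to\COMPHAUS$ with surjective connecting maps. Let $(q_i:L_K\to K_i)_{i\in I}$ be a limit of $K$ in $\COMPHAUS$; by Theorem~\ref{thm:codirected-limits-COMPHAUS}, each $q_i$ is surjective. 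Composing with the inclusions $K_i\hookrightarrow D(i)$ yields a cone for $D$, whence by the universal property of $L$ a continuous map $\iota:L_K\to L$ with $p_i\comp\iota=q_i$ (modulo the inclusions). The map $\iota$ is injective since the family $(p_i)$ separates points, and its image $\iota[L_K]$ is a compact subset of $L$ as $L_K$ is compact and $L$ is Hausdorff. Hence $\iota[L_K]\in\Vie L$, and $p_i[\iota[L_K]]=q_i[L_K]=K_i$ for every $i\in I$, so $h(\iota[L_K])=(K_i)_{i\in I}$.

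The key technical input is Theorem~\ref{thm:codirected-limits-COMPHAUS}, which guarantees that projections of a codirected limit in $\COMPHAUS$ with surjective connecting maps are themselves surjective; this is precisely what allows us to recover each $K_i$ as the image of the compact Hausdorff limit $L_K$, and hence to lift an arbitrary element of $L_{\Vie D}$ back to a compact subset of $L$. The minor subtlety to keep in mind is that the topology on $\iota[L_K]\subseteq L$ and the $\COMPHAUS$-limit topology on $L_K$ coincide, which follows from the initiality of the limit cones involved, ensuring that $\iota[L_K]$ is a genuine element of $\Vie L$.
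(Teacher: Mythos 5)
Your proof is correct. Note, however, that the paper itself does not prove this theorem: it simply attributes it to Zenor's Lemma B in \cite{Zen70}, recording only the key observation that a compact subspace of a Hausdorff space is compact Hausdorff. What you have done is reconstruct a self-contained argument by transplanting the paper's own strategy for the lower Vietoris functor on $\STCOMP\simeq\POSCH$ --- embedding of the comparison map via preservation of initial codirected monocones (Theorem~\ref{cor:Vietoris-preserves-mono-cones}), then surjectivity via the image characterisation of codirected limits in $\COMPHAUS$ (Theorem~\ref{thm:codirected-limits-COMPHAUS}). Your version is in fact cleaner than the $\POSCH$ one, since in the Hausdorff setting direct images of compact sets are already compact, so the compatibility condition is the exact equality $D(j\to i)[K_j]=K_i$ and no analogue of Proposition~\ref{prop:up_commutes_intersection} (commuting down-closures with codirected intersections) is needed. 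Two small cosmetic points: the injectivity of $\iota$ and the homeomorphism type of $\iota[L_K]$ are both irrelevant --- all you need is that $\iota[L_K]$ is a compact subset of $L$ (automatic, as a continuous image of a compact space) with $p_i[\iota[L_K]]=q_i[L_K]=K_i$; and the injectivity you do invoke follows from $(q_i)$ being a monocone rather than from $(p_i)$ separating points. Neither affects the validity of the argument.
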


The following results then emerge in a straightforward manner.

\begin{theorem}
  \label{theo_codl}
  All compact Vietoris polynomial functors $F:\HAUS\to\HAUS$ preserve
  codirected limits.
\end{theorem}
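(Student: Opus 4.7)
The plan is to proceed by structural induction on the grammar of compact Vietoris polynomial functors, in complete analogy with the proofs of Theorem~\ref{theo_conn} and Corollary~\ref{cor:theo_codl_compact}. The base cases are immediate: the identity functor $\Id : \HAUS \to \HAUS$ preserves all limits, every constant functor $A : \HAUS \to \HAUS$ trivially preserves connected limits (and codirected diagrams are connected), and the compact Vietoris functor $\Vie : \HAUS \to \HAUS$ preserves codirected limits by the theorem just established.

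For the inductive step concerning products, if $F, G : \HAUS \to \HAUS$ both preserve codirected limits, then the pointwise product $F \times G$ is the composite $\langle F,G\rangle$ followed by the binary product functor $\times : \HAUS \times \HAUS \to \HAUS$; since binary products, being right adjoint to the diagonal, preserve all limits, the composite preserves codirected limits. For the inductive step concerning coproducts, I would mirror the argument of Corollary~\ref{pol_conn_lim} in the Hausdorff setting: the point is that finite coproducts of Hausdorff spaces coincide with those computed in $\TOP$, and the inclusion $\HAUS \to \TOP$ preserves limits (since $\HAUS$ is closed under limits in $\TOP$). Combining this with the fact that $+ : \TOP \times \TOP \to \TOP$ preserves connected limits, one concludes that $+ : \HAUS \times \HAUS \to \HAUS$ preserves connected — and therefore codirected — limits, whence $F + G$ preserves codirected limits whenever $F$ and $G$ do.

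The only mild subtlety — scarcely an obstacle — lies in the coproduct case, where one must check that the disjoint-union topology of two Hausdorff spaces is again Hausdorff, so that the coproduct in $\HAUS$ is inherited from $\TOP$ and the $\TOP$-level argument transfers verbatim; this is a routine verification. Assembling the three cases completes the induction.
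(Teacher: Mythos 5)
Your proposal is correct and follows essentially the same route as the paper: the paper's proof simply invokes the preceding theorem (that $\Vie:\HAUS\to\HAUS$ preserves codirected limits, via Zenor) together with the fact that polynomial functors on $\HAUS$ preserve codirected limits, which is exactly the structural induction you spell out. Your extra care in the coproduct case --- checking that coproducts in $\HAUS$ are inherited from $\TOP$ and that the inclusion $\HAUS\to\TOP$ preserves and reflects limits --- is the right way to justify the step the paper leaves implicit.
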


\begin{proof}
  Follows from the previous theorem and the fact that all polynomial
  functors $F : \HAUS \to \HAUS$ preserve codirected limits.
\end{proof}

\begin{corollary}
  Let $F : \HAUS \to \HAUS$ be a compact Vietoris polynomial
  functor. The associated category of coalgebras $\Coalg(F)$ is
  complete.
\end{corollary}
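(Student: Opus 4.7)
The plan is to mirror the proof of Theorem~\ref{theo_lower_comp}, replacing the lower setting over $\STCOMP$ with the compact setting over $\HAUS$. The two ingredients needed to apply Theorem~\ref{theo_linton} are that $\Coalg(F)$ has equalisers and that $F$ is a covarietor over the complete category $\HAUS$.

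For the second ingredient, Theorem~\ref{theo_codl} tells us that every compact Vietoris polynomial functor $F:\HAUS\to\HAUS$ preserves codirected limits, and this property implies that $F$ is a covarietor by \cite[Theorem~2.1]{Bar93}, exactly as in the proof of Theorem~\ref{theo_pres_omega}.

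For the existence of equalisers in $\Coalg(F)$, I would invoke Theorem~\ref{cor:general-completeness-coalgebras2} applied to $\HAUS$. This requires checking that $\HAUS$ is cocomplete, regularly wellpowered, and $(\text{Epi}, \text{RegMono})$-structured (all of these are standard: every morphism into a Hausdorff space factors through the closure of its image, yielding a dense-image continuous map followed by a closed embedding), and that $F$ preserves regular monomorphisms. The latter is where some care is needed: regular monos in $\HAUS$ are precisely the closed embeddings, which are also regular monos in $\TOP$; hence $F$ viewed as a functor on $\TOP$ (which it is, by construction) sends them to regular monos in $\TOP$ by Corollary~\ref{cor:Vietoris-pol-preserves-mono-cones}. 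Since the compact Vietoris functor preserves the Hausdorff property (\cf\ the proof of Theorem~\ref{cor:Vietoris-preserves-mono-cones}) and polynomial constructors do likewise, both source and target of $F$ applied to a regular mono remain Hausdorff, so a regular mono in $\TOP$ between Hausdorff spaces is still a regular mono in $\HAUS$.

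Having equalisers in $\Coalg(F)$ and $F$ a covarietor over the complete category $\HAUS$, Theorem~\ref{theo_linton} then yields completeness of $\Coalg(F)$. The main obstacle I expect is the verification that regular monomorphisms are preserved when transported between the categories $\HAUS$ and $\TOP$; once one is careful that Hausdorffness is preserved at each step of the recursive construction of $F$ and that the notions of regular monomorphism agree on Hausdorff inputs, the rest of the argument is essentially a bookkeeping task that parallels Theorem~\ref{theo_lower_comp}.
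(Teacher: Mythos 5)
Your overall strategy (equalisers via a Hughes-type theorem, covarietor via Theorem~\ref{theo_codl} and \cite[Theorem 2.1]{Bar93}, then Theorem~\ref{theo_linton}) is exactly the paper's, and the covarietor half of your argument is fine. The gap is in your verification that $F$ preserves regular monomorphisms of $\HAUS$. You correctly identify these as the closed embeddings, but your transport argument only shows that $Fm$ is a regular monomorphism in $\TOP$ whose domain and codomain are Hausdorff, i.e.\ an embedding between Hausdorff spaces. An embedding between Hausdorff spaces need not be closed (consider $(0,1)\hookrightarrow\R$), hence need not be a regular monomorphism in $\HAUS$. So the step ``a regular mono in $\TOP$ between Hausdorff spaces is still a regular mono in $\HAUS$'' is false, and your argument does not deliver the hypothesis of Theorem~\ref{cor:general-completeness-coalgebras2} for $\HAUS$ with its (Epi, RegMono) structure.

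There are two repairs, and the paper offers both. Either prove directly that the compact Vietoris functor preserves closed embeddings --- for a closed embedding $m:A\to X$ the image of $\Vie m$ is $\{K\in\Vie X\mid K\subseteq A\}$, which is the complement of $(X\setminus A)^\Diamond$ and hence closed, while $\Vie m$ is an embedding by Lemma~\ref{lem:init}; the polynomial constructors clearly preserve closed embeddings of Hausdorff spaces, so Theorem~\ref{theo_hughes} applies. Or abandon the (Epi, RegMono) structure in favour of the (Surjection, Embedding) factorisation structure on $\HAUS$: there the right-hand class consists of embeddings, which is exactly what your $\TOP$-level detour through Corollary~\ref{cor:Vietoris-pol-preserves-mono-cones} does establish, and Corollary~\ref{cor:general-completeness-coalgebras} then yields equalisers in $\Coalg(F)$. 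With either fix the remainder of your argument goes through as written.
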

\begin{proof}
  Being an epireflective subcategory of $\TOP$, the category $\HAUS$
  is complete and cocomplete, and regularly wellpowered. Furthermore,
  $\HAUS$ is (Epi,RegMono)-structured; but note that $f:X\to Y$ in
  $\HAUS$ is a regular monomorphism if and only if $f$ is a closed
  embedding. It is straightforward to prove that the compact Vietoris
  functor preserves closed embeddings; therefore, by
  Theorem~\ref{theo_hughes}, $\Coalg(F)$ has equalisers. As an
  alternative, $\HAUS$ is also (Surjection, Embedding)-structured; and
  now use Corollary~\ref{cor:general-completeness-coalgebras} and
  Corollary~\ref{cor:Vietoris-pol-preserves-mono-cones} to conclude
  that $\Coalg(F)$ has equalisers. Then the assertion follows from
  Theorem~\ref{theo_codl} and \cite[Theorem 2.1]{Bar93}.
\end{proof}

\begin{theorem}
  Let $F : \TOP \to \TOP$ be a Vietoris polynomial functor that can be
  restricted to $\HAUS$. Then, the category $\Coalg(F)$ has a final
  coalgebra.
\end{theorem}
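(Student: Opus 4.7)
The plan is to invoke Theorem~\ref{Th:omega}, thereby reducing the problem to exhibiting, in $\TOP$, a limit $L$ of the chain
\[
  1 \longleftarrow F1 \longleftarrow FF1 \longleftarrow \cdots
\]
that is moreover preserved by $F$. This mirrors the strategy already used just above for $\STCOMP$.

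First I would observe that the terminal object $1$ of $\TOP$ (the one-point space) is Hausdorff, and that since $F$ restricts to a functor $F':\HAUS\to\HAUS$, each iterate $F^n 1$ lies in $\HAUS$. Thus the whole chain is a codirected diagram in $\HAUS$. Next I would use the fact that $\HAUS$ is an epireflective (hence reflective) subcategory of $\TOP$: the inclusion $\HAUS\hookrightarrow\TOP$ is right adjoint, so $\HAUS$ is complete and the inclusion preserves all limits. Consequently the codirected limit $L$ of the chain computed in $\HAUS$ is also a limit of the very same diagram in $\TOP$.

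It then remains to verify that $F$ preserves this particular limit. By Theorem~\ref{theo_codl} the restricted functor $F':\HAUS\to\HAUS$ preserves codirected limits, so $F'L$ is the limit of $F'\circ D$ in $\HAUS$; transporting through the limit-preserving inclusion, $FL$ is the limit of $F\circ D$ in $\TOP$. Theorem~\ref{Th:omega} then delivers the canonical isomorphism $L\to FL$ as a final $F$-coalgebra.

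I do not anticipate a conceptual obstacle, only a small amount of bookkeeping. The one point where a little care is needed is the appeal to Theorem~\ref{theo_codl}: that result is stated for \emph{compact} Vietoris polynomial functors on $\HAUS$, whereas our hypothesis is that $F$ is a Vietoris polynomial functor on $\TOP$ that \emph{restricts} to $\HAUS$. Since the lower Vietoris construction generically leaves the Hausdorff category, the restriction hypothesis essentially forces any Vietoris building block of $F$ to be of compact type, and hence $F'$ falls under the hypotheses of Theorem~\ref{theo_codl}.
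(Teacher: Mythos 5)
Your proposal is correct and follows essentially the same route as the paper: the paper's (very terse) proof also rests on computing the terminal sequence's limit inside $\HAUS$, where Theorem~\ref{theo_codl} gives preservation of codirected limits, and then transporting it to $\TOP$ via the limit-preserving and -reflecting inclusion $\HAUS\to\TOP$. Your closing remark that the restriction hypothesis effectively rules out lower Vietoris building blocks (whose values are generally only $T_0$) is a sensible reading of why the appeal to the compact-case Theorem~\ref{theo_codl} is legitimate here.
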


\begin{proof}
  A consequence of the fact that $I:\HAUS\to\TOP$ preserves and
  reflects limits (\cf\ \cite{AHS90}).
\end{proof}

To close this section we will relate its results with the works
\cite{Kupke:2004,BKR07}. Recall that the former considers compact
Vietoris polynomial functors over $\STONE$. The latter consider
coalgebras for the lower Vietoris functor in the category $\SPECTRAL$
of spectral spaces and spectral maps.

The categories $\STONE$ and $\SPECTRAL$ have a close relation with
some of the categories we considered so far, in particular $\COMPHAUS$
and $\STCOMP$. By taking advantage of this relation we will see that
the fact that every compact Vietoris functor $F : \STONE \to \STONE$
admits a final coalgebra (as shown in \cite{Kupke:2004}) is actually a
consequence of Corollary~\ref{cor:theo_codl_compact}, and the fact
that every lower Vietoris polynomial functor
$F: \SPECTRAL \to \SPECTRAL$ admits a final coalgebra is a direct
consequence of Theorem \ref{theo_lower_comp}.

\begin{remark}
  \label{rem_stone}
  Recall that a Stone space $X$ is a compact Hausdorff space with a
  basis of clopen sets. This is equivalent to saying that $X$ is
  compact Hausdorff and that the cone of continuous maps $(X \to 2)$
  to the discrete two-point-space is initial.
\end{remark}

\begin{lemma}
  Let $(X \to X_i)_{i \in I}$ be a initial cone in $\COMPHAUS$ where
  $X_i$ is a Stone space for every $i \in I$. Then $X$ is a Stone
  space as well.
\end{lemma}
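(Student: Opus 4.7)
The plan is to use the alternative characterisation of Stone spaces given in Remark~\ref{rem_stone}: a compact Hausdorff space $X$ is Stone if and only if the cone of all continuous maps $X \to 2$ (into the discrete two-point space) is initial. Since $X$ lies in $\COMPHAUS$ by hypothesis, it suffices to exhibit the $2$-valued cone on $X$ as initial.

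First I would recall the standard fact that initial cones compose, in the following sense: given an initial cone $(f_i : X \to X_i)_{i \in I}$ (with respect to the forgetful functor to $\SET$) and, for each $i \in I$, an initial cone $(g_{i,j} : X_i \to Y_{i,j})_{j \in J_i}$, the composite family $(g_{i,j} \comp f_i : X \to Y_{i,j})_{i \in I, j \in J_i}$ is again initial. This is immediate from the universal property: a map $h : Z \to X$ is continuous iff every $f_i \comp h$ is continuous, and each $f_i \comp h : Z \to X_i$ is continuous iff every $g_{i,j} \comp f_i \comp h$ is continuous.

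Next I would apply this with $Y_{i,j} = 2$. Each $X_i$ is Stone, so the cone consisting of all continuous maps $X_i \to 2$ is initial. Combined with the assumed initiality of $(X \to X_i)_{i \in I}$, composition yields that the family of all composites $X \to X_i \to 2$ is an initial cone from $X$ into the discrete $2$. Since any cone containing an initial subcone is again initial, the full cone of all continuous maps $X \to 2$ is initial. Together with $X \in \COMPHAUS$, the characterisation in Remark~\ref{rem_stone} then gives that $X$ is a Stone space.

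There is no real obstacle here: the only subtle point is that ``initial'' should be understood consistently (with respect to the forgetful functor to $\SET$), but Example~\ref{exs:initial}(2) ensures that $\COMPHAUS$-initial cones are precisely the $\TOP$-initial cones with compact Hausdorff domain, so this causes no issue. The whole argument reduces to the transitivity of initial lifts.
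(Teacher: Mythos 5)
Your proposal is correct and follows essentially the same route as the paper's own proof, which likewise combines the characterisation of Stone spaces via initial cones into the discrete two-point space with the closure of initial cones under composition. The extra details you supply (that a cone containing an initial subcone is initial, and that $\COMPHAUS$-initiality agrees with $\TOP$-initiality here) are exactly the points the paper leaves implicit.
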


\begin{proof}
  Follows from the fact that each space $X_i$ defines a initial cone
  of continuous maps $(X_i \to 2)$ and that initial cones are closed
  under composition.
\end{proof}

\begin{corollary}
  The canonical forgetful functor $\STONE \to \COMPHAUS$ creates
  limits. Hence, the category $\STONE$ is complete, and the functor
  $\STONE \to \COMPHAUS$ preserves and reflects limits.
\end{corollary}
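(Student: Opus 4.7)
The plan is to exhibit, for any diagram $D\colon I\to\STONE$, a canonical lift to $\STONE$ of the limit of the composite $I\to\STONE\to\COMPHAUS$, and to verify that this lift satisfies the universal property inside $\STONE$; the key input is the previous lemma, which says that initial cones of Stone spaces in $\COMPHAUS$ have Stone domain.

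First I would take a diagram $D\colon I\to\STONE$ and form the limit $(p_i\colon L\to D(i))_{i\in I}$ of its image under the inclusion $\STONE\to\COMPHAUS$; this limit exists because $\COMPHAUS$ is complete. Since every limit cone is a monocone and, by Examples~\ref{exs:initial}(2), every monocone in $\COMPHAUS$ is initial in $\TOP$, the cone $(p_i\colon L\to D(i))_{i\in I}$ is initial in $\TOP$. The previous lemma then applies directly and yields that $L$ is a Stone space. Hence the cone lifts to a cone in $\STONE$; uniqueness of the lift is automatic because $\STONE$ is a full subcategory of $\COMPHAUS$, so the underlying set, topology, and arrows are all determined by their images in $\COMPHAUS$.

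It remains to check the universal property inside $\STONE$. Given any cone $(f_i\colon X\to D(i))_{i\in I}$ in $\STONE$, viewing it in $\COMPHAUS$ produces a unique mediating morphism $f\colon X\to L$ in $\COMPHAUS$; fullness of $\STONE\hookrightarrow\COMPHAUS$ promotes $f$ to a morphism in $\STONE$, and this promotion is again unique. Thus $\STONE\to\COMPHAUS$ creates limits. The ``hence'' part follows immediately: completeness of $\COMPHAUS$ together with creation of limits gives completeness of $\STONE$; a functor that creates limits along a cocomplete base preserves them; and reflection follows because any cone in $\STONE$ whose image in $\COMPHAUS$ is a limit must, by the uniqueness clause of creation, coincide with the lifted limit.

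There is no real obstacle here: the only non-formal step is closure of Stone spaces under the construction of limits in $\COMPHAUS$, and this is exactly what the preceding lemma delivers via the characterisation recalled in Remark~\ref{rem_stone}.
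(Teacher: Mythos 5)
Your argument is correct and is exactly the route the paper intends: form the limit in $\COMPHAUS$, note the limit cone is a monocone and hence initial (Examples~2.8(2)), apply the preceding lemma to conclude the vertex is a Stone space, and use fullness of the inclusion for the universal property and uniqueness of the lift. The paper leaves this implicit as an immediate consequence of the lemma, so there is nothing to add.
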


\begin{theorem}
  Every compact Vietoris polynomial functor $F : \STONE \to \STONE$
  preserves codirected limits.
\end{theorem}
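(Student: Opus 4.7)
The plan is to bootstrap from the analogous statement in $\COMPHAUS$, namely Corollary~\ref{cor:theo_codl_compact}, using the preceding corollary which says that the inclusion $\STONE\to\COMPHAUS$ both preserves and reflects limits.

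First I would argue that every compact Vietoris polynomial functor $F:\STONE\to\STONE$ is the restriction of a compact Vietoris polynomial functor $\widetilde{F}:\COMPHAUS\to\COMPHAUS$. This is a routine structural induction on the defining grammar: the identity, a constant functor given by a Stone space, binary products and coproducts, and the compact Vietoris construction all send Stone spaces to Stone spaces. The only non-formal point is that $\Vie X$ is a Stone space whenever $X$ is; this is classical and can be seen by noting that $\Vie X$ is compact Hausdorff and that the subbasic opens $U^\Box$ and $U^\Diamond$ are clopen whenever $U\subseteq X$ is clopen, yielding a basis of clopens.

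Next, let $D:I\to\STONE$ be a codirected diagram and let $(p_i:L\to D(i))_{i\in I}$ be its limit cone. By the previous corollary, composing with the inclusion $\STONE\to\COMPHAUS$ produces the same cone, which is a limit of $D:I\to\COMPHAUS$. Applying Corollary~\ref{cor:theo_codl_compact} to $\widetilde{F}$, the cone $(\widetilde{F}p_i:\widetilde{F}L\to\widetilde{F}D(i))_{i\in I}$ is a limit of $\widetilde{F}D$ in $\COMPHAUS$. Since $F$ agrees with $\widetilde{F}$ on objects and morphisms of $\STONE$, all spaces in this cone are Stone spaces, so reflection of limits by $\STONE\to\COMPHAUS$ gives that $(Fp_i:FL\to FD(i))_{i\in I}$ is a limit of $FD$ in $\STONE$.

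There is no real obstacle here beyond bookkeeping; the only mild subtlety is ensuring that the grammar defining a compact Vietoris polynomial functor on $\STONE$ truly matches, up to restriction, a grammar on $\COMPHAUS$, which is precisely what the inductive argument in the first paragraph establishes.
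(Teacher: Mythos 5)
Your proposal is correct and follows essentially the same route as the paper: view $F$ as (the restriction of) a compact Vietoris polynomial functor on $\COMPHAUS$, invoke Corollary~\ref{cor:theo_codl_compact} there, and transfer the conclusion back via the fact that $\STONE\to\COMPHAUS$ preserves and reflects limits. You merely spell out the structural induction and the clopen-subbase argument that the paper leaves implicit in the phrase ``is also a functor $F:\COMPHAUS\to\COMPHAUS$''.
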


\begin{proof}
  Observe that every compact Vietoris polynomial functor
  $F : \STONE \to \STONE$ is also a functor
  $F: \COMPHAUS \to \COMPHAUS$ and that the diagram below
  commutes. The claim then follows directly from the fact that the
  functor $\STONE \to \COMPHAUS$ preserves and reflects limits.
  \[
    \xymatrix{
      \STONE \ar[d] \ar[r]^{F} & \STONE  \ar[d] \\
      \COMPHAUS \ar[r]_{F} & \COMPHAUS }
  \]
\end{proof}

\begin{corollary}
  Every compact Vietoris polynomial functor $F : \STONE \to \STONE$
  admits a final coalgebra.
\end{corollary}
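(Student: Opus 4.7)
The plan is to invoke Theorem \ref{Th:omega} directly. All the necessary ingredients are already in place: the category $\STONE$ has a terminal object, namely the one-point space $1$; by the corollary just above, $\STONE$ is complete, so in particular the inverse sequence
\[
  1 \longleftarrow F1 \longleftarrow FF1 \longleftarrow \cdots
\]
admits a limit $L$ in $\STONE$; and by the theorem immediately preceding, the functor $F:\STONE\to\STONE$ preserves codirected limits, hence in particular it preserves this inverse sequence limit (which is a codirected diagram of shape $\omega^{\op}$, as recalled in the example following the definition of codirected diagram).

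With these three facts assembled, Theorem \ref{Th:omega} applies verbatim and yields that the canonical comparison map $L\to FL$ is an isomorphism that constitutes a final $F$-coalgebra.

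There is no real obstacle here, as the work was done in establishing that every compact Vietoris polynomial functor $F:\STONE\to\STONE$ preserves codirected limits; the present corollary is a clean application of the general principle for building final coalgebras as terminal sequence limits. If one wanted to be pedantic, the only thing to mention explicitly is that $1\in\STONE$ (trivially a Stone space) and that the countable codirected limit in $\STONE$ coincides with the one computed in $\COMPHAUS$, which follows from the fact that $\STONE\to\COMPHAUS$ preserves and reflects limits.
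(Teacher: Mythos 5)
Your proof is correct and is essentially the argument the paper intends: the corollary is stated without proof, but it is placed immediately after the completeness of $\STONE$ and the theorem that every compact Vietoris polynomial functor on $\STONE$ preserves codirected limits, precisely so that Theorem~\ref{Th:omega} applies to the terminal sequence $1\leftarrow F1\leftarrow FF1\leftarrow\cdots$ as you describe. Your assembly of the three ingredients (terminal object, existence of the limit, preservation of the codirected limit) matches the pattern used for the analogous corollaries elsewhere in the paper.
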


Analogous results can be achieved for the category $\SPECTRAL$. To see
this let us start a remark akin to Remark \ref{rem_stone}.

\begin{remark}
  Recall that a spectral space $X$ is a stably compact space with a
  basis of compact open subsets. This is equivalent to saying that $X$
  is stably compact and that the cone of spectral maps $(X \to 2)$ to
  the Sierpi{\' n}ski space is initial.
\end{remark}

\begin{lemma}
  Let $(X \to X_i)_{i \in I}$ be a initial cone in $\STCOMP$ where
  $X_i$ is a spectral space for every $i \in I$. Then $X$ is a
  spectral space as well.
\end{lemma}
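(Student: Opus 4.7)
The plan is to mimic exactly the argument used in the Stone/$\COMPHAUS$ analogue proved just above, now invoking the spectral-space characterisation in the preceding remark in place of the Stone-space characterisation in Remark~\ref{rem_stone}. Thus the task reduces to showing two things about $X$: that $X$ is stably compact, and that the cone of all spectral maps $X \to 2$ into the Sierpiński space is initial in $\STCOMP$.

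First, since the cone $(X \to X_i)_{i \in I}$ lives in $\STCOMP$, the space $X$ is already stably compact by hypothesis, so there is nothing to check on that front. It remains to verify initiality of the cone into the Sierpiński space. For each $i \in I$, the space $X_i$ is spectral, so by the remark the cone $(X_i \to 2)$ of spectral maps into the Sierpiński space is initial in $\STCOMP$. By hypothesis, the given cone $(X \to X_i)_{i \in I}$ is also initial. I would then use the standard fact that initial cones are closed under composition: composing the initial cone $(X \to X_i)_{i \in I}$ with the family of initial cones $(X_i \to 2)$, one obtains an initial cone from $X$ into copies of the Sierpiński space, each of whose legs is a spectral map $X \to 2$.

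This cone is a subcone of the full cone of all spectral maps $X \to 2$ into the Sierpiński space, which is therefore also initial (any cone containing an initial subcone is initial). Invoking the characterisation in the preceding remark again, this shows that $X$ is a spectral space, as required.

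I do not anticipate any substantial obstacle here: the whole content is that the spectral-space condition is formulated as initiality of a specific cone into $2$, and initiality behaves well under composition, so the argument is essentially a transcription of the Stone-space proof with $\COMPHAUS$ replaced by $\STCOMP$ and the discrete two-point space replaced by the Sierpiński space.
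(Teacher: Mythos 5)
Your proposal is correct and follows exactly the paper's own argument: the paper's proof likewise invokes the characterisation of spectral spaces via initiality of the cone of spectral maps into the Sierpi\'nski space and closure of initial cones under composition. Your extra observation that a cone containing an initial subcone is itself initial is a sound way of tidying the final step, which the paper leaves implicit.
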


\begin{proof}
  Follows from the fact that each space $X_i$ defines a initial cone
  of continuous maps $(X_i \to 2)$ to the Sierpi{\' n}ski space and
  that initial cones are closed under composition.
\end{proof}

\begin{corollary}
  The canonical forgetful functor $\SPECTRAL \to \STCOMP$ creates
  limits. Hence, the category $\SPECTRAL$ is complete, and the functor
  $\SPECTRAL \to \STCOMP$ preserves and reflects limits.
\end{corollary}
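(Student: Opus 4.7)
The plan is to imitate verbatim the argument already given for the functor $\STONE \to \COMPHAUS$, substituting the previous lemma for its Sierpi\'nski-space analogue. The statement is a two-part assertion: creation of limits, and then as a consequence completeness together with preservation and reflection.

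\textbf{Creation of limits.} Let $D : I \to \SPECTRAL$ be a small diagram. Composing with the forgetful functor produces a diagram in $\STCOMP$, which, by Theorem~\ref{thm:propriedades_STCOMP_1}, admits a limit cone $(p_i : L \to D(i))_{i \in I}$. The next step is to observe that this cone is initial in $\STCOMP$ with respect to the underlying-set functor: this is because $\STCOMP \to \TOP$ preserves limits (Theorem~\ref{thm:propriedades_STCOMP_1}), limits in $\TOP$ are carried by initial cones, and then Theorem~\ref{theo_gen_init} applies. Hence $(p_i)_{i \in I}$ is initial in $\STCOMP$.

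\textbf{Lifting to $\SPECTRAL$.} Since each $D(i)$ is a spectral space, the preceding lemma gives that $L$ is spectral. Each $p_i$ is already a spectral map (it is a morphism in $\STCOMP$), so the cone lifts to a cone in $\SPECTRAL$. To verify the universal property in $\SPECTRAL$, consider any cone $(f_i : X \to D(i))_{i \in I}$ in $\SPECTRAL$; composing with the inclusion yields a cone in $\STCOMP$, which factors uniquely through $L$ by a spectral map $f : X \to L$, and $f$ is a morphism in $\SPECTRAL$ because $\SPECTRAL \to \STCOMP$ is a full subcategory inclusion. This gives uniqueness of the lift on the nose, which is precisely creation of limits.

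\textbf{The ``hence''.} Completeness of $\SPECTRAL$ is now immediate from completeness of $\STCOMP$ combined with creation. Preservation of limits by $\SPECTRAL \to \STCOMP$ is built into creation; reflection is equally immediate, since if a cone in $\SPECTRAL$ is sent to a limit in $\STCOMP$, the created limit in $\SPECTRAL$ sitting over it must coincide with it up to unique isomorphism. I do not foresee any real obstacle: all the substantive content is packaged into the preceding lemma (which identifies $\SPECTRAL$ as closed under initial cones in $\STCOMP$), and the rest is the standard ``closed-under-limits implies creates limits'' argument for full subcategories.
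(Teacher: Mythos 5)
Your proposal is correct and follows exactly the route the paper intends: the corollary is stated without proof as an immediate consequence of the preceding lemma, and your argument (limit cones in $\STCOMP$ are initial over $\SET$/$\TOP$ by Theorem~\ref{theo_gen_init}, the lemma then forces the vertex to be spectral, and the full-subcategory argument yields creation, hence completeness, preservation and reflection) is precisely the standard filling-in of that step, mirroring the $\STONE\to\COMPHAUS$ case.
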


\begin{theorem}
  Every lower Vietoris polynomial functor
  $F : \SPECTRAL \to \SPECTRAL$ preserves codirected limits.
\end{theorem}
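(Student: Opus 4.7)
The plan is to mimic the analogous argument given just above for the category $\STONE$ and the compact Vietoris functor. The essential ingredients are already in place: by the preceding corollary, the canonical forgetful functor $\SPECTRAL \to \STCOMP$ preserves and reflects limits; and by Corollary~\ref{cor:theo_codl_lower}, every lower Vietoris polynomial functor $F:\STCOMP\to\STCOMP$ preserves codirected limits. So the strategy is simply to transfer preservation along the embedding $\SPECTRAL\hookrightarrow\STCOMP$.

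First, I would observe that a lower Vietoris polynomial functor $F:\SPECTRAL\to\SPECTRAL$ can be regarded as a lower Vietoris polynomial functor $F:\STCOMP\to\STCOMP$ (with the same recursive definition; one only has to check that the constant functors at spectral spaces are still lower Vietoris polynomial in the enlarged setting, which is immediate from the grammar). Consequently, writing $I:\SPECTRAL\to\STCOMP$ for the inclusion, the square
\[
  \xymatrix{
    \SPECTRAL \ar[d]_{I} \ar[r]^{F} & \SPECTRAL \ar[d]^{I} \\
    \STCOMP \ar[r]_{F} & \STCOMP
  }
\]
commutes.

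Next, given a codirected diagram $D:J\to\SPECTRAL$ with limit cone $\C=(p_j:L\to D(j))_{j\in J}$, I would apply $I$ to obtain the codirected limit cone $I\C$ in $\STCOMP$ (using preservation). Applying $F:\STCOMP\to\STCOMP$ and invoking Corollary~\ref{cor:theo_codl_lower} yields that $(IFp_j:IFL\to IFD(j))_{j\in J}$ is a limit cone for $IFD$ in $\STCOMP$. Since the same cone sits inside $\SPECTRAL$ and $I$ reflects limits, it follows that $(Fp_j:FL\to FD(j))_{j\in J}$ is a limit cone for $FD$ in $\SPECTRAL$, which is precisely what we need.

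There is no real obstacle here: the whole argument is a purely formal consequence of preservation/reflection of limits along $I$ and the already established preservation result on $\STCOMP$. The only point that deserves a line of justification is that $F$ really does restrict consistently between the two settings, but this is clear by induction on the structure of the grammar defining lower Vietoris polynomial functors, since both the lower Vietoris functor and the polynomial constructors on $\STCOMP$ agree with their counterparts on $\SPECTRAL$ along $I$.
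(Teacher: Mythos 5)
Your proposal is correct and follows essentially the same route as the paper: view $F$ as a lower Vietoris polynomial functor on $\STCOMP$, use Corollary~\ref{cor:theo_codl_lower} there, and transfer the conclusion back along the inclusion $\SPECTRAL\to\STCOMP$, which preserves and reflects limits. You merely spell out in more detail the steps that the paper's proof leaves implicit.
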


\begin{proof}
  Observe that every lower Vietoris polynomial functor
  $F : \SPECTRAL \to \SPECTRAL$ is also a functor
  $F: \STCOMP \to \STCOMP$ and that the diagram below commutes. The
  claim then follows directly from the fact that the functor
  $\SPECTRAL \to \STCOMP$ preserves and reflects limits.
  \[
    \xymatrix{
      \SPECTRAL \ar[d] \ar[r]^{F} & \SPECTRAL  \ar[d] \\
      \STCOMP \ar[r]_{F} & \STCOMP }
  \]
\end{proof}

\begin{corollary}
  Every lower Vietoris polynomial functor
  $F : \SPECTRAL \to \SPECTRAL$ admits a final coalgebra.
\end{corollary}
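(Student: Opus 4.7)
The plan is a direct application of Theorem~\ref{Th:omega} to the category $\SPECTRAL$, paralleling the proof strategy used for the analogous statements in $\COMPHAUS$, $\HAUS$, and $\STONE$ earlier in the section. To invoke that theorem I need two ingredients: a final object in $\SPECTRAL$, and preservation by $F$ of the limit of the inverse sequence
\[
  1 \longleftarrow F1 \longleftarrow FF1 \longleftarrow \dots
\]

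First I would observe that $\SPECTRAL$ has a final object. This follows from the corollary just proved, which states that the forgetful functor $\SPECTRAL \to \STCOMP$ creates limits; since $\STCOMP$ is complete (Theorem~\ref{thm:propriedades_STCOMP_1}), so is $\SPECTRAL$, and in particular it has a terminal object. Second, as noted in the example following the definition of codirected diagrams, inverse sequences are codirected, so their limits are codirected limits. The theorem immediately preceding the statement under consideration establishes that every lower Vietoris polynomial functor $F : \SPECTRAL \to \SPECTRAL$ preserves codirected limits; in particular, it preserves the limit of the inverse sequence above.

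With both hypotheses of Theorem~\ref{Th:omega} in place, the canonical map $L \to FL$ out of this limit is a final $F$-coalgebra, which is exactly what needs to be proved. There is no real obstacle here: the entire infrastructure -- completeness of $\SPECTRAL$, its good limit-theoretic relationship with $\STCOMP$, and preservation of codirected limits by lower Vietoris polynomial functors on $\SPECTRAL$ -- has already been established in the immediately preceding results. The corollary is a one-line application of Theorem~\ref{Th:omega}, and my proposal is simply to write down that application cleanly.
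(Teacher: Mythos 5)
Your proposal is correct and matches the paper's (implicit) argument: the corollary is meant to follow from the completeness of $\SPECTRAL$ established via the creation of limits by $\SPECTRAL\to\STCOMP$, the preservation of codirected limits by lower Vietoris polynomial functors on $\SPECTRAL$ from the immediately preceding theorem, and Theorem~\ref{Th:omega} applied to the inverse sequence $1\leftarrow F1\leftarrow FF1\leftarrow\dots$. Nothing is missing.
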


\section{Limits via adjunction}
\label{Sec_morelimits}

In this section we extend the results of the previous section to
subfunctors of (Vietoris) polynomial functors, by making use of
adjunction. To achieve this we introduce a number of conditions which
guarantee that a functor $\Coalg(F)\to\Coalg(G)$ induced by a natural
transformation $F\to G$ has a right adjoint: note that if the functor
$\Coalg(F)\to\Coalg(G)$ is also fully faithful, then we can easily
show that $\Coalg(F)$ is ``as complete as'' $\Coalg(G)$. A key
property we use here is a straightforward generalisation of the notion
of taut natural transformation originally introduced in \cite{Mob83}
and \cite{Man02}.

We start with the definition below.

\begin{definition}
  Every natural transformation $\sigma : F \to G$ induces a functor
  $I : \Coalg(F) \to \Coalg(G)$, defined by
  \begin{flalign*}
    & I (X,c) = (X, \sigma_X \comp c), \hspace{0.5cm} I f = f.
  \end{flalign*}
\end{definition}

\noindent
Note that the functor $I : \Coalg(F) \to \Coalg(G)$ is
faithful. Moreover,

\begin{proposition}
  \label{prop:mono_inc}
  If $\sigma : F \to G$ is a monomorphic natural transformation, then
  the functor $I : \Coalg(F) \to \Coalg(G)$ is also full.
\end{proposition}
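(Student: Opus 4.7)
The plan is a short diagram chase using naturality of $\sigma$ together with the assumption that each component $\sigma_Y$ is a monomorphism (which is how we interpret ``monomorphic natural transformation'' in the functor category). Since $I$ is already faithful and acts as the identity on morphisms, fullness amounts to showing that every $G$-coalgebra morphism between images $I(X,c)$ and $I(Y,d)$ is automatically an $F$-coalgebra morphism between $(X,c)$ and $(Y,d)$.

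Concretely, I would fix $F$-coalgebras $(X,c)$ and $(Y,d)$ and take a morphism $f : I(X,c) \to I(Y,d)$ in $\Coalg(G)$, i.e.\ a $\catC$-morphism $f : X \to Y$ satisfying
\[
  Gf \comp \sigma_X \comp c = \sigma_Y \comp d \comp f,
\]
by definition of $I$. The naturality square for $\sigma : F \to G$ at $f$ reads $\sigma_Y \comp Ff = Gf \comp \sigma_X$, so precomposing this identity with $c$ gives $\sigma_Y \comp Ff \comp c = Gf \comp \sigma_X \comp c$. Combining with the displayed equation above yields
\[
  \sigma_Y \comp (Ff \comp c) = \sigma_Y \comp (d \comp f).
\]

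Since $\sigma : F \to G$ is monomorphic, the component $\sigma_Y$ is a monomorphism in $\catC$, so we can cancel it on the left to obtain $Ff \comp c = d \comp f$; that is, $f$ is a homomorphism $(X,c) \to (Y,d)$ of $F$-coalgebras whose image under $I$ is the given morphism. Hence $I$ is full. I do not anticipate any real obstacle: the only subtle point is simply fixing the convention that a monomorphic natural transformation means one whose every component is a mono, after which the argument is a two-line diagram chase.
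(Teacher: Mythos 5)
Your proof is correct and follows exactly the same route as the paper's: start from the defining equation $Gf \comp \sigma_X \comp c = \sigma_Y \comp d \comp f$, rewrite the left-hand side via the naturality square for $\sigma$ at $f$, and cancel the monomorphism $\sigma_Y$ to conclude $Ff \comp c = d \comp f$. The paper merely states this diagram chase in one line, whereas you spell it out; there is no substantive difference.
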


\begin{proof}
  Take a homomorphism $ f : I(X,c) \to I(Y,d)$. By assumption, the
  equation $G f \comp \sigma_X \comp c = \sigma_Y \comp d \comp f$
  holds. Then, use naturality and the fact that
  $\sigma_Y : FY \to G Y$ is a monomorphism to show that
  $F f \comp c = d \comp f$.
\end{proof}

\noindent
We will now show that, under some assumptions on the natural
transformation $\sigma : F \to G$, the functor above has a right
adjoint.

\begin{assumption}\label{ass:for-I-beeing-adjoint}
  In the remainder of this section the letter $\catC$ denotes a
  category with an $(E,M)$-factorisation structure where $M$ is
  included in the class of monomorphisms. We assume that $\catC$ is
  $M$-wellpowered, that $\sigma : F \to G$ is a natural transformation
  between endofunctors on $\catC$ where every component $\sigma_X$ is
  in $M$, and that $G$ sends morphisms in $M$ to morphisms in $M$.
\end{assumption}

\begin{theorem}\label{thm:adjoint_from_nat_transformation}
  Under Assumption~\ref{ass:for-I-beeing-adjoint} with $\catC$
  cocomplete, the functor $I : \Coalg(F) \to \Coalg(G)$ is left
  adjoint.
\end{theorem}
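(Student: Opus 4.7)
The plan is to apply (the dual of) Freyd's General Adjoint Functor Theorem to $I : \Coalg(F) \to \Coalg(G)$. Since we want $I$ to be a left adjoint, we must verify that $I$ preserves all small colimits, and that the solution set condition (dually stated) holds for each object of $\Coalg(G)$.

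First I would handle cocompleteness and colimit preservation: because $\catC$ is cocomplete and the forgetful functor $\Coalg(F) \to \catC$ creates colimits (dually to Theorem~\ref{Th:limpres} applied to the algebra case, \cf~\cite{rutten2000,Ada05}), the category $\Coalg(F)$ is cocomplete. The same argument applies to $\Coalg(G)$. Since $U_G \comp I = U_F$ and both forgetful functors create colimits, $I$ preserves colimits.

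The main work is establishing the solution set condition. For a fixed $(X,c)\in\Coalg(G)$, I claim the class
\[
  \mathcal{S} = \left\{\,(m:Y\to X,\,d:Y\to FY)\,\mid\, m\in M \text{ and } Gm\comp\sigma_Y\comp d = c\comp m\,\right\}
\]
is essentially small (\ie, a set up to isomorphism over $X$), and that every morphism $g:I(Y',d')\to (X,c)$ in $\Coalg(G)$ factors through some element of $\mathcal{S}$. Smallness follows from $M$-wellpoweredness together with the observation that, for a given $m\in M$, the coalgebra structure $d$ is \emph{unique} when it exists: since $\sigma_Y\in M$ and $Gm\in M$, their composite $Gm\comp\sigma_Y$ is a monomorphism, so $d$ is determined by the equation $Gm\comp\sigma_Y\comp d = c\comp m$.

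The key step — and I expect the main technical point — is the factorisation through $\mathcal{S}$. Given $g:I(Y',d')\to (X,c)$, take an $(E,M)$-factorisation $g = m\comp e$ with $e:Y'\to Z$ in $E$ and $m:Z\to X$ in $M$. The coalgebra-homomorphism equation for $g$ rewrites, using naturality of $\sigma$, as
\[
  Gm\comp\sigma_Z\comp Fe\comp d' \;=\; c\comp m\comp e.
\]
Since $Gm\comp\sigma_Z$ lies in $M$ (as $M$ is closed under composition in a factorisation system, and $G$ preserves $M$), the diagonal fill-in property of the $(E,M)$-system applied to the square with top $e$, bottom $Gm\comp\sigma_Z$, left $Fe\comp d'$ and right $c\comp m$ produces a unique $d_Z:Z\to FZ$ satisfying $d_Z\comp e = Fe\comp d'$ and $Gm\comp\sigma_Z\comp d_Z = c\comp m$. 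The first equation expresses that $e:(Y',d')\to (Z,d_Z)$ is an $F$-coalgebra morphism; the second places $(m,d_Z)$ in $\mathcal{S}$ and makes $m$ a $G$-coalgebra morphism $I(Z,d_Z)\to(X,c)$. Thus $g = m\comp e$ is the required factorisation, and Freyd's theorem yields the right adjoint to $I$.
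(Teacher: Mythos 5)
Your proposal is correct and follows essentially the same route as the paper: Freyd's General Adjoint Functor Theorem, cocompleteness and colimit preservation via the forgetful functors, and the solution set obtained by $(E,M)$-factorising a $G$-coalgebra homomorphism and using the diagonal fill-in against the $M$-morphism $Gm\comp\sigma_Z$ to equip the intermediate object with an $F$-coalgebra structure. Your additional observation that the lifted structure $d_Z$ is uniquely determined by $m$ (so the solution set can be indexed by $M$-subobjects alone) is a small refinement the paper does not spell out, but the argument is the same.
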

\begin{proof}
  We will show that the assumptions of the General Adjoint Functor
  Theorem hold. Since $\catC$ is cocomplete, the category $\Coalg(F)$
  is cocomplete as well. Moreover, $I : \Coalg(F) \to \Coalg(G)$
  preserves colimits, as $U I : \Coalg(F) \to \catC$ preserves
  colimits, and the forgetful functor $U : \Coalg(G) \to \catC$
  reflects them. It remains to verify the Solution Set Condition.  For
  this, take a coalgebra $d:Y\to GY$. Let $\mathcal{S}_0$ be a set of
  representatives of the collection of all $\catC$-objects $Q$
  admitting an $M$-morphism $Q\to Y$, and let $\mathcal{S}$ be the set
  of all $F$-coalgebras based on an object in $\mathcal{S}_0$. Let now
  $(X,c)$ be an $F$-coalgebra and $f:(X,\sigma_X\comp c)\to(Y,d)$ be a
  homomorphism of $G$-coalgebras. By hypothesis, $f:X\to Y$ factorises
  as $f=m\comp e$
  \[
    X\xrightarrow{\;e\;}Q\xrightarrow{\;m\;}Y
  \]
  with $e\in E$ and $m\in M$. Since $\sigma_Q:FQ\to GQ$ and
  $Gm:GQ\to GY$ are in $M$, there is a diagonal $q:Q\to FQ$ so that
  the right hand square and the lower-left square in
  \[
    \xymatrix{GX\ar[r]^{Ge} & GQ\ar[r]^{Gm} & GY\\
      FX\ar[u]^{\sigma_X}\ar[r]_{Fe} & FQ\ar[u]_{\sigma_Q}\\
      X\ar[u]^c\ar[r]_e & Q\ar[r]_m\ar@{..>}[u]_q & Y\ar[uu]_d}
  \]
  commute; the upper-left square commutes since $\sigma$ is a natural
  transformation. This proves that $f:(X,\sigma_X\comp c)\to(Y,d)$
  factorises via the image of an object in $\mathcal{S}$.
\end{proof}

\begin{corollary}
  \label{cor_closed}
  The category $\Coalg(F)$ has all (co)limits of a certain type if
  $\Coalg(G)$ does so.
\end{corollary}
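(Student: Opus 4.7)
The plan is to combine Theorem \ref{thm:adjoint_from_nat_transformation} with elementary adjunction theory. Under Assumption \ref{ass:for-I-beeing-adjoint} each $\sigma_X$ lies in $M$, which is contained in the class of monomorphisms, so $\sigma$ is a monomorphic natural transformation and Proposition \ref{prop:mono_inc} shows $I : \Coalg(F) \to \Coalg(G)$ is fully faithful; Theorem \ref{thm:adjoint_from_nat_transformation} then furnishes a right adjoint $R : \Coalg(G) \to \Coalg(F)$. A fully faithful left adjoint has unit $\eta : \Id_{\Coalg(F)} \to RI$ a natural isomorphism, so $I$ exhibits $\Coalg(F)$ as a coreflective subcategory of $\Coalg(G)$.

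For colimits there is essentially nothing to prove: cocompleteness of $\catC$ (already imposed in Theorem \ref{thm:adjoint_from_nat_transformation}) passes to $\Coalg(F)$ by the standard construction of colimits in categories of coalgebras mentioned at the start of Section \ref{Sec_pre}, independently of what $\Coalg(G)$ does.

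For limits of shape $J$, given a diagram $D : J \to \Coalg(F)$, I would form the composite $ID : J \to \Coalg(G)$ and take a limit $(L, (p_j)_{j \in J})$ of it in $\Coalg(G)$, which exists by hypothesis. The candidate limit of $D$ is then $R(L)$ equipped with the cone
\[
q_j \;=\; \eta^{-1}_{D(j)} \comp R(p_j) : R(L) \to D(j) \qquad (j \in J),
\]
and universality is checked by transposing along $I \dashv R$: any cone $(f_j : X \to D(j))_{j \in J}$ in $\Coalg(F)$ yields, via $I$, a cone $(If_j)_{j \in J}$ over $ID$ which factors uniquely through $L$, and the adjoint transpose of this factorisation (composed with $\eta^{-1}$) provides the unique mediating morphism $X \to R(L)$.

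There is no real obstacle here: all of the substantive content was already discharged in Theorem \ref{thm:adjoint_from_nat_transformation}, and the remainder is the routine verification, using the triangle identities and naturality of $\eta$, that a fully faithful left adjoint transports limits of its codomain back to its domain along the right adjoint. The only bookkeeping to be careful about is that $(q_j)_{j \in J}$ is genuinely a cone (immediate from naturality of $\eta$ applied to the connecting morphisms of $D$) and that the transposition respects the cone structure.
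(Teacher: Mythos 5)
Your proposal is correct and takes essentially the same route the paper intends: the paper offers no explicit proof of this corollary, but its stated plan (see the opening of Section~\ref{Sec_morelimits}) is exactly that the fully faithful left adjoint $I:\Coalg(F)\to\Coalg(G)$ from Proposition~\ref{prop:mono_inc} and Theorem~\ref{thm:adjoint_from_nat_transformation} exhibits $\Coalg(F)$ as a coreflective subcategory of $\Coalg(G)$, so limits are transported back along the right adjoint via the invertible unit, just as you spell out. Your handling of colimits through cocompleteness of $\catC$ is a harmless (and valid) shortcut, and the limit construction $R(L)$ with cone $\eta^{-1}_{D(j)}\comp R(p_j)$ together with the transposition argument is precisely the standard verification the authors leave implicit.
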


\begin{corollary}
  Let $F:\TOP\to\TOP$ be a compact Vietoris polynomial functor that
  can be restricted to $\HAUS$. Every subfunctor of $F$ admits a final
  coalgebra.
\end{corollary}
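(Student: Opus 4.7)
The plan is to combine the preceding theorem for $F$ itself with the adjunction machinery of this section. By that theorem, the hypothesis that $F$ restricts to $\HAUS$ guarantees that $\Coalg(F)$ admits a final coalgebra $(Z,\zeta)$. A subfunctor of $F$ is encoded by a monomorphic natural transformation $\sigma : F' \to F$, and the induced functor $I : \Coalg(F') \to \Coalg(F)$ is then fully faithful by Proposition~\ref{prop:mono_inc}. If Assumption~\ref{ass:for-I-beeing-adjoint} holds for $\sigma$, Theorem~\ref{thm:adjoint_from_nat_transformation} yields a right adjoint $R$ to $I$, and right adjoints preserve all limits, so $R(Z,\zeta)$ is a final $F'$-coalgebra. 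This is also exactly Corollary~\ref{cor_closed} specialised to the empty diagram.

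The task therefore reduces to verifying Assumption~\ref{ass:for-I-beeing-adjoint} in $\catC = \TOP$. I would take the factorisation structure $(E,M) = (\text{Epi}, \text{RegMono})$: the category $\TOP$ is cocomplete, carries this factorisation, and is regularly wellpowered. By Corollary~\ref{cor:Vietoris-pol-preserves-mono-cones}, every compact Vietoris polynomial functor sends regular monomorphisms to regular monomorphisms, which supplies the $M$-preservation condition on $F$. The remaining requirement is that each component $\sigma_X$ lies in $M$, which is the natural reading of ``subfunctor'' in this topological setting, namely a componentwise closed embedding.

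The principal subtlety is the alignment between the chosen class $M$ and the intended notion of ``subfunctor''. If one prefers a more liberal interpretation in which components are merely monomorphisms, one could instead work with the factorisation structure whose right class consists of subspace embeddings; a direct inspection on subbasic opens shows that compact Vietoris functors preserve embeddings, so the argument still carries through and one concludes that $R(Z,\zeta)$ is the sought-for final coalgebra of $\Coalg(F')$.
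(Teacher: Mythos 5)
Your argument is correct and follows the same route the paper intends: the final coalgebra for $F$ comes from the theorem on Vietoris polynomial functors restrictable to $\HAUS$, and it is transported to $\Coalg(F')$ along the right adjoint supplied by Theorem~\ref{thm:adjoint_from_nat_transformation} (equivalently, Corollary~\ref{cor_closed} for the empty diagram), with Assumption~\ref{ass:for-I-beeing-adjoint} checked in $\TOP$ via the (Epi, RegMono) factorisation and Corollary~\ref{cor:Vietoris-pol-preserves-mono-cones}. The only slip is cosmetic: regular monomorphisms in $\TOP$ are the subspace embeddings, not the closed ones, so your two readings of ``subfunctor'' in the last paragraphs actually coincide.
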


\begin{remark}
  The Corollary above applies to various interesting variants of the
  compact Vietoris functor that were not yet mentioned. In particular,
  \begin{itemize}
  \item the one that discards the empty set,
  \item analogously to the finitary powerset functor, the one that
    takes infinite sets out of comission, and
  \item the one which considers only compact and connected subsets
    (\cf\ \cite{Dud72}).
  \end{itemize}
  All these variants are subfunctors of the compact Vietoris
  functor. In conjunction with the polynomial ones, they form a family
  of subfunctors of compact Vietoris polynomial functors.
\end{remark}

\begin{corollary}
  Let $F:\TOP\to\TOP$ be a lower Vietoris polynomial functor that can
  be restricted to $\STCOMP$. Every subfunctor of $F$ admits a final
  coalgebra.
\end{corollary}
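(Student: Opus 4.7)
The plan is to replicate, in the lower Vietoris / $\STCOMP$ setting, the argument just used for compact Vietoris / $\HAUS$: produce a final coalgebra for the ambient functor and transport it along an adjunction to the subfunctor. Write the subfunctor inclusion as a monomorphic natural transformation $\sigma : H \to F$, where $F:\TOP\to\TOP$ is the lower Vietoris polynomial functor in question. The unnumbered theorem immediately following Theorem~\ref{theo_lower_comp} already supplies a final $F$-coalgebra (applying Theorem~\ref{theo_lower_comp} in $\STCOMP$ and then using that the inclusion $\STCOMP\to\TOP$ preserves and reflects limits). So everything reduces to lifting this terminal object across the functor $I : \Coalg(H) \to \Coalg(F)$ induced by $\sigma$.

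To do that I verify Assumption~\ref{ass:for-I-beeing-adjoint} on $\catC = \TOP$. Take the (Epi, RegMono)-factorisation structure: $\TOP$ is cocomplete, regularly wellpowered, and the class $M$ of regular monomorphisms (subspace embeddings) is contained in the monomorphisms. The components of the subfunctor inclusion $\sigma_X : HX \to FX$ are embeddings in the standard sense, and by Corollary~\ref{cor:Vietoris-pol-preserves-mono-cones} the lower Vietoris polynomial functor $F$ preserves regular monomorphisms. All hypotheses of Theorem~\ref{thm:adjoint_from_nat_transformation} are then in place.

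Applying Theorem~\ref{thm:adjoint_from_nat_transformation} yields a right adjoint $R$ to $I : \Coalg(H) \to \Coalg(F)$. Right adjoints preserve limits, and the terminal object is the limit of the empty diagram; hence $R$ sends the final $F$-coalgebra to a final $H$-coalgebra. Equivalently, one may simply invoke Corollary~\ref{cor_closed}, which says that $\Coalg(H)$ inherits every limit type available in $\Coalg(F)$.

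The only point that needs a moment's care is the reading of ``subfunctor'', which must be the pointwise embedding convention in order to slot into the (Epi, RegMono)-factorisation system; but this is precisely the convention tacitly used in the analogous $\HAUS$ corollary just above, so it presents no genuine obstacle. Everything else is bookkeeping once the preservation of regular monomorphisms by lower Vietoris polynomial functors is in hand.
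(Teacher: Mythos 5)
Your proposal is correct and follows exactly the route the paper intends: the final $F$-coalgebra comes from the theorem following Theorem~\ref{theo_lower_comp}, and the transfer to the subfunctor is Theorem~\ref{thm:adjoint_from_nat_transformation} (via Corollary~\ref{cor_closed}), with Assumption~\ref{ass:for-I-beeing-adjoint} verified on $\TOP$ using its (Epi, RegMono)-structure and Corollary~\ref{cor:Vietoris-pol-preserves-mono-cones}. Your remark on reading ``subfunctor'' as a pointwise embedding matches the paper's tacit convention, so there is nothing to add.
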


\noindent
The proof of Theorem~\ref{thm:adjoint_from_nat_transformation} gives
us also a hint on how to construct a coreflection of a $G$-coalgebra
$(Y,d)$: take the ``largest $M$-subcoalgebra of $(Y,d)$''. In the
sequel we make this idea more precise. To do so, motivated by
\cite{Mob83} and \cite{Man02}, we introduce the following notion.

\begin{definition}
  A natural transformation $\sigma : F \to G$ is \df{$M$-taut} if each
  naturality square induced by a morphism in $M$ is a pullback square;
  that is, for every morphism $m:X\to Y$ in $M$, the diagram below is
  a pullback square.
  \[
    \xymatrix{
      F X \ar[r]^{F m} \ar[d]_{\sigma_X} & F Y \ar[d]^{\sigma_Y} \\
      G X \ar[r]_{G m} & G Y }
  \]
\end{definition}

\noindent
Recall from \cite[Definition 7.79]{AHS90} that, for monomorphisms
$m_1:M_1\to X$ and $m_2:M_2\to X$ in a category, $m_1$ is \df{smaller
  than} $m_2$ (written as $m_1\le m_2$) whenever there is some
$m:M_1\to M_2$ with $m_2\comp m=m_1$. Note that $m$ is necessarily a
monomorphism. Assuming that $\catC$ has pullbacks, take a
$G$-coalgebra $(Y,d)$ and consider the pullback square
\begin{equation}\label{eq:pullback_S}
  \xymatrix{S\ar[d]_i\ar[r] & FY\ar[d]^{\sigma_Y}\\ Y\ar[r]_d & GY}
\end{equation}
in $\catC$. Note that $i:S\to Y$ is in $M$, by \cite[Proposition
14.15]{AHS90}.

\begin{lemma}
  \begin{enumerate}
  \item For every $F$-coalgebra $(X,c)$ and every homomorphism
    $m:(X,\sigma_X\comp c)\to(Y,d)$ with $m\in M$, $m$ is smaller than
    $i:S\to Y$.
  \item Assume now that the natural transformation $\sigma:F\to G$ is
    $M$-taut and let $m:(Q,q)\to(Y,d)$ be a homomorphism in
    $\Coalg(G)$ where $m\in M$ and $m\le i$. Then there is a
    $F$-coalgebra structure $q':Q\to FQ$ on $Q$ with
    $\sigma_Q\comp q'=q$.
  \end{enumerate}
\end{lemma}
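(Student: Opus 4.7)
The plan for both parts is a short diagram chase exploiting pullback squares. For part (1), the key observation is that $m:(X,\sigma_X\comp c)\to(Y,d)$ being a $G$-coalgebra homomorphism says $d\comp m=Gm\comp\sigma_X\comp c$, and naturality of $\sigma$ along $m$ rewrites the right-hand side as $\sigma_Y\comp Fm\comp c$. Hence the pair $(m,\,Fm\comp c)$ is a cone on the cospan $Y\xrightarrow{d}GY\xleftarrow{\sigma_Y}FY$ that defines $S$ in \eqref{eq:pullback_S}, and the universal property of that pullback produces a unique $h:X\to S$ with $i\comp h=m$; this witnesses $m\le i$ (and $h$ is automatically in $M$ by \cite[Proposition~14.15]{AHS90}, as $m=i\comp h$ is in $M$ and $i$ is in $M$).

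For part (2), the hypothesis $m\le i$ gives a morphism $m':Q\to S$ with $i\comp m'=m$. Writing $p:S\to FY$ for the other projection of \eqref{eq:pullback_S}, I chain
\[
Gm\comp q=d\comp m=d\comp i\comp m'=\sigma_Y\comp p\comp m',
\]
using the $G$-coalgebra homomorphism condition on $m$ and the commutativity of \eqref{eq:pullback_S}. Now $M$-tautness of $\sigma$ applied to $m\in M$ ensures that the naturality square with arrows $Fm$, $Gm$, $\sigma_Q$, $\sigma_Y$ is a pullback, and the pair $(q,\,p\comp m')$ satisfies the compatibility just computed. The universal property then yields the desired unique $q':Q\to FQ$ with $\sigma_Q\comp q'=q$ (and, as a byproduct, $Fm\comp q'=p\comp m'$, so that $m:(Q,q')\to(X,\ldots)$ fits into the larger commuting prism, should it be needed elsewhere).

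I do not anticipate any serious obstacle; the argument is entirely formal. The only points to keep an eye on are using naturality together with the coalgebra-homomorphism condition in the correct order in part (1), and recognising in part (2) that $M$-tautness is required precisely because one needs the canonical naturality square along $m$ to be a genuine pullback, not merely a commuting square, in order to lift the coalgebra structure through $\sigma_Q$.
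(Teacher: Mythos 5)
Your argument is correct and follows essentially the same route as the paper: part (1) is the same diagram chase combining the homomorphism condition with naturality of $\sigma$ to exhibit $(m,\,Fm\comp c)$ as a cone over the pullback defining $S$, and part (2) uses exactly the paper's device of feeding the cone $(q,\,p\comp m')$ into the naturality square along $m$, which $M$-tautness guarantees is a pullback. The only cosmetic difference is that you spell out the ``easy calculation'' the paper leaves implicit; no gap.
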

\begin{proof}
  An easy calculation, and \cite[Proposition 14.9]{AHS90} show that
  the first claim is true. In regard to the second one, let
  $\bar{m}:Q\to S$ be the arrow in $\catC$ with
  $i\comp\bar{m}=m$. Then, since in the diagram
  \[
    \xymatrix{& Y\ar[rr]^d && GY\\
      S\ar[ru]_i\ar[rr] && FY\ar[ru]_{\sigma_Y}\\
      &&& GQ\ar[uu]_{Gm}\\
      Q \ar[uu]^{\bar{m}} \ar@/_{5em}/[rrru]_c \ar@/^{6em}/[uuur]^m
      \ar@{..>}[rr]
      && FQ\ar[uu]^{Fm}\ar[ur]^{\sigma_Q}\\
      & }
  \]
  the right hand parallelogram is a pullback square and the outer
  diagram and the top parallelogram commute. This provides the desired
  arrow $Q\to FQ$.
\end{proof}

\noindent
For a $G$-coalgebra $(Y,d)$, the class of all subcoalgebras
$m:(X,c)\to(Y,d)$ with $m\in M$ is preordered under the smaller-than
relation. Since $\catC$ is $M$-wellpowered, this class is equivalent
to an ordered set; and by a slight abuse of language we will speak of
the ordered set of $M$-subobjects of $(Y,d)$.

Recall from Proposition~\ref{prop:mono_inc} that the induced functor
$I:\Coalg(F)\to\Coalg(G)$ is fully faithful since $\sigma:F\to G$ is a
monomorphic natural transformation. Hence, we can consider $\Coalg(F)$
as a full subcategory of $\Coalg(G)$. From the results above we
obtain:

\begin{theorem}
  In addition to Assumption~\ref{ass:for-I-beeing-adjoint}, assume
  that $\catC$ has pullbacks, $\sigma$ is $M$-taut and, for every
  $G$-coalgebra $(Y,d)$, the ordered set of $M$-subobjects is
  complete. Then, for every $G$-coalgebra $(Y,d)$, the coreflection of
  $(Y,d)$ is given by the supremum $(\bar{Q},\bar{q})\to(Y,d)$ of all
  $G$-homomorphisms $(Q,q)\to(Y,d)$ with $(Q,q)$ in $\Coalg(F)$ and
  $m:Q\to Y$ in $M$ smaller than $i:S\to Y$ (defined by the pullback
  square \eqref{eq:pullback_S}).
\end{theorem}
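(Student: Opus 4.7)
The plan is to take the supremum $\bar{m}:(\bar{Q},\bar{q})\to(Y,d)$ described in the statement, show that its domain lifts to an $F$-coalgebra, and then verify the universal property of the coreflection.

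First I would argue that $\bar{m}\le i$. By the first part of the preceding lemma, every member $m:(Q,q)\to(Y,d)$ of the family whose supremum is $\bar{m}$ satisfies $m\le i$, so $i:S\to Y$ is an upper bound in the ordered set of $M$-subobjects of $(Y,d)$; since the supremum is the least upper bound, $\bar{m}\le i$. The second part of the lemma then provides a morphism $\bar{q}':\bar{Q}\to F\bar{Q}$ with $\sigma_{\bar{Q}}\comp\bar{q}'=\bar{q}$, so $(\bar{Q},\bar{q}')$ is an $F$-coalgebra and $\bar{m}:I(\bar{Q},\bar{q}')\to(Y,d)$ is a morphism in $\Coalg(G)$.

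For universality, let $(X,c)\in\Coalg(F)$ and let $f:I(X,c)\to(Y,d)$ be a morphism in $\Coalg(G)$. Using the $(E,M)$-factorisation structure of $\catC$ together with the diagonal fill-in argument already carried out in the proof of Theorem~\ref{thm:adjoint_from_nat_transformation}, factorise $f=m\comp e$ in $\catC$ with $e\in E$, $m\in M$ and equip the middle object $Q$ with a $G$-coalgebra structure $q:Q\to GQ$ so that both $e$ and $m$ become morphisms in $\Coalg(G)$. Since $m\in M$, the first part of the lemma gives $m\le i$, and the second part lifts $q$ to an $F$-coalgebra structure on $Q$. Hence $m:(Q,q)\to(Y,d)$ belongs to the family whose supremum is $\bar{m}$, and so $m\le\bar{m}$; composing the resulting comparison $Q\to\bar{Q}$ with $e$ yields a factorisation of $f$ through $\bar{m}$. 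Its uniqueness follows from the fact that $I$ is fully faithful (Proposition~\ref{prop:mono_inc}) and the forgetful functor $\Coalg(G)\to\catC$ is faithful.

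The main obstacle I anticipate is ensuring coherence of the various $F$-coalgebra structures produced by the lemma, namely that applying $I$ to them really recovers the $G$-coalgebra structures with which one started. This is automatic because $\sigma_X\in M$ is a monomorphism for every $X$, so the equation $\sigma_X\comp c'=c$ determines $c'$ uniquely; the remaining verifications amount to routine bookkeeping with the universal property of the supremum and the naturality of $\sigma$.
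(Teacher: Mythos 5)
Your overall strategy is the one the paper intends (the paper itself gives no written proof beyond ``from the results above''): lift the supremum to an $F$-coalgebra via part (2) of the preceding lemma, and verify universality by running the $(E,M)$-factorisation and diagonal fill-in from the proof of Theorem~\ref{thm:adjoint_from_nat_transformation}, then comparing with the supremum. The universality half of your argument is fine, modulo the small point that uniqueness of the factorisation through $\bar{m}$ comes from $\bar{m}\in M$ being a monomorphism rather than from faithfulness of $I$ and of the forgetful functor.

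There is, however, a gap in the step ``$i$ is an upper bound in the ordered set of $M$-subobjects of $(Y,d)$; since the supremum is the least upper bound, $\bar{m}\le i$.'' The ordered set in which the supremum is taken consists of \emph{subcoalgebras} $m:(Q,q)\to(Y,d)$ with $m\in M$, and $i:S\to Y$ is merely a $\catC$-morphism obtained from the pullback square \eqref{eq:pullback_S}; it carries no $G$-coalgebra structure, so it is not an element of that ordered set and cannot serve as an upper bound inside it. In general, a least upper bound computed in a sub-poset need not lie below an element of the ambient poset that dominates the family, so the inference as stated does not go through. The conclusion is nevertheless true, but it requires the concrete description of the supremum given in the Remark following the theorem: $\bar{m}$ is the $M$-part of the $(E,M)$-factorisation of the canonical map $\coprod_j (Q_j,q_j)\to (Y,d)$. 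Since each $m_j$ factors through $i$ by part (1) of the lemma, the underlying $\catC$-morphism $\coprod_j Q_j\to Y$ factors through $i$, and the unique $(E,M)$-diagonalisation property applied to $e\in E$ and $i\in M$ produces a morphism $\bar{Q}\to S$ witnessing $\bar{m}\le i$. With that repaired, part (2) of the lemma applies and the rest of your argument stands.
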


\begin{remark}
  If $\catC$ has coproducts, then also $\Coalg(G)$ has coproducts
  which guarantees completeness of the ordered set of $M$-subobjects
  of $(Y,d)$. In fact, let $(m_i:(X_i,c_i)\to(Y,d))_{i\in I}$ a family
  of subcoalgebras of $(Y,d)$ with $m_i\in M$, for every $i\in
  I$. Then the supremum $m:(X,c)\to(Y,d)$ of this family is given by
  the $(E,M)$-factorisation of the canonical map
  $f:\coprod_{i\in I}(X_i,c_i)\to (Y,d)$ induced by this family.
  \[
    \xymatrix{\coprod_{i\in I}(X_i,c_i)\ar[r]^-{e}\ar@/^{2em}/[rr]^f & (X,c)\ar[r]^m & (Y,d)\\
      (X,c_i)\ar[u]^{k_i}\ar[urr]_{m_i}}
  \]
\end{remark}

\section{Vietoris coalgebras at work}
\label{Sec_work}
\noindent
Moving to the more practical side, recall the bouncing ball system
mentioned in the introduction.  Formally, it consists of a ball that
is dropped at a certain height $(p)$, and with an initial velocity
$(v)$. Due to the gravitional effect $(g)$, it falls into the ground
and then bounces back up, losing, for example, half of its kinetic
energy. As the documents \cite{neves15,neves16} show, such a behaviour
can be described coalgebraically, with the help of the functor defined
below.

\begin{definition}
  Let $\Rz$ denote the topological space $\mathbb{R}_{\geq 0}$.  Then
  define $\MH : \TOP \rightarrow \TOP$ as the functor such that for
  any topological space $X$, and any continuous map $g : X \to Y$,
  \begin{flalign*}
    & \MH X = \{ (f,d) \in X^{\Rz} \times \Dur \mid f \comp
    \curlywedge_d = f \}, \hspace{1cm} \MH g = g^{\Rz} \times \id
  \end{flalign*}
  \noindent
  where $\Dur$ is the one-point compactification of $\Rz$ and
  $\curlywedge_d = \min(\_ \; , d)$.
\end{definition}

\noindent
Intuitively, the functor $\MH : \TOP \to \TOP$ captures continuous
behaviour as considered in hybrid systems, \ie\ the continuous evolutions
of physical processes, such as the movement of a plane, or the
temperature of a room.  Document \cite{neves16} provides the following
specification for the bouncing ball described above.

\begin{definition}
  Use $S,O$ as shorthand to $\mathbb{R}_{\geq 0} \times \mathbb{R}$,
  and $\mathbb{R}$, respectively. The bouncing ball is given by the
  $\SET$-coalgebra $\pv{\nxt,\out} : S \to S \times U \MH O$
  \begin{flalign*}
    \pv{\nxt,\out} \> (p,v) = \big ( (0,u), (\mov(p,v, \; \_ \;),d)
    \big )
  \end{flalign*}

  \noindent
  where variable $u$ corresponds to the (abrupt) change of velocity
  due to the collision with the ground, function
  $\mov (p,v, \; \_ \; ) : \Rz \rightarrow O$ describes the ball's
  movement between jumps, and $d$ denotes the time that the ball takes
  to reach the ground. In symbols,
  \begin{flalign*}
    u = (v + g d) \times -0.5, \> \> \> \mov (p,v,t) = p + vt +
    \tfrac{1}{2} g t^2, \> \> \> d = \tfrac{\sqrt{2gp + v^2} + v}{g}.
  \end{flalign*}
\end{definition}

\noindent
Recall that for each set $A$ the functor
$(\; \_ \; \times A) : \SET \to \SET$ has a final coalgebra (\cf\
\cite{rutten2000}), thus providing a notion of behaviour for the
ball. To be more concrete, the coalgebra
$\big (S, \pv{\nxt,\out} \big )$ has a canonical homomorphism
$\lana \; \_ \; \rana : S \to (U \MH O)^\omega$ to the final coalgebra
$\big ((U \MH O)^\omega, \pv{\tl,\hd} \big )$, where
$\tl : (U \MH O)^\omega \to (U \MH O)^\omega$, and
$\hd : (U \MH O)^\omega \to U \MH O$ are the `tail' and `head'
functions, respectively. The map $\pv{\tl,\hd}$ computes the behaviour
of the ball for a given height and velocity. For example, the first
three elements of $\lana (0,5) \rana$ yield the following plots.
\begin{multicols}{3}
    
  \scalebox{0.50}{
    \includegraphics{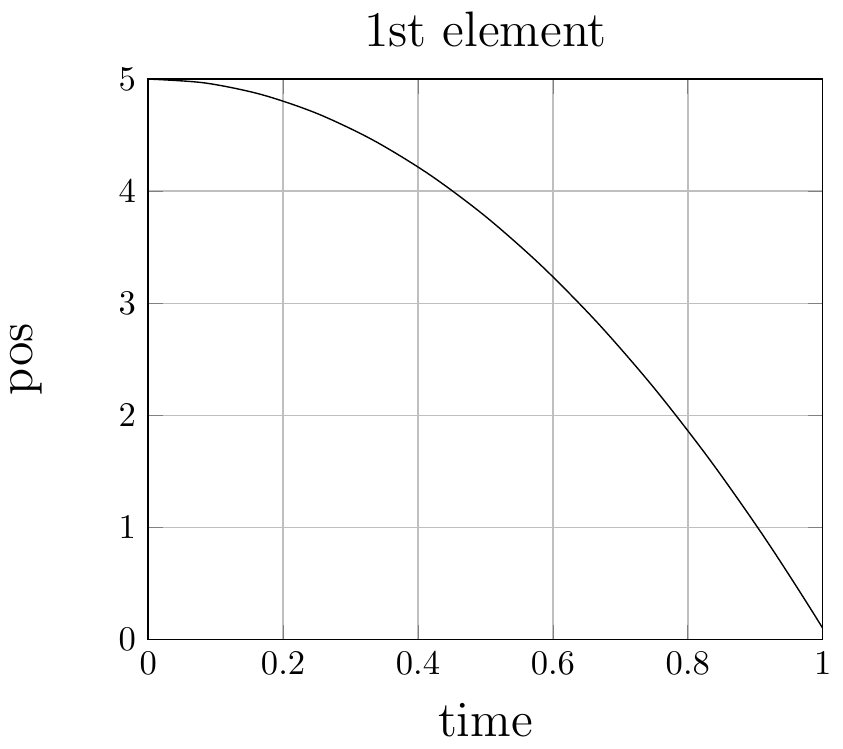}
  } \columnbreak

  \scalebox{0.50}{
    \includegraphics{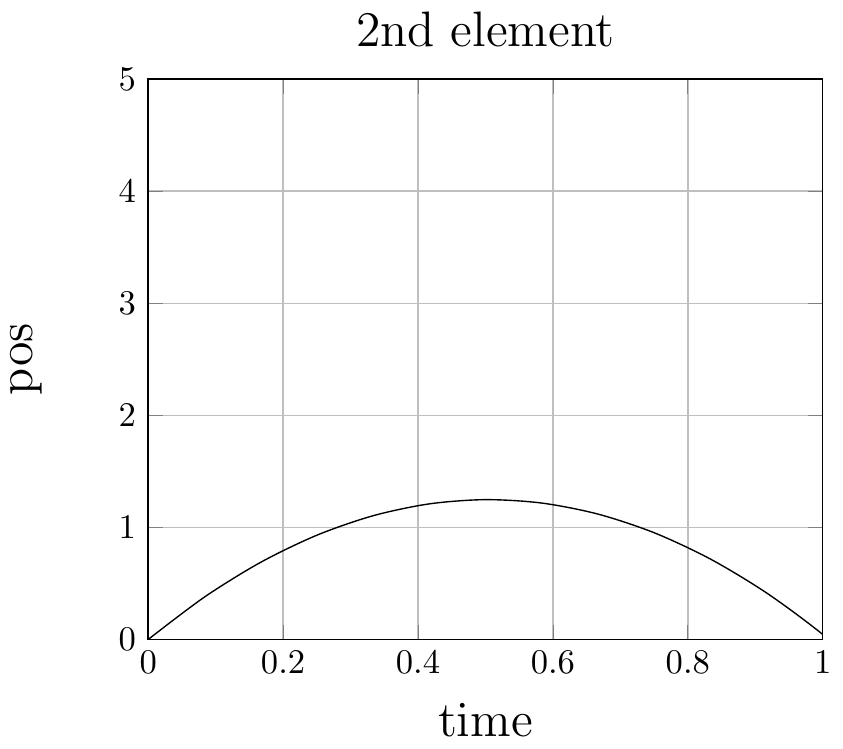}
  } \columnbreak

  \scalebox{0.50}{
    \includegraphics{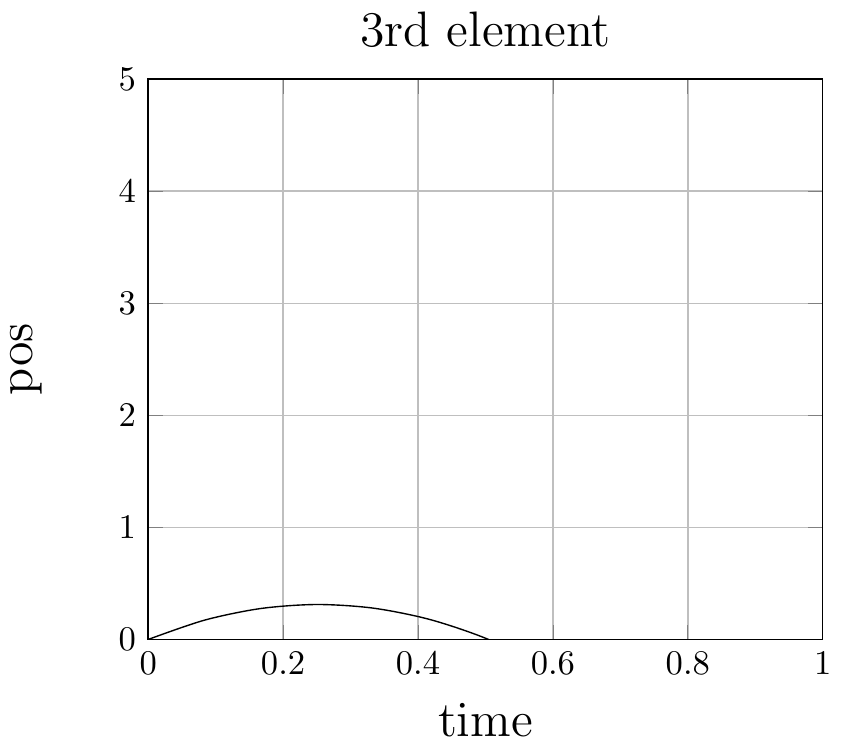}
  }

\end{multicols}

\noindent
In order to bring non-determinism into the scene, suppose, for
example, that when the ball hits the ground it loses part of its
kinetic energy non-deterministically. In this context, one may
consider the coalgebra
$\pv{\nxt,\out} : S \to \mathscr{P} S \times U \MH O$
\begin{flalign*}
  \pv{\nxt,\out} \> (p,v) = \big (U, (\mov(p,v, \; \_ \;),d) \big )
\end{flalign*}
\noindent
with
$U = \big \{ \> \big ( 0,(v + gd) \times c \big ) \in S \mid c \in
[-0.7,-0.5] \> \big \}$.  However, the functor
$(\mathscr{P} \times U \MH O) : \SET \to \SET$ has no final coalgebra
(\cf\ \cite{rutten2000}), and thus there is no canonical notion of
behaviour for the non-deterministic bouncing ball specified above.  We
will show that the issue can be fixed by shifting to $\TOP$. For this,
the following result is useful.

\begin{proposition}
  Let $\Vie : \TOP \to \TOP$ be the compact Vietoris functor.  The
  family $\tau=(\tau_{X,Y})$ of maps
  \begin{align*}
    \tau_{X,Y}:(\Vie X)\times Y & \to \Vie (X\times Y)\\
    (S,y) &\mapsto S\times\{y\}
  \end{align*}
  defines a natural transformation
  \[
    \xymatrix{\TOP\times\TOP\ar[r]^\times\ar[d]_{\Vie \times\Id} &
      \TOP\times\TOP\ar[d]^\Vie\\
      \TOP\times\TOP\ar[r]_-{\times} & \TOP.\ultwocell<\omit>{\tau}}
  \]
\end{proposition}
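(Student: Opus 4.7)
The plan is to verify three things: (a)~that $\tau_{X,Y}$ is well-defined as a map into $\Vie(X\times Y)$; (b)~that each component $\tau_{X,Y}$ is continuous; and (c)~that the family $\tau=(\tau_{X,Y})$ is natural.

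Item (a) is immediate: $S\times\{y\}$ is the image of the compact set $S$ under the continuous map $X\to X\times Y$, $x\mapsto (x,y)$, hence compact. So $\tau_{X,Y}(S,y)\in\Vie(X\times Y)$. Item (c) is a direct calculation: given continuous maps $f:X\to X'$ and $g:Y\to Y'$, both composites send $(S,y)$ to $f[S]\times\{g(y)\}$, using that for the compact Vietoris $\Vie h(K)=h[K]$.

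The real work is (b). I would check continuity on the subbase $\{W^\Diamond,W^\Box\mid W\subseteq X\times Y\text{ open}\}$ of $\Vie(X\times Y)$. For the ``diamond'' part, I use that $W\mapsto W^\Diamond$ turns unions into unions, which reduces the verification to basic opens $W=U\times V$; then a direct computation gives $\tau_{X,Y}^{-1}((U\times V)^\Diamond)=U^\Diamond\times V$, which is open in $\Vie X\times Y$. For the ``box'' part, I would give a pointwise argument of tube-lemma flavour: given $(S_0,y_0)$ with $S_0\times\{y_0\}\subseteq W$, for each $x\in S_0$ pick a basic open $U_x\times V_x\subseteq W$ containing $(x,y_0)$; the family $(U_x)_{x\in S_0}$ is an open cover of the compact set $S_0$, so it admits a finite subcover $U_{x_1},\dots,U_{x_n}$. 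Setting $U=U_{x_1}\cup\cdots\cup U_{x_n}$ and $V=V_{x_1}\cap\cdots\cap V_{x_n}$ yields $(S_0,y_0)\in U^\Box\times V\subseteq \tau_{X,Y}^{-1}(W^\Box)$, which proves the preimage is open.

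The only mildly delicate step is the one for $W^\Box$, where compactness of $S_0$ is essential (it is what allows the tube-lemma-style extraction of a finite subcover and a common open neighbourhood $V$ of $y_0$). Once this is in place, the rest of the argument is entirely formal.
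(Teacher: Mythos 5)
Your proposal is correct and follows essentially the same route as the paper: well-definedness via compactness of $S\times\{y\}$, continuity checked on the subbasic opens $W^\Diamond$ and $W^\Box$ with the $\Diamond$-case reduced to basic rectangles, and naturality by direct computation. Your tube-lemma argument for the $\Box$-case is just the pointwise unwinding of the paper's explicit identity $\tau^{-1}_{X,Y}\bigl[\bigl(\bigcup_{i} U_i\times V_i\bigr)^{\Box}\bigr]=\bigcup\bigl\{\bigl(\bigcup_{i\in F}U_i\bigr)^{\Box}\times\bigcap_{i\in F}V_i \mid F\text{ finite}\bigr\}$, with compactness of $S$ supplying the finite subfamily in both presentations.
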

  
\begin{proof}
  Let $X$ and $Y$ be topological spaces. For all $S\in \Vie X$ and
  $y\in Y$, since $S$ is compact, the product $S \times \{ y \}$ is
  also compact, which entails that
  $S \times \{ y \} \in \Vie (X \times Y)$. Then, continuity of the
  map $\tau_{X,Y}$ is a direct consequence of the equalities below.
  \begin{flalign*}
    & \tau^{-1}_{X,Y} \Big [ \Big ( \bigcup_{i \in I} U_i \times V_i
    \Big )^{\Diamond} \; \Big ]
    = \bigcup_{i \in I} (U_i)^{\Diamond} \times V_i  \\
    & \tau^{-1}_{X,Y} \Big [ \Big ( \bigcup_{i \in I} U_i \times V_i
    \Big )^{\Box} \; \Big ] = \bigcup \Big \{ \Big ( \bigcup_{i \in F}
    U_i \Big )^\Box \times \bigcap_{i \in F} V_i \mid F \subseteq I
    \text{ finite } \Big \}
  \end{flalign*}

  \noindent
  The proof that all naturality squares commute is straightforward.
\end{proof}

\begin{remark}
  When the compact Vietoris functor is equipped with the natural
  transformation above it becomes a strong functor.  The latter
  concept was introduced in \cite{kock:1972} and is widely adopted in
  monadic programming.
\end{remark}

\noindent
With the natural transformation above, it becomes straightforwad to
consider the non-deterministic bouncing ball in a topological
setting. Actually, it can be shown to be a coalgebra
\begin{flalign*}
  \pv{\nxt,\out} : S \to \Vie S \times \MH O
\end{flalign*}

\noindent
First, the map $\out : S \to \MH O$ was already shown to be continuous
in \cite{neves15}. Then, observe that the map $\nxt : S \to \Vie S$
can be rewritten as a composite
\[
  \xymatrix{ S \ar[r]^(0.35){ \langle f,g \rangle } & \Vie S \times
    S^S \ar[r]^{\tau} & \Vie (S \times S^S) \ar[r]^(0.65){\Vie \ev} &
    \Vie S }
\]
\begin{flalign*}
  f \left (p,v \right ) & = \{ 0 \} \times [0.5, 0.7] \\
  g \left (p,v \right ) & = \lambda (x,y) \in S . \left ( 0 , (v + gd)
    \times - y \right )
\end{flalign*}
\noindent
which proves our claim. One more result is needed.

\begin{theorem}
  \label{H_HAUS}
  The functor $\MH : \TOP \to \TOP$ can be restricted to the category
  of Hausdorff spaces.
\end{theorem}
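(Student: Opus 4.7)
The plan is to show that $\MH X$ is Hausdorff whenever $X$ is; this will suffice because the action on morphisms, $\MH g = g^{\Rz}\times\id$, is simply the restriction of a continuous map between Hausdorff spaces and so requires no further work once the object part is settled (one only needs to observe that $(g\comp f)\comp\curlywedge_d = g\comp(f\comp\curlywedge_d) = g\comp f$, so that $\MH g$ indeed lands in $\MH Y$).

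By definition $\MH X$ is a subspace of the product $X^{\Rz}\times\Dur$, and subspaces of Hausdorff spaces are Hausdorff, so it is enough to prove that this ambient product space is Hausdorff. The factor $\Dur$ is the one-point compactification of $\Rz = \R_{\geq 0}$; since $\Rz$ is locally compact Hausdorff, a standard result in general topology yields that $\Dur$ is (compact) Hausdorff. The factor $X^{\Rz}$ is a product (indexed by $\Rz$) of copies of the Hausdorff space $X$; as arbitrary products of Hausdorff spaces are Hausdorff in the product topology, $X^{\Rz}$ is Hausdorff. The binary product $X^{\Rz}\times\Dur$ is therefore Hausdorff, and so is its subspace $\MH X$.

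The argument is essentially routine, and the only point demanding any care is checking that $\MH g$ is indeed a well-defined continuous map $\MH X\to \MH Y$. Continuity is automatic since $\MH g$ is the restriction of the continuous map $g^{\Rz}\times\id$, and well-definedness was noted above. Consequently, the assignments $X\mapsto \MH X$ and $g\mapsto \MH g$ restrict to functor $\HAUS \to \HAUS$, as claimed. No step presents a substantive obstacle; the statement is a direct consequence of the closure of $\HAUS$ under products and subspaces together with the fact that one-point compactifications of locally compact Hausdorff spaces are Hausdorff.
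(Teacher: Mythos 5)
Your overall strategy is the same as the paper's: exhibit $\MH X$ as a subspace of $X^{\Rz}\times\Dur$ and check that this ambient space is Hausdorff. The one step that needs correction is your treatment of $X^{\Rz}$. In the definition of $\MH$ this is not the product $\prod_{t\in\Rz}X$ of copies of $X$ (the set of \emph{all} functions $\Rz\to X$ with the pointwise-convergence topology) but the exponential in $\TOP$, i.e.\ the space of \emph{continuous} maps $\Rz\to X$ with the compact-open topology --- this is why the elements $(f,d)$ can be required to satisfy $f\comp\curlywedge_d=f$ as continuous trajectories, and why local compactness of $\Rz$ is relevant at all. So the appeal to ``arbitrary products of Hausdorff spaces are Hausdorff in the product topology'' does not literally apply to $X^{\Rz}$. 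The conclusion survives and is easily repaired: the compact-open topology on $C(\Rz,X)$ is finer than the topology of pointwise convergence (singletons are compact), and any topology finer than a Hausdorff one is Hausdorff; alternatively, one can directly invoke the standard fact that a function space with the compact-open topology is Hausdorff whenever the codomain is --- which is exactly the route the paper takes, citing Kelley. Your observations about $\Dur$ being compact Hausdorff and about the well-definedness and continuity of $\MH g$ are fine as stated.
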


\begin{proof}
  Let $X$ be a locally compact space and $Y$ an Hausdorff space.
  Then, the function space $Y^X$ equipped with the compact-open
  topology is Hausdorff (\cf\ \cite{Kelley}). The claim now follows
  from Hausdorff spaces being closed under products, and subspaces.
\end{proof}

\noindent
As discussed in the previous sections, every compact Vietoris
polynomial functor that can be restricted to the category of Hausdorff
spaces has a final coalgebra, which, according to Theorem
\ref{H_HAUS}, is the case for $\Vie \times \MH O : \TOP \to \TOP$.
Intuitively, the elements of the final $(\Vie \times \MH O)$-coalgebra
can be seen as compactly branching trees, \ie\ trees where the set of
sons of each node is compact.  This is similar to the property imposed
to finitely branching trees, which occur in the final coalgebras
involving the \emph{finite} powerset functor (\cf\ \cite{rutten2000}).
Interestingly, the functor $\Vie \times \MH O : \TOP \to \TOP$ admits
an alternative representation: superimpose the evolutions of each
level of the tree. To illustrate this, the non-deterministic bouncing
ball yields the following plots for the first two bounces, with the
pair $(5,0)$ as the initial state.
\begin{multicols}{3}
    
  \scalebox{0.50}{
    \includegraphics{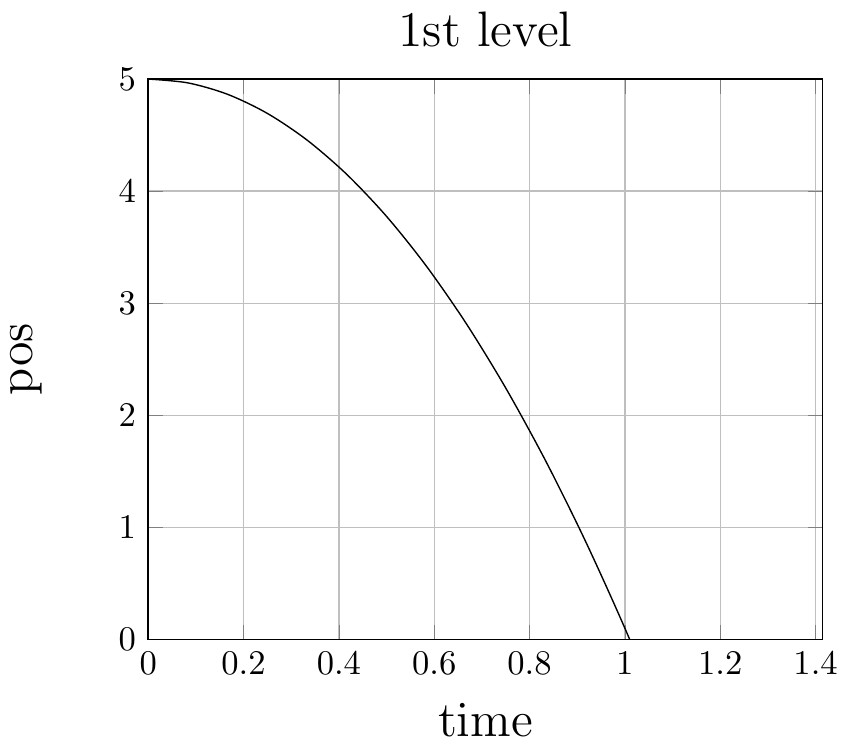}
  } \columnbreak

  \scalebox{0.50}{
    \includegraphics{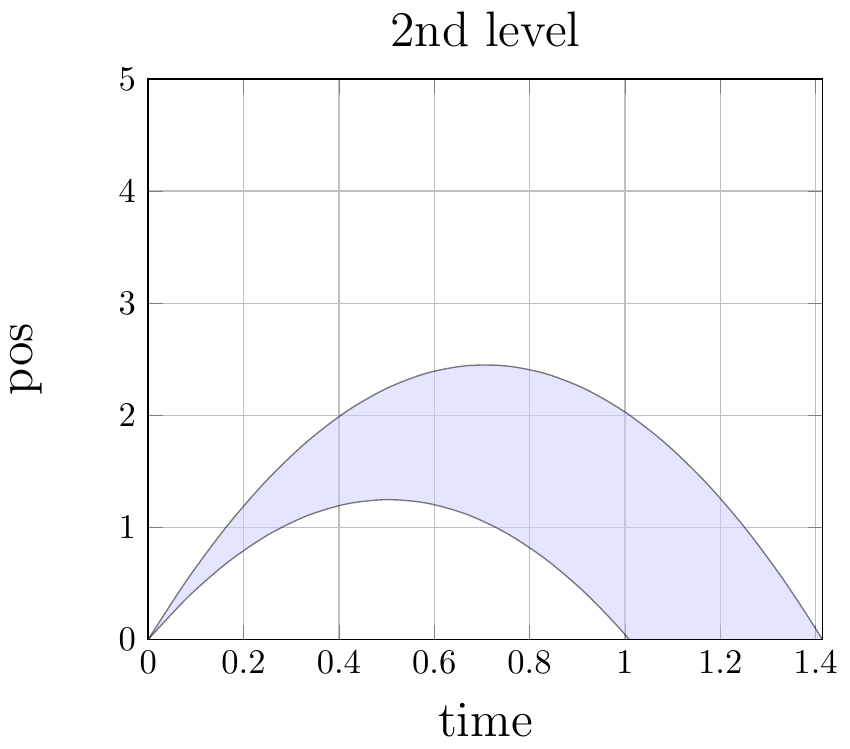}
  } \columnbreak

  \scalebox{0.50}{
    \includegraphics{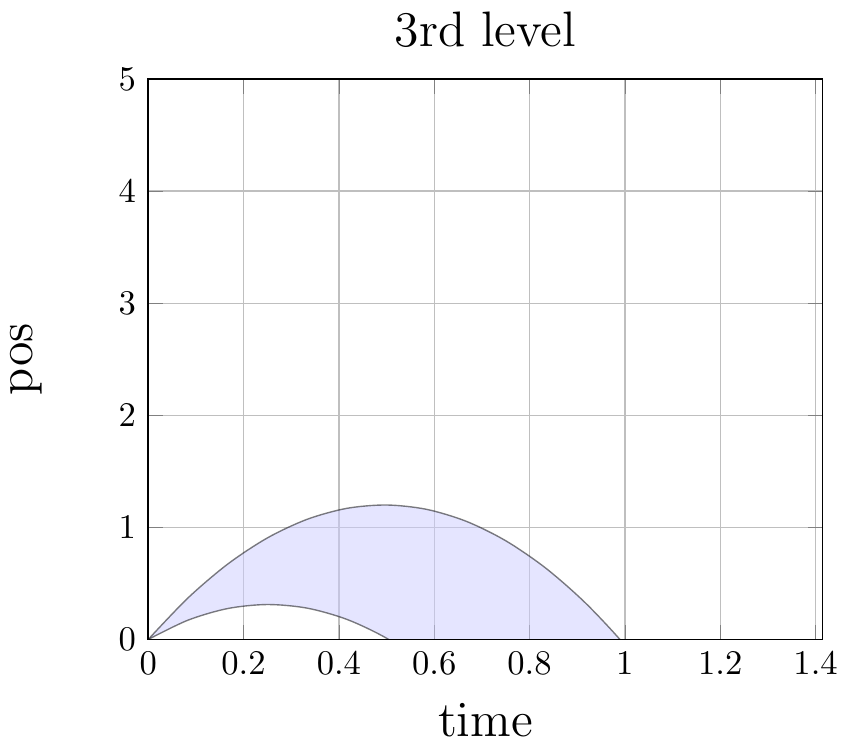}
  }

\end{multicols}

\noindent
The notion of \df{stability} \cite{Stauner_01} is another important
aspect in the development of hybrid systems. Roughly put, the term
`stability' refers to a system's stability in regard to its behaviour
against perturbations; the system is called stable if small changes in
its state (or input) only produce small changes in its behaviour ---
such a notion is directly related to that of distance between
behaviours, which was already studied in a coalgebraic setting
\cite{BaldanBKK14}.

In a $\SET$-based context it is difficult to reason about the
stability of a system, because its state space, which is assumed to be
just a set, lacks sufficient structure. In the topological setting,
however, the issue can be better handled.  To start with, observe that
topological spaces already carry a notion of proximity, given by the
open sets. Moreover, note that the notion of a stable system is
closely related to that of a continuous map, as discussed, for
example, in \cite{Stauner_01}. This relation can be precisely
described in a coalgebraic context: take a functor
$F : \TOP \to \TOP$, and assume that $\Coalg(F)$ has a final coalgebra
$(\nu_F, \omega_F)$. Then, for any $F$-coalgebra $(S,c)$ there is a
continuous map $\lana \; \_ \; \rana : S \to \nu_F$ such that for each
state $s \in S$, $\lana s \rana$ is the associated behaviour. Since
the map is continuous, `close' states must have `close' behaviours,
which coincides with our notion of system stability.  This suggests
the following coalgebraic definition of stability.

\begin{definition}
  \label{defn_bb}
  Let $F : \TOP \to \TOP$ be a functor that admits a final
  coalgebra. Then a (not necessarily continuous) map $c : X \to F X$
  is called stable if it is a member of $\Coalg(F)$. In other words,
  if it is a continuous map.
\end{definition}

\begin{examples}
  The bouncing balls $\pv{\nxt,\out} : S \to S \times \MH O$, and
  $\pv{\nxt,\out} : S \to \Vie S \times \MH O$ are continuous maps,
  and, consequently, stable systems. In this case calling either of
  the bouncing balls stable, is to say that a small change in their
  initial position and velocity does not drastically alter their
  (possible) trajectories over time.
\end{examples}

\noindent
Finally, note that the systems considered here jump between states
discretely, as opposed to their outputs which are, essentially,
evolutions in time of specific values. One possible way to accomodate
the evolution of states as well is to consider coalgebras in
$\Coalg(\MH)$.  We will use the results of the previous sections to
show that this category is also complete.

\begin{definition}
  Let $F : \catC \to \catC$ be a functor over a category $\catC$ with
  (co)products. We call $F$ \df{exponent polynomial} if it can be
  recursively defined from the grammar below, with letters $A$ and $B$
  denoting, respectively, an arbitrary object
  and an exponentiable
  object of $\catC$.
  \begin{flalign*}
    F ::= F + F \mid F \times F \mid A \mid \Id \mid F^B
  \end{flalign*}
\end{definition}

\noindent
Since all exponential functors $(\> \_ \>)^B : \catC \to \catC$ are
right adjoints, the following results come almost for free.

\begin{proposition}
  All exponent polynomial functors $F : \TOP \to \TOP$ preserve
  connected limits.
\end{proposition}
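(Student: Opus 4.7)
The plan is to mirror the proof of Theorem~\ref{theo_conn} and proceed by structural induction on the grammar defining exponent polynomial functors. The base cases $F = \Id$ and $F = A$ are immediate: the identity functor preserves all limits, and constant functors trivially preserve connected limits (since a connected diagram in $\TOP$ mapped to the constant $A$ has $A$ itself as its limit). The inductive steps for $F_1 + F_2$ and $F_1 \times F_2$ are already covered by Corollary~\ref{pol_conn_lim} and Remark~\ref{rem:pol_lim}, which apply verbatim.

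The only genuinely new case is the exponent $F^B$, which by definition is the composite
\[
  \TOP \xrightarrow{\,F\,} \TOP \xrightarrow{\,(-)^B\,} \TOP.
\]
Here I would exploit the hypothesis that $B$ is exponentiable in $\TOP$, which means precisely that the functor $(-) \times B : \TOP \to \TOP$ admits a right adjoint $(-)^B$. Right adjoints preserve all limits, in particular connected ones. Therefore if $F$ preserves connected limits by the induction hypothesis, the composite $(-)^B \comp F$ does as well, since the preservation of a class of limits is closed under composition of functors.

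The argument is essentially routine once the right adjointness of $(-)^B$ is invoked; the only subtlety worth mentioning is that exponentiability must hold in $\TOP$, which is not automatic (unlike in a cartesian closed category), so the assumption that $B$ is exponentiable is really needed to apply the adjoint functor argument. There is no real obstacle: the proof is a one-line observation per case, combining the inductive hypothesis with the fact that right adjoints preserve all limits.
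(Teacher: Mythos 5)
Your proposal is correct and matches the paper's (implicit) argument: the paper justifies this proposition solely by the remark that exponential functors $(\> \_ \>)^B$ are right adjoints, hence preserve all limits, with the remaining cases handled exactly as in Theorem~\ref{theo_conn} via Remark~\ref{rem:pol_lim} and Corollary~\ref{pol_conn_lim}. You have simply spelled out the structural induction that the paper leaves to the reader.
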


\begin{corollary}
  The categories of coalgebras of all exponent polynomial functors
  over $\TOP$ are complete.
\end{corollary}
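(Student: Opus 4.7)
The plan is to mimic the proof of Theorem~\ref{theo_comp} almost verbatim, replacing ``polynomial'' by ``exponent polynomial'' and relying on the preceding proposition to supply the crucial preservation property. Starting from the proposition, I would first note that every regular monomorphism in $\TOP$ is by definition an equaliser, and hence a connected limit; therefore every exponent polynomial $F:\TOP\to\TOP$ preserves regular monomorphisms. Since $\TOP$ is cocomplete, regularly wellpowered, and carries an (Epi, RegMono)-factorisation structure, Theorem~\ref{theo_hughes} then yields that $\Coalg(F)$ has equalisers.

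For the second half, I would observe that preservation of connected limits entails preservation of codirected limits, so by \cite[Theorem 2.1]{Bar93} (invoked exactly as in Theorem~\ref{theo_pres_omega}) the functor $F$ is a covarietor. Combining this with the completeness of $\TOP$ and the existence of equalisers just established, Theorem~\ref{theo_linton} delivers that $\Coalg(F)$ is complete.

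There is no real obstacle here beyond ensuring that the preceding proposition genuinely covers the new constructor $F^B$, and this is the only place where the argument departs from the polynomial case. The point is that for an exponentiable object $B$ the functor $(-)^B:\TOP\to\TOP$ is right adjoint to $(-)\times B$ and therefore preserves all limits, so it slots into the structural induction on $F$ alongside the cases already handled in Theorem~\ref{theo_conn} via Remark~\ref{rem:pol_lim} and Corollary~\ref{pol_conn_lim}. With the proposition in hand, the corollary is then just an assembly of Theorems~\ref{theo_hughes} and~\ref{theo_linton} as sketched above.
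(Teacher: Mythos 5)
Your proposal is correct and follows exactly the route the paper intends: the corollary is stated to come ``almost for free'' from the preceding proposition because $(-)^B$, being right adjoint to $(-)\times B$, preserves all limits and hence slots into the inductive argument of Theorem~\ref{theo_conn}, after which the chain Proposition~\ref{prop_regmono} $\to$ Theorem~\ref{theo_hughes} $\to$ \cite[Theorem 2.1]{Bar93} $\to$ Theorem~\ref{theo_linton} is applied verbatim as in Theorem~\ref{theo_comp}. No gaps.
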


\begin{theorem}
  The category of coalgebras $\Coalg(\MH)$ is complete.
\end{theorem}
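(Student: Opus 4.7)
The plan is to exhibit $\MH$ as a subfunctor of a suitable exponent polynomial functor and then to apply Corollary~\ref{cor_closed}, which already packages the transfer of completeness from $\Coalg(G)$ to $\Coalg(F)$ along a monomorphic natural transformation $\sigma:F\to G$.

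Concretely, I would set $G:=\Id^{\Rz}\times\Dur$, viewed as an endofunctor of $\TOP$. Since $\Rz$ is locally compact Hausdorff it is exponentiable in $\TOP$, and $\Dur$ is an object of $\TOP$, so $G$ fits the grammar for an exponent polynomial functor. The preceding corollary then gives that $\Coalg(G)$ is complete. Next, because $\MH X$ is defined as the subspace of $G X$ cut out by the equation $f\comp\curlywedge_d=f$, the subspace inclusions $\sigma_X:\MH X\hookrightarrow G X$ are regular monomorphisms. A one-line check (for $g:X\to Y$ continuous and $(f,d)\in\MH X$, one has $(g\comp f)\comp\curlywedge_d=g\comp f$) shows that $G g$ restricts to $\MH g$ on $\MH X$, so $\sigma:\MH\to G$ is a natural transformation whose components lie in the class $M$ of regular monomorphisms of $\TOP$.

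What remains is to verify Assumption~\ref{ass:for-I-beeing-adjoint} with $\catC=\TOP$ and $M=\text{RegMono}$, so that Corollary~\ref{cor_closed} applies. The category $\TOP$ is cocomplete, regularly wellpowered, and (Epi,RegMono)-structured, as already recalled. The components of $\sigma$ lie in $M$. The only substantive point is that $G$ sends regular monomorphisms to regular monomorphisms; this is essentially routine, since $(-)^{\Rz}$ is a right adjoint (by exponentiability of $\Rz$) and hence preserves all limits and in particular equalisers, while binary products and constant factors obviously preserve regular monomorphisms. This is the step I would expect to double-check most carefully, though no real obstacle arises. With the assumption in place, Corollary~\ref{cor_closed} transfers completeness from $\Coalg(G)$ to $\Coalg(\MH)$, giving the result.
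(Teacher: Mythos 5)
Your proposal follows exactly the paper's route: the paper likewise observes that $\Coalg\left((\,\_\,)^{\Rz}\times\Dur\right)$ is complete by the corollary on exponent polynomial functors, notes that $\MH$ is a subfunctor of $(\,\_\,)^{\Rz}\times\Dur$, and invokes Theorem~\ref{thm:adjoint_from_nat_transformation} (hence Corollary~\ref{cor_closed}) to transfer completeness. You merely spell out the verification of Assumption~\ref{ass:for-I-beeing-adjoint} (components of $\sigma$ are embeddings, $G$ preserves regular monomorphisms via right-adjointness of $(-)^{\Rz}$) that the paper leaves implicit; the argument is correct.
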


\begin{proof}
  The previous corollary assures that the category
  $\Coalg \left ((\> \_ \>)^\Rz \times \Dur \right )$ is complete.
  Then, observe that the functor $\MH : \TOP \to \TOP$ is a subfunctor
  of $(\> \_ \>)^\Rz \times \Dur : \TOP \to \TOP$, and apply Theorem
  \ref{thm:adjoint_from_nat_transformation}.
\end{proof}

The previous theorem takes advantage of the adjoint situation below.
\begin{flalign*}
  \Coalg \left ((\> \_ \> )^\Rz \times \Dur \right ) \adjunct{}{}
  \Coalg \left (\MH \right )
\end{flalign*}

Then, with the theorem below, and using the results of the previous
sections, we obtain a specific method to construct coreflections of
$((\> \_ \>)^\Rz \times \Dur)$-coalgebras.

\begin{theorem}
  The `inclusion' natural transformation
  $\iota : \MH \to (\> \_ \>)^\Rz \times \Dur$ is mono-taut.
\end{theorem}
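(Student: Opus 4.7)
My plan is to verify, for each monomorphism $m\colon X\to Y$ in $\TOP$ (equivalently, each injective continuous map), that the naturality square
\[
  \xymatrix{\MH X \ar[r]^{\MH m}\ar[d]_{\iota_X} & \MH Y \ar[d]^{\iota_Y}\\
    X^\Rz\times\Dur \ar[r]_{m^\Rz\times\id} & Y^\Rz\times\Dur}
\]
is a pullback in $\TOP$. Commutativity is automatic from the definition of $\MH m$ as the restriction of $m^\Rz\times\id$, so the entire content is the identification of the pullback of $\iota_Y$ along $m^\Rz\times\id$ with $\MH X$.

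The key observation is that $\iota_Y$ is a subspace embedding and that embeddings in $\TOP$ are stable under pullback. Consequently the pullback $P$ of $\iota_Y$ along $m^\Rz\times\id$ can be computed set-theoretically as the preimage
\[
  P=\left\{(f,d)\in X^\Rz\times\Dur\;\middle|\;(m\comp f,\,d)\in\MH Y\right\},
\]
equipped with the subspace topology inherited from $X^\Rz\times\Dur$. This reduces the claim to the purely set-level identification of $P$ with $\MH X$ as subsets of the same ambient product.

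Unfolding the defining condition of $\MH Y$, a pair $(f,d)$ belongs to $P$ exactly when $(m\comp f)\comp\curlywedge_d=m\comp f$. Since $m$ is injective, it may be cancelled on the left, yielding the equivalent equation $f\comp\curlywedge_d=f$, which is precisely the condition defining $\MH X$ inside $X^\Rz\times\Dur$. Thus $P=\MH X$, not just as sets but as subspaces of $X^\Rz\times\Dur$, which is exactly the pullback property we wanted. I do not anticipate any genuine obstacle: the argument rests on two routine ingredients, namely the stability of subspace embeddings under pullback in $\TOP$ and the left-cancellation property of monomorphisms applied to the defining equation of $\MH$.
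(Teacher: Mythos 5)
Your proof is correct and follows essentially the same route as the paper's: both arguments rest on the component $\iota_Y$ being a subspace embedding and on cancelling the injective map $m$ in the equation $(m\comp f)\comp\curlywedge_d = m\comp f$ to identify the relevant subset of $X^\Rz\times\Dur$ with $\MH X$. The only difference is packaging --- the paper verifies the universal property directly with test maps out of an arbitrary $Z$, whereas you invoke pullback-stability of embeddings to compute the pullback as a preimage with the subspace topology, which organises the same computation a little more cleanly.
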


\begin{proof}
  Consider a monomorhism $m : X \to Y$ in $\TOP$. We will show that
  the diagram below is a pullback square.
  \[
    \xymatrix{
      \MH X \ar[r]^{\MH m} \ar[d]_{\iota_X} & \MH Y \ar[d]^{\iota_Y} \\
      X^{\Rz} \times \Dur \ar[r]_{m^{\Rz} \times id} & Y^{\Rz} \times
      \Dur }
  \]

    \noindent
    Thus, take two morphisms $f : Z \to X^\Rz \times \Dur$,
    $g : Z \to \MH Y$, and assume that the equation below holds.
    \begin{flalign*}
      (m^\Rz \times id) \comp f = \iota_Y \comp g
    \end{flalign*}

    \noindent
    Let $z \in Z$ and put $(x,y)=f(z)$ and $(a,b)=g(z)$. Then, by the
    definition of $\MH$, $a = a \comp \curlywedge_b$ since
    $ \im g \subseteq \MH Y$. Using
    $(m^\Rz \times id) \comp f = \iota_Y \comp g$, one gets
    $ m \comp x = m \comp x \comp \curlywedge_y$; and from this, one
    obtains $x = x \comp \curlywedge_y$ since $m :X \to Y$ is a
    monomorphism.  This shows that the condition
    $\im f \subseteq \MH X$ holds.  Then, since the map
    $\iota_X : \MH X \to X^\Rz \times \Dur$ is an embedding, and
    $\im f \subseteq \im \iota_X$, there must be a unique arrow
    $h : Z \to \MH X$ such that $\iota_X \comp h = f$.  It remains to
    show that $g = \MH m \comp h$. This is a direct consequence of the
    diagram above being commutative, and the map
    $\iota_Y : \MH Y \to Y^\Rz \times \Dur$ mono.
  \end{proof}

\section{Conclusions and future work}
  \label{Sec_conc}

\noindent
Even if most coalgebraic literature takes $\SET$ as the base category,
state-based transition systems often call for a shift to other
categories, where mechanisms that suitably handle their intricacies
are available. Such was the case in \cite{prakash_2009,Doberkat:2009},
two research lines on the topic of stochastic systems, and in
\cite{Kupke:2004,Venema10,venema2014}, where the category of Stone
spaces and continuous maps plays a key role in setting an appropriate
coalgebraic semantics for finitary modal logics.

In our case the base category adopted was $\TOP$. As discussed in the
previous section, this was because the $\SET$-based context proved to
be insufficient for the design of (non-deterministic) hybrid systems,
namely in what concerns canonical representations of behaviour and
stability. The shift to the topological setting provided, almost for
free, a notion of stability (in the spirit of \cite{Stauner_01}), and
showed that a number of non-deterministic hybrid systems in $\TOP$
have an associated final coalgebra, even if in $\SET$ they do
not. Both results were achieved using this paper's theoretical
developments. But again, we stress that the latter can be applied to
other contexts as well.

The relevance of Vietoris coalgebras for different topics is further
witnessed by the common existence of important limits in their
categories of coalgebras. We saw that every compact Vietoris
polynomial functor admits a final coalgebra if it can be restricted to
the category $\HAUS$ while every lower Vietoris polynomial functor
admits a final coalgebra if it can be restricted to
$\STCOMP$. Moreover, we saw that several variants of such functors
also inherit these results and that all categories of Vietoris
coalgebras have equalisers.

However, several theoretical questions concerning limits in categories
of Vietoris coalgebras still remain open. For example, we studied
codirected limit preservation by Vietoris functors under different
topological contexts (see Section \ref{Sec_limits}), showing cases in
which they were preserved, and cases in which they were not. But we
are still not precisely sure what is the `weakest' context in which
they are preserved. Another example concerns the existence of products
in categories of Vietoris coalgebras. Recall also our study of
topological functors between categories of coalgebras. Among other
things, it provides a full characterisation of situations in which it
is possible to systematically lift well-known results about coalgebras
over $\SET$ to coalgebras over other categories. We saw that this is
indeed the case between coalgebras of polynomial functors over $\SET$
and their counterparts in $\TOP$, but we are also interested in other
situations. Two prime examples that we will explore in future work
pertain coalgebras over the category $\ORD$ and coalgebras over the
category $\PMET$. These coalgebras have significant relevance within
the coalgebraic community (\eg\ \cite{BaldanBKK14,BalanK11,BalanKV13})
and we believe that our study can contribute to the topic.

On a note closer to practice, the use of topologies to specify and
analyse (non-deterministic) hybrid systems brings a number of
benefits, which were just barely grasped in this paper. Our main goal
is to further explore them in the near future. The plan is to do so in
a coalgebraic component-based approach \cite{Barbosa03,HasuoJ11},
where simple hybrid systems can be composed to form more complex
ones. The results that this paper reports provide an interesting step
in this direction.

\section*{Acknowledgements}

We are grateful to Lawrence Moss for many fruitful discussions and for
his helpful references on the subject. This work is financed by the
ERDF -- European Regional Development Fund through the Operational
Programme for Competitiveness and Internationalisation -- COMPETE 2020
Programme and by National Funds through the Portuguese funding agency,
FCT -- Funda\c{c}\~{a}o para a Ci\^{e}ncia e a Tecnologia within
project POCI-01-0145-FEDER-016692. We also gratefully acknowledge
partial financial assistance by Portuguese funds through CIDMA (Center
for Research and Development in Mathematics and Applications), and the
Portuguese Foundation for Science and Technology (``FCT --
Funda\c{c}\~ao para a Ci\^encia e a Tecnologia''), within the project
UID/MAT/04106/2013. Finally, Renato Neves and Pedro Nora are also supported by
FCT grants SFRH/BD/52234/2013 and SFRH/BD/95757/2013, respectively.

\newcommand{\etalchar}[1]{$^{#1}$}


\end{document}